\definecolor{mygray}{gray}{0.6}
\newenvironment{myfont}[2][]{\csname#2\endcsname[#1]}{}
\newcommand{\stkout}[1]{\ifmmode\text{\sout{\ensuremath{#1}}}\else\sout{#1}\fi}
\newcommand{\bea}{\begin{eqnarray}}
\newcommand{\eea}{\end{eqnarray}}
\def\be{\begin{equation}}
\def\ee{\end{equation}}
\newcommand{\e}{\hspace{1pt}\mathrm{e}}
\newcommand{\imth}{\hspace{1pt}\mathrm{i}\hspace{1pt}}
\newcommand{\im}{\hspace{1pt}\mathrm{i}\hspace{1pt}}
\newcommand{\ii}{\hspace{1pt}\mathrm{i}\hspace{1pt}}
\def\RP{{\mathbb{RP}}}
\def\CP{{\mathbb{CP}}}
\definecolor{red}{rgb}{1,0,0}
\definecolor{blue}{rgb}{0,0,1}
\definecolor{dblue}{rgb}{0,0,0.4}
\definecolor{green}{rgb}{0,1,0}
\definecolor{black}{rgb}{0,0,0}
\definecolor{white}{rgb}{1,1,1}
\definecolor{brn}{rgb}{.8,.4,.0}
\definecolor{redo}{rgb}{1,.5,.0}
\definecolor{ddgrn}{rgb}{0,0.4,0}
\definecolor{dgrn}{rgb}{0,0.55,0}
\definecolor{dbl}{rgb}{0,0,0.5}
\newcommand{\Z}{\mathbb{Z}}
\newcommand{\C}{\mathbb{C}}
\newcommand{\R}{\mathbb{R}}
\renewcommand{\t}[1]{\tilde{#1}} 
\newcommand{\dd}{\hspace{1pt}\mathrm{d}}
\newcommand{\Ref}[1]{Ref.~[\onlinecite{#1}]}
\newcommand{\Eq}[1]{(\ref{#1})} 
\newcommand{\eq}[1]{eq.~(\ref{#1})} 
\newcommand{\eqn}[1]{eqn.~(\ref{#1})}
\newcommand{\Tr}{{\rm Tr}}
\newcommand{\up}{\uparrow} 
\newcommand{\down}{\downarrow}
\newcommand{\bpm}{\begin{pmatrix}}
\newcommand{\epm}{\end{pmatrix}}
\newcommand{\bmm}{\begin{matrix}}
\newcommand{\emm}{\end{matrix}}
\newcommand{\cC}{ {\cal C} }
\newcommand{\cP}{ {\cal P} } 
\newcommand{\cR}{ {\cal R} } 
\newcommand{\cT}{ {\cal T} }
\newcommand{\al}{\alpha}
\def\Z{{\mathbb{Z}}}
\def\R{{\mathbb{R}}}
\def\C{{\mathbb{C}}}
\def\Tr{{\mathrm{Tr}}}
\def \Hom{\operatorname{Hom}}
\def \Tor{\operatorname{Tor}}
\def \H{\operatorname{H}}
\def \Z{\mathbb{Z}}
\def \RP{\mathbb{RP}}
\def \CP{\mathbb{CP}}
\newcommand\hcup[1]{\underset{{\scriptscriptstyle #1}}{\cup}}
\newcommand {\emptycomment}[1]{}
\def\TP{\mathrm{TP}}
\def\Sq{\mathrm{Sq}}
\def\B{\mathrm{B}}
\newcommand{\SO}{{\rm SO}}
\newcommand{\Spin}{{\rm Spin}}
\newcommand{\U}{{\rm U}}
\newcommand{\SU}{{\rm SU}}
\newcommand{\PSU}{{\rm PSU}}
\newcommand{\Pin}{{\rm Pin}}
\newcommand{\tO}{{\rm O}}
\newcommand{\tA}{{\rm A}}
\newcommand{\tE}{{\rm E}}
\newcommand{\rN}{{\rm N}}
\def \N{\mathrm{N}}
\def\bZ{{\mathbf{Z}}}
\newcommand{\nn}{{\nonumber}}
\newcommand{\Sec}[1]{Sec.~\ref{#1}} 
\newcommand{\W}{{\rm W}}
\newcommand{\diag}{{\rm diag}}
\newcommand{\Fig}[1]{Fig.~\ref{#1}}
\newtheorem{theorem}{Theorem}[section]
\newtheorem{lemma}[theorem]{Lemma}
\begin{document}

\begin{titlepage}

\title{
New  Higher Anomalies,
SU(N) Yang-Mills Gauge
Theory and $\mathbb{CP}^{\mathrm{N}-1}$ Sigma Model
}

\author{Zheyan Wan {\includegraphics[height=3.85ex]{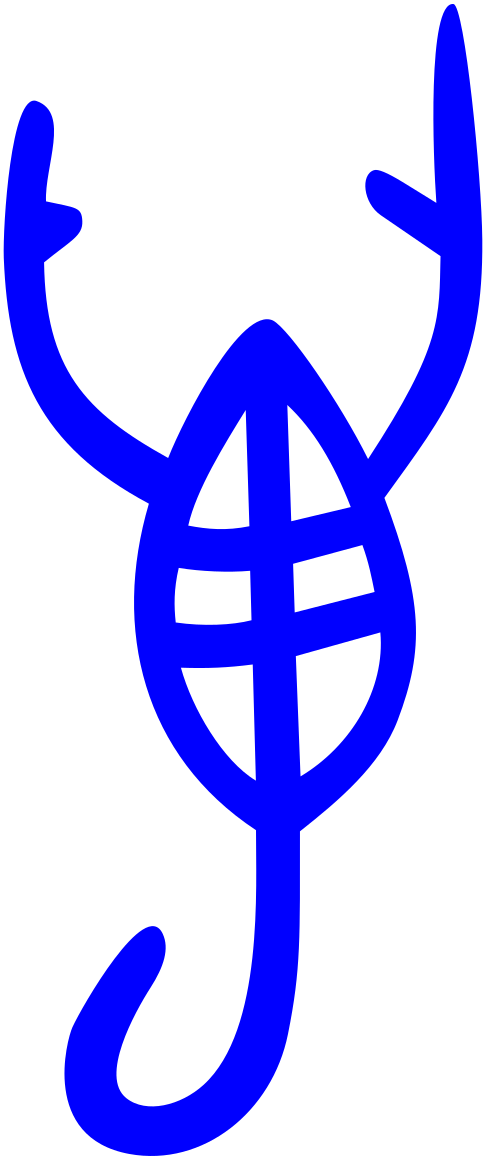}}
}
%\email{wanzy@mail.ustc.edu.cn}
\email{wanzheyan@mail.tsinghua.edu.cn}
\affiliation{Yau Mathematical Sciences Center, Tsinghua University, Beijing 100084, China}
\affiliation{School of Mathematical Sciences, USTC, Hefei 230026, China}

\author{Juven Wang {\includegraphics[height=3.35ex]{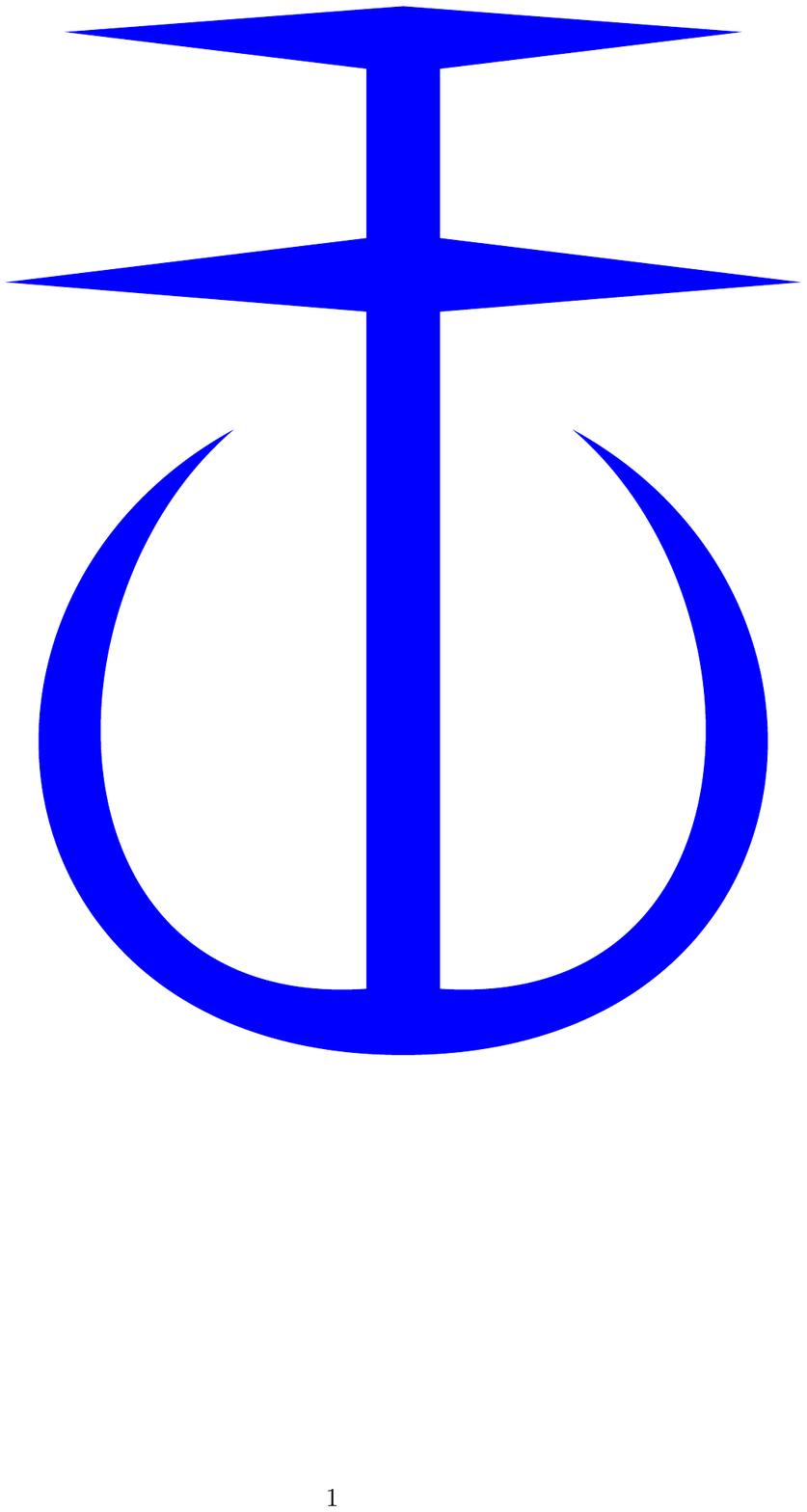}}
}
%\email{juven@ias.edu}
\email{jw@cmsa.fas.harvard.edu}
\affiliation{School of Natural Sciences, Institute for Advanced Study,  Einstein Drive, Princeton, NJ 08540, USA }
\affiliation{Center of Mathematical Sciences and Applications, Harvard University, MA 02138, USA}

\author{Yunqin Zheng {\includegraphics[height=3.55ex]{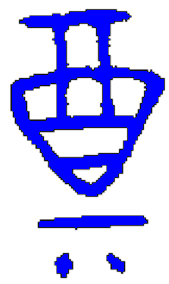}}
}
\email{yunqinz@princeton.edu}
\affiliation{Physics Department, Princeton University, Princeton, New Jersey 08544, USA}

%\date{\today,\currenttime}

\begin{abstract}

We hypothesize a new and more complete set 
of anomalies of certain
quantum field theories (QFTs) and then give an eclectic verification.  
First, 
 we propose a set of 't Hooft higher anomalies of 
4d time-reversal symmetric pure SU(N)-Yang-Mills (YM) gauge theory with a second-Chern-class topological term at $\theta=\pi$, 
via 5d cobordism invariants (higher symmetry-protected topological states), with N = $2, 3, 4$ and others.
Second,
we propose a set of 't Hooft anomalies of 
2d $\mathbb{CP}^{\mathrm{N}-1}$-sigma models with a first-Chern-class topological term at $\theta=\pi$, 
by enlisting all possible 
3d cobordism invariants and selecting the matched terms.
Based on algebraic/geometric topology, QFT analysis, manifold generator correspondence, condensed matter inputs 
such as stacking PSU(N)-generalized 
Haldane quantum spin chains,
and additional physics criteria, 
we derive a correspondence between 5d and 3d new invariants. Thus we broadly prove a potentially complete 
anomaly-matching
between 4d SU(N) YM and 2d $\mathbb{CP}^{\mathrm{N}-1}$ models at N = 2, and suggest new (but maybe incomplete) anomalies at N = 4.
We formulate a higher-symmetry analog of ``Lieb-Schultz-Mattis theorem''
to constrain the low-energy dynamics.

\end{abstract}

\pacs{}

\maketitle

\end{titlepage}

\tableofcontents

\section{Introduction and Summary}
Determining the dynamics and phase structures of strongly coupled quantum field theories (QFTs) is a challenging but important problem. 
For example, one of {the} Millennium Problems is partly on showing the existence of quantum Yang-Mills (YM) gauge theory \cite{PhysRev.96.191-YM} and the mass gap.
The fate of a pure YM theory with an SU(N) gauge group (i.e. we simply denote it as an SU(N)-YM), without additional matter fields,
without topological term ($\theta=0$), is confined and trivially gapped in an Euclidean spacetime $\R^4$ \cite{YMMP-Jaffe-Witten}.
A powerful tool to constrain the dynamics of QFTs is based on non-perturbative methods such as the 't Hooft anomaly-matching \cite{tHooft:1980xss}.
Although anomaly-matching may not uniquely determine the quantum dynamics, it can rule out some impossible quantum phases with mismatched anomalies, 
thus guiding us to focus only on favorable anomaly-matched phases for low energy phase structures of QFTs.
The importance of dynamics and anomalies is not merely for a formal QFT side, 
but also on a more practical application to high-energy ultraviolet (UV) completion of QFTs, such as on a lattice regularization or  
condensed matter systems. (See, for instance [\onlinecite{2017arXiv171111587GPW}] and references therein, 
a recent application of the anomalies, topological terms and dynamical constraints of SU(N)-YM gauge theories on
UV-regulated condensed matter systems, obtained from dynamically gauging the SU($\rN$)-symmetric \emph{interacting} generalized 
topological superconductors/insulators \cite{2010RMP_HasanKane,2011_RMP_Qi_Zhang},
or more generally Symmetry-Protected Topological state (SPTs) \cite{Chen2011pg1106.4772, Senthil1405.4015, Wen1610.03911}).

In this work, we attempt to identify the potentially complete 't Hooft anomalies of 4d pure SU($\rN$)-YM gauge theory
with a ${\theta=\pi}$ topological term (a second-Chern-class topological term)
and 2d $\mathbb{CP}^{\rN-1}$-sigma model with a ${\theta=\pi}$ topological term (a first-Chern-class topological term) 
in an Euclidean spacetime. Here $d$d denotes a $d$-dimensional spacetime.
For the convenience of readers, our main result is summarized in \Fig{Fig-reduce-1} and \Fig{Fig-reduce-2}.  
By completing 't Hooft anomalies of QFTs, we need to first identify the relevant (if not all of) global symmetry $G$ of QFTs.
Then we couple the QFTs to classical background-symmetric gauge field of $G$, and try to detect the possible obstructions of such
coupling \cite{tHooft:1980xss}. Such obstructions, known as the obstruction of gauging the global symmetry, are named 
{`` 't Hooft anomalies.''}
In the literature, when people refer to {``anomalies,'' }however, they can means different things. 
To fix our terminology, we refer {``anomalies''} to one of the followings:
\begin{enumerate}
\item Classical global symmetry is violated at the quantum theory, such that the 
classical global symmetry fails to be a quantum global symmetry, e.g. the original Adler-Bell-Jackiw anomaly \cite{Adler1969gkABJ,Bell1969tsABJ}.

\item Quantum global symmetry is well-defined and preserved. 
(Global symmetry is sensible, not only at a classical theory [if there is any classical description], but also for a quantum theory.) 
However, there is an obstruction to gauge the global symmetry.
Specifically, we can detect a certain obstruction to even \emph{weakly gauge}
the symmetry or couple the symmetry to a \emph{non-dynamical background probed gauge field}.
(We will refer this as a background field, abbreviated as {``bgd.field.''})
%\footnote{
%We will refer this %kind of field simply 
%as a background 
%(non-dynamical gauge) 
%field, abbreviated as ``bgd.field.''}
This is known as `{`'t Hooft anomaly,''} or sometimes
regarded as a ``weakly gauged anomaly'' in condensed matter.
Namely, the partition function $\bZ$ does not sum over background gauge connections, but only fix a background gauge connection
and only depend on the background gauge connection as a classical field (as a classical coupling constant).
%\cred{ [do not sum over gauge connection: fix a gauge connection]}

\item Quantum global symmetry is well-defined and preserved. 
However, once we promote the global symmetry to a gauge symmetry of the dynamical gauge theory,
then the gauge theory becomes ill-defined. Some people call this as 
a ``dynamical gauge anomaly'' which makes a quantum theory ill-defined.
Namely, the partition function $\bZ$ after summing over dynamical gauge connections becomes ill-defined.
%\cred{[sum over gauge connection]. Related to 2 by dynamical gauging.}

\end{enumerate}

Now {``'t Hooft anomalies''}  (for simplicity, from now on, we may abbreviate them as {``anomalies''})
 have at least three intertwined interpretations:\\
 
%\cgreen{I will not highlight the quotation marks. Quotation marks should not be used for emphasizing something.  }

\noindent
Interpretation (1): In condensed matter physics, ``'t Hooft anomalies''  are known as the obstruction to lattice-regularize the global symmetry's quantum operator in a local on-site manner at UV due to symmetry-twists. 
(See [\onlinecite{Wen2013oza1303.1803, 1405.7689, Wang2017locWWW1705.06728}] for QFT-oriented discussion and references therein.)
This ``non-onsite symmetry'' viewpoint is generically applicable to both, \emph{perturbative} anomalies, and  \emph{non-perturbative} anomalies:\\
$\bullet$ \emph{perturbative} anomalies --- Computable from perturbative Feynman diagram calculations.\\ 
$\bullet$ \emph{non-perturbative or global} anomalies --- 
Examples of global anomalies include the old and the new SU(2) anomalies \cite{Witten:1982fp, Wang:2018qoyWWW} 
(a caveat: here we mean their 't Hooft anomaly analogs if we view the SU(2) gauge field as a non-dynamical classical background, instead of dynamical field)
and the global gravitational anomalies \cite{Witten1985xe}.

%\newpage

The occurrence of these anomalies are 
sensitive to the underlying UV-completion not only of fermionic systems, but also of bosonic systems \cite{WangSantosWen1403.5256, 1405.7689, Kapustin1404.3230, JWangthesis}.
We call the anomalies of QFT whose UV-completion requires only the bosonic degrees of freedom as bosonic anomalies \cite{WangSantosWen1403.5256};
while those must require fermionic degrees of freedom as fermionic anomalies.\\

\noindent
Interpretation (2): In QFTs, the obstruction is on the impossibility of adding any counter term 
in its own dimension ($d$-d) in order to absorb a one-higher-dimensional counter term (e.g. $(d+1)$d topological term) due to background $G$-field \cite{Kapustin:2014gua}.
This is named the ``anomaly-inflow \cite{Callan:1984sa}.''
The  $(d+1)$d topological term is known as the $(d+1)$d SPTs in condensed matter physics \cite{Chen2011pg1106.4772, Senthil1405.4015}.\\

\noindent
Interpretation (3): In math, the $d$d anomalies can be systematically captured by $(d+1)$d topological invariants \cite{Witten:1982fp} known as 
cobordism invariants \cite{DaiFreed1994kq9405012,Kapustin2014tfa1403.1467,Kapustin1406.7329,Freed2016}.\\

There is a long history of relating these two particular 4d SU($\rN$)-YM and 2d $\mathbb{CP}^{\rN-1}$ theories, 
since the work of Atiyah \cite{atiyah1984}, Donaldson \cite{Donaldson1984tm} and others, in the interplay of QFTs in physics and mathematics.
Recently three key progresses shed new lights on their relations further:\\

\noindent
($i$) Higher symmetries and higher anomalies:
The familiar 0-form global symmetry has a \emph{charged} object of 0d measured by the \emph{charge} operator of $(d-1)$d.
The generalized $q$-form global symmetry, introduced by [\onlinecite{Gaiotto:2014kfa}], demands a \emph{charged} object of $q$d measured by the \emph{charge} operator of $(d-q-1)$d (i.e. codimension-$(q+1)$). This concept turns out to be powerful to
detect new anomalies, e.g. the pure SU($\rN$)-YM at $\theta=\pi$ (See \eq{eq:YM-pi})
has a mixed anomaly between 0-form time-reversal symmetry $\Z_2^T$
and 1-form center symmetry $\Z_{\rN,[1]}$ at an even integer $\rN$, firstly discovered in a remarkable work [\onlinecite{Gaiotto2017yupZoharTTT}].
We review this result in Sec.~\ref{sec:rev}, then we will introduce new anomalies (to our best understanding, these have not yet been identified in the previous literature)
in later sections (\Fig{Fig-reduce-1} and \Fig{Fig-reduce-2}).\\

\noindent
($ii$) Relate (higher)-SPTs to (higher)-topological invariants: 
Follow the condensed matter literature, based on the earlier discussion on the \emph{symmetry twist},
it has been recognized that the classical background-field partition function under the {symmetry twist}, called $\bZ_{{\text{sym.twist}}}$
in $(d+1)$d
can be regarded as the partition function of  $(d+1)$d SPTs $\bZ_{{\text{SPTs}}}$.
 These descriptions are applicable to both low-energy infrared (IR) field theory, but also to the UV-regulated SPTs on a lattice,
see [\onlinecite{Wen2013oza1303.1803, 1405.7689, Kapustin2014tfa1403.1467}]
and Refs.~therein for a systematic set-up. Schematically, we follow the framework of \cite{1405.7689}, 
\begin{multline}
\label{eq:Zpart}
\bZ^{\text{$(d+1)$d}}_{{\text{sym.twist}}}=\bZ^{\text{$(d+1)$d}}_{{\text{SPTs}}}=\bZ^{\text{$(d+1)$d}}_{{\text{topo.inv}}}=\bZ^{\text{$(d+1)$d}}_{{\text{Cobordism.inv}}}  \\
\longleftrightarrow \text{$d$d-(higher) 't Hooft anomaly}.\;\;\;\;\;
\end{multline}
 In general,
the partition function $\bZ_{{\text{sym.twist}}}=\bZ_{{\text{SPTs}}}[A_1,B_2, w_i, \dots]$
is a functional containing background gauge fields of 1-form $A_1$, 2-form $B_2$ or higher forms;
and can contain characteristic classes \cite{milnor1974characteristic} such as the $i$-th Stiefel-Whitney class ($w_i$)
and other geometric probes such as gravitational background fields, e.g. a gravitational Chern-Simons 3-form CS$_3(\Gamma)$ involving the Levi-Civita connection or
the spin connection $\Gamma$. 
For convention, we use the capital letters ($A,B,...$) to denote \emph{non-dynamical background gauge} fields (which, however, later they may or may not be dynamically gauged),  while 
the little letters ($a,b,...$) to denote \emph{dynamical gauge} fields.

More generally, \\
$\bullet$ For the ordinary 0-form symmetry, we can couple the charged 0d point operator to 1-form background gauge field (so the symmetry-twist occurs
in the Poincar\'e dual codimension-1 sub-spacetime [$d$d] of SPTs).\\
$\bullet$ For the 1-form symmetry, we can couple the charged 1d line operator to 2-form background gauge field (so the symmetry-twist occurs in the Poincar\'e dual codimension-2 sub-spacetime [$(d-1)$d] of SPTs).\\
$\bullet$ For the $q$-form symmetry, we can couple the \emph{charged} $q$d extended operator to $(q+1)$-form background gauge field. 
The  \emph{charged} $q$d extended operator can be measured by another \emph{charge} operator of codimension-$(q+1)$ [i.e. $(d-q)$d].
So the symmetry-twist can be interpreted as the occurrence of the codimension-$(q+1)$ \emph{charge} operator.
Namely, the symmetry-twist happens at a Poincar\'e dual codimension-$(q+1)$ sub-spacetime [$(d-q)$d] of SPTs. 
We can view the measurement of a \emph{charged} $q$d extended object, happening at any $q$-dimensional intersection 
between the $(q+1)$d form background gauge field
and the codimension-$(q+1)$ symmetry-twist or \emph{charge} operator of this SPT vacua.\\

For SPTs protected by higher symmetries (for generic $q$, especially for any SPTs with at least a symmetry of $q>0$), we refer them as {higher-SPTs}. 
So our principle above is applicable to higher-SPTs \cite{Thorngren:2015gtw1511.02929, Delcamp2018wlb1802.10104, W2}.
In the following of this article, thanks to \eq{eq:Zpart}, we can interchange the usages and interpretations of
``higher SPTs $\bZ_{{\text{SPTs}}}$,'' ``higher topological terms due to symmetry-twist $\bZ^{\text{$(d+1)$d}}_{{\text{sym.twist}}}$,'' ``higher topological invariants $\bZ^{\text{$(d+1)$d}}_{{\text{topo.inv}}}$'' or ``cobordism invariants 
$\bZ^{\text{$(d+1)$d}}_{{\text{Cobordism.inv}}} $'' in $(d+1)$d.
They are all physically equivalent, and can uniquely determine a $d$d higher anomaly, when we study the anomaly of any
boundary theory of the $(d+1)$d higher SPTs living on a manifold with $d$d boundary.
Thus, we regard all of them as physically tightly-related given by \eq{eq:Zpart}.
In short, by turning on the classical background probed field (denoted as ``{bgd.field}'' in \eq{eq:QFT-SPT}) coupled to $d$d QFT, under the symmetry transformation (i.e. symmetry twist),
its partition function $ \bZ^{\text{$d$d}}_{{\text{QFT}}}$
can be \emph{shifted} 
\begin{multline} \label{eq:QFT-SPT}
\left.
 \bZ^{\text{$d$d}}_{{\text{QFT}}}   \right|_{\text{bgd.field}=0} \\
 \longrightarrow%\to  
 \left. \bZ^{\text{$d$d}}_{{\text{QFT}}}
   \right|_{\text{bgd.field}\neq 0} \cdot
   \bZ^{\text{$(d+1)$d}}_{{\text{SPTs}}}(\text{bgd.field}) , 
\end{multline}
to detect the underlying {$(d+1)$d} topological terms/counter term/SPTs, namely the $(d+1)$d partition function  $\bZ^{\text{$(d+1)$d}}_{{\text{SPTs}}}$.
To check whether the underlying {$(d+1)$d} SPTs really specifies a true $d$d 't Hooft anomaly unremovable from $d$d \emph{counter term},
it means that $\bZ^{\text{$(d+1)$d}}_{{\text{SPTs}}}(\text{bgd.field})$ cannot be absorbed by a lower-dimensional SPTs 
$\bZ^{\text{$d$d}}_{{\text{SPTs}}}(\text{bgd.field})$, namely 
\begin{multline} 
 \left. \bZ^{\text{$d$d}}_{{\text{QFT}}}
   \right|_{\text{bgd.field}} \cdot
   \bZ^{\text{$(d+1)$d}}_{{\text{SPTs}}}(\text{bgd.field})\\
\neq % \centernot\longrightarrow
  \left. \bZ^{\text{$d$d}}_{{\text{QFT}}}
   \right|_{\text{bgd.field}} \cdot
      \bZ^{\text{$d$d}}_{{\text{SPTs}}}(\text{bgd.field}).
\end{multline}

\noindent
($iii$) Dimensional reduction:
A very recent progress shows that a certain
anomaly of 4d SU($\rN$)-YM theory can be matched with another anomaly of 2d $\mathbb{CP}^{\rN-1}$ model
under a 2-torus $T^2$ reduction in [\onlinecite{Yamazaki:2017dra}],
built upon previous investigations \cite{Yamazaki:2017ulc,Tanizaki:2017qhf1710.08923}.
This development, together with the mathematical rigorous constraint from 4d and 2d instantons \cite{atiyah1984, Donaldson1984tm},
provides the evidence that the complete set of (higher) anomalies 
of 4d YM should be fully matched with 2d $\mathbb{CP}^{\rN-1}$ model
under a $T^2$  reduction.\footnote{ \label{footnote:FS}
%\cred
{The complex projective space $\CP^{\rN-1}$
is obtained from the moduli space of flat connections of SU(N) YM theory. (See \cite{Yamazaki:2017ulc} and Fig.~\ref{fig:YM-reduce}.) 
This moduli space of flat connections do not have
a canonical Fubini-Study metric and may have singularities. However, this subtle issue, between
the $\CP^{\rN-1}$ target  and the moduli space of flat connections, only affects the geometry issue, and
should not affect the topological issue concerning 
non-perturbative global discrete anomalies that we focus on in this work.}
}
\\

In this work, we draw a wide range of knowledges, tools, comprehensions, and intuitions from:\\
$\bullet$ Condensed matter physics and lattice regularizations. Simplicial-complex regularized triangulable manifolds and smooth manifolds. This approach is related to our earlier Interpretation (1),
and the progress (ii). \\
$\bullet$ QFT (continuum) methods: Path integral, higher symmetries associated with extended operators, etc.
This is related to our earlier Interpretation (2), and the progress (i), (ii) and (iii). \\
$\bullet$ Mathematics:  Algebraic topology methods include cobordism, cohomology and group cohomology theory. 
Geometric topology methods include the embedding of manifolds, and Poincar\'e duality, etc.
This is related to our earlier Interpretation (3), and the progress (ii) and (iii).\\
Built upon previous results, 
we are able to derive a consistent story, which identifies, previously missing, thus, new higher anomalies in YM theory and in $\mathbb{CP}^{\rN-1}$ model.
A sublimed version of our result may count  as an eclectic proof between the anomaly-matching between two theories under a 2-torus $T^2$ reduction from the 4d theory reduced to a 2d theory.

%\cpurple
{Earlier we stated that our aim is to provide  potentially complete 't Hooft anomalies of 4d pure SU($\rN$)-YM gauge theory with a $\theta=\pi$ topological term
and 2d $\mathbb{CP}^{\rN-1}$-sigma model with a $\theta=\pi$ topological term. 
It turns out that our recent work \Ref{Wan2019oyr1904.00994}}
suggests there are indeed \emph{different} versions of 4d pure SU($\rN$)-YM gauge theory with a $\theta=\pi$ topological term.
What happened is that \Ref{Wan2019oyr1904.00994} founds the 
\emph{different} versions of YM theories
can be characterized at least partially by the
\emph{different}  quantum numbers associated with the extended operator (Wilson line) of SU(N) YM.  
In simple words, 
Wilson line of YM can have:
\begin{enumerate}
\item time-reversal $\cT$ quantum number, say labeled by $K_1 \in \{0,1\}= \Z_2$ \cite{Wan2019oyr1904.00994}, under
$\cT$-symmetry tranformation, as:
\begin{itemize}
\item Kramers singlet ($\cT^2=+1$)
\item Kramers doublet ($\cT^2=-1$)
\end{itemize}
\item spin-statistics quantum number, say labeled by $K_2 \in \{0,1\}= \Z_2$  \cite{Wan2019oyr1904.00994}, as:
\begin{itemize}
\item bosonic (integer spin-statistics)
\item fermionic (half-integer spin-statistics).
\end{itemize}
\end{enumerate}
More physically intuitively, 
imagine in the {ultraviolet} lattice cut-off energy scale, 
the closed Wilson line can be opened up as an
open Wilson line with two open ends.
Such that each open end can host very highly-energetically massive
0D particle. This 0D particle can be
Kramers singlet $\cT^2=+1$ or 
Kramers doublet $\cT^2=-1$
under time-reversal. 
Under self-spinning by $2\pi$,
this 0D particle can also be
{bosonic (getting a +1 sign) or
fermionic (getting a $-1$ sign).}
\Ref{Wan2019oyr1904.00994} focuses on SU(2) YM and gives mathematical interpretations of the $K_1, K_2$ term,
based on the gauge bundle constraint,
\be
  \hspace*{-.8cm} 
w_2(V_{\PSU(2)})=w_2(V_{\SO(3)})=B+K_1w_1(TM^5)^2+K_2w_2(TM^5).
\ee
Thus $K_1$ and $K_2$ are the choices of the gauge bundle constraint, with $K_1, K_2 \in\Z_2$.
The $w_j(TM)$ is $j$-th Stiefel-Whitney (SW) classes of tangent bundle $TM$.
 \Ref{Wan2019oyr1904.00994} shows that putting \emph{different} siblings of 
 4d YM on unorientable manifolds and turning on background $B$ fields,
 give us the access to \emph{different} versions of 't Hooft anomalies.
 \Ref{Wan2019oyr1904.00994} suggests the Wilson line quantum numbers 
 are related to the $(K_1, K_2)$ via:
 \bea
 \left\{\begin{array}{cc}
(K_1, K_2)=(0,0), & \text{Kramers singlet bosonic,}\\
(K_1, K_2)=(1,0), & \text{Kramers doublet bosonic,}\\
(K_1, K_2)=(0,1), & \text{Kramers doublet fermionic,}\\
(K_1, K_2)=(1,1), & \text{Kramers singlet fermionic.}
\end{array}\right.
 \eea  

In this article, we \emph{do not} use the approach of {\Ref{Wan2019oyr1904.00994} }. Instead,
we like to relate \emph{different} {versions} (four siblings of \Ref{Wan2019oyr1904.00994}) of 4d YM simply based on possible 4d 't Hooft anomalies
(5d topological terms) satisfying physical constraints (given in \Sec{sec:rule}). 
Amusingly, we can find out at least two versions of YM with two different 't Hooft anomalies.
We also relate \emph{different} versions of 4d SU(N) YM
to  \emph{different} versions of 2d $\mathbb{CP}^{\rN-1}$-sigma model with a $\theta=\pi$ topological term,
via a 2-torus dimensional reduction. 
{We also consider a slight generalization of the above gauge-bundle constraint when $\rN >2$, e.g. see \Sec{N=4reduction}.
The details of a further generalized gauge-bundle constraint
including the \emph{charge conjugation} $C$ quantum number for \emph{different} siblings of YM theories is reported in an upcoming future work \cite{WWZ2019-2}.
}

%\cred
{The outline of our article goes as follows.} %The plan of this article is the following.

In Sec.~\ref{sec:remark}, we comment and review on QFTs (relevant to YM theory and $\mathbb{CP}^{\rN-1}$ model), 
their global symmetries, anomalies and topological invariants.
This section can serve as an invitation for condensed matter colleagues, while we also review the relevant new concepts and notations to
high energy/QFT theorists  and mathematicians. 

In Sec.~\ref{sec:cobo-top}, we provide the concrete explicit results on the {cobordisms, SPTs/topological terms, and manifold generators.
This is relevant to our classification of all possible higher 't Hooft anomalies.} Also it is relevant to our later eclectic proof on the anomalies of
YM theory and $\mathbb{CP}^{\rN-1}$ model.

In Sec.~\ref{sec:rev}, we review the known anomalies in 4d YM theory and 2d $\mathbb{CP}^{\rN-1}$ model, and explain their physical meanings, or re-derive them,
in terms of mathematically precise cobordism invariants.

In \Sec{sec:rule}, \Sec{sec:5d3d} and \Fig{fig:YM-reduce}, we should cautiously remark that how 4d SU($\rN$)-YM theory is related to 2d $\mathbb{CP}^{\rN-1}$ model.

In \Sec{sec:rule}, in particular, we give our rules to constrain the anomalies for 4d YM theory and 2d $\mathbb{CP}^{\rN-1}$ model,
and for 5d and 3d invariants.

In \Sec{sec:5d3d}, we present mathematical formulations of dimensional reduction, from
5d to 3d of cobordism/SPTs/topological term, and from 4d to 2d of anomaly reduction.

In \Sec{sec:newSUn}, we present new higher anomalies for 4d SU(N) YM theory.

In \Sec{sec:newCPn}, we present new anomalies for 2d $\mathbb{CP}^{\rN-1}$ model.

In \Sec{sec:sTQFT}, with the list of potentially complete 't Hooft anomalies of the above 
4d SU($\rN$)-YM and 2d $\CP^{\rN-1}$-model at ${\theta=\pi}$,
we constrain their low-energy dynamics further, based on the anomaly-matching.
We discuss the higher-symmetry analog Lieb-Schultz-Mattis theorem.
In particular, we check whether the  't Hooft anomalies of the above 
4d SU($\rN$)-YM and 2d $\CP^{\rN-1}$-model can be saturated by
a symmetry-extended TQFT of their own dimensions, 
by the (higher-)symmetry-extension method generalized from the method of Ref.~\cite{Wang2017locWWW1705.06728}. (See also our companion work \cite{Wan2018djlW2.1812.11955})
We also discuss their dynamical fates which become spontaneously symmetry-breaking (SSB) phases.

In \Sec{sec:tables}, we summarize our main results of
a more complete set of 't Hooft anomalies of the 4d SU($\rN$)-YM and 2d $\CP^{\rN-1}$-model and their dimensional reduction
in \Sec{sec:tables-N=2} for N = 2 and \Sec{sec:tables-N=4} for N = 4. 
%We summarize the philosophy and methods behind our derivation in a simpler manner in  \Sec{sec:tables-method}.

We conclude in \Sec{sec:con}.

% 2{sec:remark} 
% 3 {sec:cobo-top} 
% 4 {sec:rev} 
% 5 {sec:rule}
% 6 {sec:newCPn}
% 7 {sec:5d3d}
% 8 {sec:newSUn}
% 9 {sec:sTQFT}
% 10 {sec:con}

\section{Comments on QFTs: Global Symmetries and Topological Invariants}
\label{sec:remark}

%\newpage

\subsection{4d Yang-Mills Gauge Theory}  

{Now we consider a 4d pure SU($\rN$)-Yang-Mills gauge theory with $\theta$-term, with a positive integer $\rN \geq 2$, for a Euclidean partition function (such as an $\R^4$ spacetime)}
The path integral (or partition function) $\bZ^{\text{$4$d}}_{{\text{YM}}}$ is formally written as
\begin{widetext}
%\be
\begin{multline}
 \label{eq:YM-pi}
\bZ^{\text{$4$d}}_{{\text{YM}}}
\equiv\int [{\cal D} {a}] \exp\big( - S_{\text{YM}+\theta}[a] \big)\equiv\\
\int [{\cal D} {a}] \exp\big( - S_{\text{YM}}[a] \big)\exp\big( - S_{\theta}[a] \big)
\equiv
\int [{\cal D} {a}] \exp\big(   \Big(- \int\limits_{M^4} (\frac{1}{g^2}\text{Tr}\,F_a\wedge \star F_a)
+  \int\limits_{M^4} (\frac{\ii  \theta}{8 \pi^2}  \text{Tr}\,F_a\wedge F_a) \Big)\big).
\end{multline}
%\ee
\end{widetext}
%
%Here\\
$\bullet$ $a$ is the 1-form SU($\rN$)-gauge field connection obtained from parallel transporting the principal-SU($\rN$) bundle over the spacetime manifold $M^4$.
The $a = a_\mu \dd x^\mu= a_\mu^\alpha T^\alpha \dd x^\mu$;
here $T^\alpha$ is the generator of Lie algebra {\bf{g}} for the gauge group (SU($\rN$)), {with the commutator $[T^\alpha,T^\beta]=\ii f^{\alpha \beta \gamma} T^\gamma$, 
where $f^{\alpha \beta \gamma}$ is a fully anti-symmetric structure constant.}
Locally $\dd x^\mu$ is a differential 1-form, the $\mu$ runs through the indices of coordinate of $M^4$.
Then $a_\mu=a_\mu^\alpha T^\alpha$ is the Lie algebra valued gauge field, which is in the adjoint representation of the Lie algebra.
(In physics, $a_\mu$ is the gluon vector field for quantum chromodynamics.)
The $[{\cal D} {a}]$ is the path integral measure, for a certain configuration of the gauge field $a$.
 All allowed gauge inequivalent configurations
 are integrated over within the path integral measures $\int [{\cal D} {a}]$, where gauge redundancy is removed or mod out.
The integration is under a weight factor $\exp\big( \ii S_{\text{YM}+\theta}[a] \big)$.\\
$\bullet$ The $F_a= \dd a-\ii a \wedge a $ is the $\SU(\rN)$ field strength, while $\dd$ is the exterior derivative and $\wedge$ is the wedge product; 
the $\star F_a$ is $F_a$'s Hodge dual. 
The $g$ is YM coupling constant. 
 \\
$\bullet$ The $\text{Tr}\,(F_a\wedge \star F_a)$ is the Yang-Mills Lagrangian \cite{PhysRev.96.191-YM} (a non-abelian generalization of Maxwell Lagrangian of U(1) gauge theory).
The Tr denotes the trace as an invariant quadratic form of the Lie algebra of gauge group (here SU($\rN$)).
Note that $\Tr[F_a]=\Tr[ \dd a-\ii a \wedge a]=0$ is traceless for a $\SU(\rN)$ field strength.
Under the variational principle, YM theory's classical equation of motion (EOM), in contrast to the linearity of U(1) Maxwell theory, is non-linear.\\
$\bullet$  The $(\frac{  \theta}{8 \pi^2}  \text{Tr}\,F_a\wedge F_a)$ term is named the $\theta$-topological term, which does not contribute to the classical EOM.\\
$\bullet$  This path integral is physically sensibly well-defined, but not precisely mathematically well-defined, because the gauge field can be freely chosen due to the gauge freedom. 
This problem occurs already for quantum U(1) Maxwell theory, but now becomes more troublesome due to the YM's non-abelian gauge group.
One way to deal with the path integral and the quantization is the method by Faddeev-Popov \cite{Faddeev1967fc} and De Witt \cite{DeWitt1967ub}.
However,
in this work, we actually do not need to worry about of the subtlety of the gauging fixing and the details of the running coupling $g$ for the full \emph{quantum} theory part of this 
path integral.
The reason is that we only aim to capture the 5d \emph{classical} background field partition function
$\bZ^{\text{$(d+1)$d}}_{{\text{sym.twist}}}=\bZ^{\text{$(d+1)$d}}_{{\text{SPTs}}}$ in \eq{eq:Zpart} 
that 4d YM theory must couple with in order to match the 't Hooft anomaly.
Schematically, by coupling YM to background field, under the symmetry transformation, we expect that
\be \label{eq:SPT-YM}
\left.
 \bZ^{\text{$4$d}}_{{\text{YM}}}   \right|_{\text{bgd.field}=0} \to \bZ^{\text{$5$d}}_{{\text{SPTs}}}(\text{bgd.field}) \cdot 
 \left. \bZ^{\text{$4$d}}_{{\text{YM}}}
   \right|_{\text{bgd.field}\neq 0}. %\nonumber\\
\ee
For example, when {the} bgd.field is $B$,
\bea
 \bZ^{\text{$4$d}}_{{\text{YM}}} (B=0) \to \bZ^{\text{$5$d}}_{{\text{SPTs}}}(B \neq 0)  \cdot 
  \bZ^{\text{$4$d}}_{{\text{YM}}}(B \neq 0).   
\eea
Our goal will be identifying the
5d topological term (5d SPTs) \eq{eq:SPT-YM} under coupling to background fields.
We will focus on the Euclidean path integral of \eq{eq:YM-pi}.

\subsection{{SU($\rN$)-YM theory: Mixed higher-anomalies}}  %of time-reversal $\Z_2^T$ and 1-form center $\Z_\rN$-symmetry}
\label{sec:SUN-YM-mix-higher}

Below we warm up by re-deriving the result on the {mixed higher-anomaly of time-reversal $\Z_2^T$ and 1-form center $\Z_\rN$-symmetry of SU($\rN$)-YM theory},
firstly obtained in [\onlinecite{Gaiotto2017yupZoharTTT}], from scratch. 
Our derivation will be as self-contained as possible, meanwhile we introduce useful notations.

\subsubsection{Global symmetry and preliminary}
For 4d SU($\rN$)-Yang-Mills (YM) theory at $\theta=0$ and $\pi$ mod $2 \pi$, on an Euclidean $\R^4$ spacetime,  
we can identify its global symmetries: the 0-form time-reversal $\Z_2^T$ symmetry with time reversal $\cT$ (see more details in \Sec{sec:TR}), 
and 0-form charge conjugation $\Z_2^C$ with symmetry transformation $\cC$ (see more details in \Sec{sec:C-P-R}). 
Since the parity $\cP$ is guaranteed to be a symmetry due to the $\cC \cP \cT$ theorem (see more details in \Sec{sec:C-P-R}, or a version for Euclidean \cite{Witten2016cio1605.02391}),
we can denote the full 0-form symmetry as $G_{[0]}=\Z_2^T \times \Z_2^C$.
We also have the 1-form electric $G_{[1]}=\Z_{\rN,[1]}^e$ center symmetry \cite{Gaiotto:2014kfa}.

{So we find that the full global symmetry group ``schematically'' as}
\bea \label{eq:YM-Sym}
G=   \Z_{\rN,[1]}^e \rtimes (\Z_2^T \times \Z_2^C),
\eea
here {we focus on the discrete part, but we intentionally omit the continuous part (shown later in \Sec{sec:cobo-top}) 
of the spacetime symmetry group.}\footnote{One may wonder the role of 
parity $\cP$ (details in \Sec{sec:C-P-R}), and a potential larger symmetry group $(\Z_2^T \times \Z_2^C \times \Z_2^P)$ for $G_{[0]}$.
As we know that $\cC \cP \cT$ transformation is almost {trivial and tightly related} to the spacetime symmetry group.
It is at most a complex conjugation and anti-unitary operation in Minkowski signature. 
It is trivial in the Euclidean signature.
It will become clear later, %we write down the full spacetime symmetry group and the internal symmetry group,
when we show our cobordism calculation for the full global symmetry group including the spacetime and internal symmetries in \Sec{sec:cobo-top}.
More discussions on discrete symmetries in various YM gauge theories can be found 
in \cite{2017arXiv171111587GPW}.}

For N = 2, we actually have the semi-direct product ``$\rtimes$'' reduced to a direct product ``$\times$,'' so we write
\bea \label{eq:YM-SU2-Sym-T}
G=   \Z_{2,[1]}^e \times \Z_2^T,
\eea
{here we also do not have the $\Z_2^C$ charge conjugation global symmetry, due to that now becomes part of the SU(2) gauge group of YM theory.}
The non-commutative nature (the semi-direct product ``$\rtimes$'') of \eq{eq:YM-Sym} between 0-form and 1-form symmetries will
 be explained in the end of \Sec{sec:C-P-R}, after we first derive some preliminary knowledge below:\\

\noindent
$\bullet$ The 0-form $\Z_2^T$ symmetry can be probed by ``background symmetry-twist'' if placing the system on non-orientable manifolds.
The details of time-reversal symmetry transformation will be discussed in Sec.~\ref{sec:TR}.
\\
$\bullet$ The 1-form electric $\Z_{\rN,[1]}^e$-center symmetry (or simply 1-form $\Z_{\rN}$-symmetry) can be coupled to 2-form background field $B_2$.
The \emph{charged} object of the 1-form $\Z_{\rN,[1]}^e$-symmetry is the gauge-invariant Wilson line 
\be
W_e=\Tr_{\text{R}}( \text{P} \exp(\ii \oint a)).
\ee

The Wilson line $W_e$ has the $a$ viewed as a connection over a principal Lie group bundle (here SU(N)), 
which is parallel transported around the integrated closed loop resulting an element of the Lie group.
P is the path ordering.
The Tr is again the trace in the Lie algebra valued, over the irreducible representation R of the Lie group (here SU(N)).
The spectrum of Wilson line $W_e$ includes all representations of the given Lie group (here SU(N)). 
Specifying the local Lie algebra {\bf{g}} is not enough, we need to also specify the gauge Lie group (here SU(N)) and other data, such
as the set of extended operators and the topological terms,
in order to learn the global structure and non-perturbative physics of gauge theory (See \cite{AharonyASY2013hda1305.0318}, and \cite{2017arXiv171111587GPW} for many examples).

For the SU(N) gauge theory we concern, the spectrum of purely electric Wilson line
$W_e$ includes the fundamental representation with a $\Z_{\rN}$ class, 
which can be regarded as the $\Z_{\rN}$ charge label of 1-form $\Z_{\rN,[1]}^e$-symmetry. 

The 2-surface \emph{charge} operator that measures the 1-form $\Z_{\rN,[1]}^e$-symmetry of the  \emph{charged}  Wilson line
is the electric 2-surface operator that we denoted as $U_e$.
The higher $q$-form symmetry ($q>0$) needs to be abelian \cite{Gaiotto:2014kfa}, thus the
1-from electric symmetry is associated with the $\Z_{\rN}$ center subgroup part of SU($\rN$),
known as the 1-form $\Z_{\rN,[1]}^e$-symmetry.

If we place the Wilson line along the $S^1$ circle of the time or thermal circle,
it is known as the Polyakov loop, which nonzero expectation value (i.e. breaking of the 0-form center dimensionally-reduced 
from 1-form center symmetry) serves as the \emph{order parameter} of confinement-deconfinement transition.

Below we illuminate our understanding in details for the SU(2) YM  theory (so we set N=2), which the discussion can be generalized to SU(N) YM.

\begin{enumerate}
\item We write the SU(2)-YM theory with a background $B$ field (more precisely the $B_2$ 2-cochain field) coupling to 1-form $\Z_{\rN,[1]}^e$-symmetry as: 
\bea
 \label{eq:SU2YM}
&&\bZ^{\text{$4$d}}_{{{\SU(2)}\text{YM}}}[B]\\
&&= \int [D \lambda] \; \bZ^{\text{$4$d}}_{{\text{YM}}}
  \exp\big( \ii \pi \int \lambda \cup (w_2(E) - B_2) + \ii \frac{\pi}{2} p \mathcal{P}_2(B_2)\big), \nn
\eea
where $w_2(E)$ is the Stiefel-Whitney (SW) class of gauge bundle $E$, and $B_2$ is %2-form background field (or 
a $\Z_2$-valued 2-cochain, both are non-dynamical probes.
We see that integrating out $\lambda$, set $(w_2(E) - B_2)=0 \mod 2$, thus $B_2=w_2(E)$ is related.
For $B_2=0$, there is no symmetry twist $w_2(E)=0$.\\
For $B_2=w_2(E) \neq 0$, there is a twisted bundle or a so called symmetry twist. So
we have an additional $ \ii \frac{\pi}{2} p \mathcal{P}_2(w_2(E))$ depending on $p  \in \Z_4$.
The Pontryagin square term
$\mathcal{P}_2:\H^2(-,\Z_{2^k})\to\H^4(-,\Z_{2^{k+1}})$, here is given by
\be \label{eq:PB}
\mathcal{P}_2(B_2)=B_2\cup B_2+B_2\hcup{1}\delta B_2 
=B_2\cup B_2+B_2\hcup{1} 2 \Sq^1 B_2,
\ee
see Sec.~\ref{sec:theta}. With $\cup$ is a normal cup product
and $\hcup{1}$ is a higher cup product. For readers who are not familiar with
the mathematical details,  see the introduction to mathematical background in \cite{W2}.
The physical interpretation of adding  $\frac{\pi}{2} p \mathcal{P}_2(B_2)$ with $p \in \Z_4$,
is related to the fact of the YM vacua can be shifting by a higher-SPTs protected by 1-form symmetry, see Sec.~\ref{sec:theta}.

\item The electric Wilson line $W_e$ in the fundamental representation is dynamical and a genuine line operator.
Wilson line $W_e$ can live on the boundary of a magnetic 2-surface $U_m=\exp(\ii  {{ \pi \int}} w_2(E))=\exp(\ii  {{ \pi \int}} B_2)$. However, we can set $B_2=0$ since it is a probed field.
So $W_e$ is a genuine line operator, i.e. without the need to be at the boundary of 2-surface \cite{Gaiotto:2014kfa}.

\item The magnetic 't Hooft line $T_m$ is on the boundary of an electric 2-surface $U_e=\exp(\ii  { \pi \int} \lambda)$. Since $\lambda$ is dynamical,
't Hooft line is not genuine thus not in the line spectrum.

\item The electric 2-surface $U_e=\exp(\ii { \pi \int} \lambda)$ measures 1-form $e$-symmetry, and it is dynamical.
This can be seen from the fact that the 2-surface $w_2(E)$ is defined as a 2-surface defect 
({where each small 1-loop of 't Hooft line 
linked with this $w_2(E)$ getting a nontrivial $\pi$-phase $e^{\ii \pi}$}). 
The $w_2(E)$ has its boundary with Wilson loop $W_e$, such that
$U_e  U_m \sim \exp(\ii \pi {\int} \lambda \cup w_2(E) )$
specifies that when a 2-surface $\lambda$ links with (i.e. wraps around) a 
1-Wilson loop $W_e$, %\footnote{The type of 2-surface and 1-loop linking in 4 spacetime is widely known as the generalized Aharonov-Bohm type of linking, see e.g. \cite{1602.05951, Putrov2016qdo1612.09298} and references therein.} 
there is a nontrivial statistical $\pi$-phase $e^{\ii \pi}=-1$.
{This type of a link of 2-surface and 1-loop in a 4d spacetime is widely known as the generalized Aharonov-Bohm type of linking,
captured by a topological link invariant,
see e.g. \cite{1602.05951, Putrov2016qdo1612.09298} and references therein.}

\end{enumerate}

\subsubsection{YM theory coupled to background fields} 
%\subsubsection{Make 2-form $\Z_\rN$ field out of 2-form and 1-form $\U(1)$ fields}

First we {make a 2-form $\Z_\rN$ field out of 2-form and 1-form $\U(1)$ fields.}
The 1-form global symmetry $G_{[1]}$ can be coupled to a 2-form background ${\Z_{\rN} }$-gauge field $B_2$. 
In the continuum field theory, consider firstly a 2-form $\U(1)$-gauge field $B_2$ and 1-form $\U(1)$-gauge field $C_1$ such that
\bea
&& B_2 \text{ as a 2-form $\U(1)$ gauge field}, \\
&& C_1 \text{ as a 1-form $\U(1)$ gauge field}, \\
&& \rN B_2=\dd C_1, \text{ $B_2$ as a 2-form $\Z_\rN$ gauge field}.
\eea
that satisfactorily makes the continuum formulation of $B_2$ field as a 2-form $\Z_\rN$-gauge field when we constrain an enclosed surface integral
\bea
 \oint B_2=\frac{1}{\rN}\oint \dd C_1 \in \frac{1}{\rN} 2\pi \Z.
\eea

%\subsubsection{$\frac{\SU(n)}{\Z_n}=\frac{\U(n)}{\U(1)}$}

Now based on the relation {${\PSU(\rN)}=\frac{\SU(\rN)}{\Z_\rN}=\frac{\U(\rN)}{\U(1)}$},
we aim to have an $\SU(\rN)$ gauge theory coupled to a background 2-form ${\Z_\rN}$ field. Here
\bea
&&a \text{ as an SU(N) 1-form gauge field}, \nn\\
&&F_a= \dd a-\ii a \wedge a, \text{ as an SU(N) field strength},\nn\\
&&\Tr[F_a]=\Tr[ \dd a-\ii a \wedge a]=0  \text{ traceless for SU(N)}.\;\;
\eea
We then promote the U(N) gauge theory with 1-form U(N) gauge field $a'$, such that its normal subgroup U(1) is coupled to the background 1-form probed field $C_1$.
Here we can identify the U(N) gauge field to a combination of SU(N) and U(1) gauge fields
via,  {up to details of gauge transformations} \cite{Gaiotto2017yupZoharTTT},
\bea
&&a' \text{ as an U(N) 1-form gauge field}, \nn \\
&& a' \simeq a+ \mathbb{I}\frac{1}{\rN}C_1, \nn \\ 
&& \Tr a' \simeq \Tr a+ C_1 =C_1,  
\eea
\bea
&&F_{a'}= \dd a'-\ii a' \wedge a', \text{ as a U(N) field strength}, \nn \\
&&\Tr[F_{a'}]=\Tr[ \dd a'-\ii a' \wedge a']=\Tr[ \dd a']  =  \dd C_1,  \nn \\
&& \text{its trace is a $\U(1)$ field strength.}  
%&& \text{only proportional to $\U(1)$-generator $T^0$.} \nn \\
%&& \Tr a' \simeq  C  \text{ is stronger than  } \Tr F_{a'}=\Tr \dd a' =  \dd C.\nn
\eea
To associate the U(1) field strength $\Tr[F_{a'}]=\Tr[ \dd a']=\dd C_1$ to the background U(1) field strength, we can impose a Lagrange multiplier 2-form $u$,
\bea
&&\int [{\cal D} {u}] \exp\big( \ii  \int\limits_{M^4} \frac{1}{2 \pi} u \wedge (\Tr F_{a'} -\dd C_1)\big)\nn\\
&&=\int [{\cal D} {u}] \exp\big( \ii  \int\limits_{M^4} \frac{1}{2 \pi} u \wedge \dd(\Tr {a'} - C_1)\big).
\eea
We also have $\rN B_2=\dd C_1$, so we can impose another  Lagrange multiplier 2-form $u'$,
\bea
\int [{\cal D} {u'}] \exp\big( \ii  \int\limits_{M^4} \frac{1}{2 \pi} u' \wedge (\rN B_2 -\dd C_1)\big)
\eea
%Or overall we also have a constraint $\int [{\cal D} {u}] \exp\big( \ii  \int\limits_{M^4} \frac{1}{2 \pi} u \wedge (\Tr F_{a'} -n B_2)\big)$.
From now we will make the YM kinetic term {\emph{implicit}}, we focus on the $\theta$-topological term associated with the symmetry transformation.
The YM kinetic term does not contribute to the anomaly (in QFT language) and is not affected under the symmetry twist (in condensed matter language \cite{1405.7689}).
Overall, with only a pair $(B_2,C_1)$ as background fields (or sometimes simply written as $(B,C)$), we have, 
\begin{widetext}
\begin{multline} 
\int [{\cal D} {a}] [{\cal D} {u}]  [{\cal D} {u'}] \exp\big(  \ii 
  \int\limits_{M^4} (\frac{  \theta}{8 \pi^2}  \text{Tr}\,F_a\wedge F_a) +
 \ii  \int\limits_{M^4} \frac{1}{2 \pi} u \wedge \dd(\Tr {a'} - C_1) + \ii  \int\limits_{M^4} \frac{1}{2 \pi} u' \wedge (\rN B_2 -\dd C_1)\big)\;\;\\
=\int [{\cal D} {a}] [{\cal D} {u}]   \exp\big(  \ii 
  \int\limits_{M^4} (\frac{  \theta}{8 \pi^2}  \text{Tr}\,F_a\wedge F_a) \big)
\exp\big( \ii  \int\limits_{M^4} \frac{1}{2 \pi} u \wedge \dd(\Tr {a'} - C_1)\big) \vert_{\rN B_2 =\dd C_1}\\
=\int [{\cal D} {a}]    \exp\big(  \ii 
  \int\limits_{M^4} (\frac{  \theta}{8 \pi^2}  \text{Tr}\,F_a\wedge F_a) \big)
\vert_{ \Tr(F_{a'})= \Tr {\dd a'} =\dd C_1=\rN B_2 = B_2 (\Tr \; \mathbb{I})},
\end{multline}
\end{widetext}
here $\mathbb{I}$ is a rank-$\rN$ identity matrix, thus $(\Tr \;\mathbb{I})=\rN$.

Next we rewrite the above path integral in terms of $\U(\rN)$ gauge field, again 
{up to details of gauge transformations} \cite{Gaiotto2017yupZoharTTT},
\bea
&& a' \simeq a+ \mathbb{I}\frac{1}{\rN}C_1,\nn \\
&&F_{a'}= \dd a' -\ii a' \wedge a' \nn\\
&&= (\dd a+ \mathbb{I}\frac{1}{\rN} \dd C_1) -\ii (a+ \mathbb{I}\frac{1}{\rN}C_1) \wedge (a+ \mathbb{I}\frac{1}{\rN}C_1) \nn\\
&&=(\dd a+ B_2 \mathbb{I}) - \ii a  \wedge a + 0 =F_{a} +B_2 \mathbb{I}
\eea
Now, to fill in the details of gauge transformations,
\bea
&&B_2 \to B_2+ \dd \lambda, \\
&&C_1 \to C_1+ \dd \eta+ \rN \lambda,
\eea
\bea
&& a' \to a' - \lambda \mathbb{I} + \dd \eta_a,\\
&& a \to a + \dd \eta_a,
\eea
The infinitesimal and finite gauge transformations are:
\bea
&& a_{\mu}^\alpha \to a_{\mu}^\alpha +  \partial_{\mu} \eta_a^\alpha + f^{\alpha \beta \gamma}  a_{\mu}^\beta \eta_a^\gamma,\\
&& a \to V (a + {\ii} \dd) V^\dagger \equiv  e^{\ii \eta_a^\alpha T^\alpha} (a + \ii \dd)  e^{-\ii \eta_a^\alpha T^\alpha}, 
\eea
where we denote 1-form $\lambda$ and 0-form $\eta, \eta_a$ for gauge transformation parameters.
Here $\eta_a$ with {subindex $a$ is merely an internal label for the gauge field $a$'s transformation $\eta_a$}.
Here ${\alpha \beta \gamma}$ are the color indices in physics, and also the indices for the adjoint representation of Lie algebra in math,
which runs from $1,2, \dots, \dd(G_{\text{gauge}})$ with the dimension  $\dd (G_{\text{gauge}})$ of Lie group $G_{\text{gauge}}$ (YM gauge group), especially here $\dd(G_{\text{gauge}})=\dd(\SU(\rN))=\rN^2-1$. 
By coupling $\bZ^{\text{$4$d}}_{{\text{YM}}}$ to 2-form background field $B$, we obtain a modified partition function 
% $r,g,b$ (or $1,2,3$), and ${i j k}$ are flavor indices: %$u,d,s$ (or $1,2,3$).
\begin{widetext}
%\begin{multline} 
\bea
\label{eq:SUNYMN}
&&\bZ^{\text{$4$d}}_{{\text{YM}}}[B]
  =
  \int [{\cal D} {a}]
  \exp\big(  \ii 
  \int\limits_{M^4} (\frac{  \theta}{8 \pi^2}  \text{Tr}\, (F_{a'} -B_2 \mathbb{I}) \wedge (F_{a'} -B_2 \mathbb{I})) \big)\vert_{ \Tr(F_{a'})= \Tr {\dd a'} =\dd C_1= \rN B_2 = B_2 (\Tr \mathbb{I})}
   \\
&&  =
    \int [{\cal D} {a}]
  \exp\big( 
   \ii 
  \int\limits_{M^4} (\frac{  \theta}{8 \pi^2} \text{Tr}\, (F_{a'} \wedge F_{a'})- \frac{2\theta \rN}{8 \pi^2} B_2 \wedge B_2 + \frac{\theta \rN}{8 \pi^2} B_2 \wedge B_2 ) \big) 
  =
    \int [{\cal D} {a}]
  \exp\big( 
   \ii 
  \int\limits_{M^4} (\frac{  \theta}{8 \pi^2} \text{Tr}\, (F_{a'} \wedge F_{a'})-  \frac{\theta \rN}{8 \pi^2} B_2 \wedge B_2 ) \big). \nn
  \eea
%\end{multline}
\end{widetext}

\subsubsection{$\theta$ periodicity and the vacua-shifting of higher SPTs} 
\label{sec:theta}

Normally, people say {$\theta$} has the $2\pi$-periodicity, 
\bea
\theta \simeq \theta+ 2\pi.
\eea
However, this identification is imprecise. 
Even though the dynamics of the vacua $\theta$ and $\theta+ 2\pi$ is the same,
the $\theta$ and $\theta+ 2\pi$ can be differed by a short-ranged entangled gapped phase of SPTs of condensed matter physics.
In \cite{Gaiotto2017yupZoharTTT}'s language, the vacua of
$\theta$ and $\theta+ 2\pi$ are differed by a counter term (which is the 4d higher-SPTs in condensed matter physics language).
We can see the two vacua are differed by
  $ \exp (- 
   \ii 
  \int_{M^4}  \frac{\Delta \theta \rN}{8 \pi^2} B_2 \wedge B_2 )  \mid_{\Delta\theta =2 \pi}$,
which is
\be \label{eq:4dSPT-1}
\exp( 
  \ii 
\int_{M^4} \frac{ -\rN}{4 \pi} B_2 \wedge B_2 )  =
\exp( 
   \ii 
\int_{M^4}  \frac{-\pi}{ \rN} B_2 \cup B_2 ) ,
\ee
where on the right-hand-side (rhs), we switch the notation from the wedge product ($\wedge$) of differential forms
to the cup product ($\cup$) of cochain field, such that
$B_2 \to \frac{2 \pi}{N} B_2$ and $\wedge \to \cup$.  \footnote{We use the same notation $B_2$ for the differential form (which is $2\pi$ periodic) and the cocycle (which is $\rN$ periodic).}  More precisely, 
when 
$\rN={2^k}$ as a power of 2,
the vacua is differed by
\be \label{eq:4dSPT-2}
\exp( 
  \ii 
\int_{M^4}  \frac{-\pi}{ \rN} \mathcal{P}_2(B_2) ),
\ee
where a Pontryagin square term $\mathcal{P}_2:\H^2(-,\Z_{2^k})\to\H^4(-,\Z_{2^{k+1}})$
 is given by  \eq{eq:PB}
$
\mathcal{P}_2(B_2)=B_2\cup B_2+B_2\hcup{1}\delta B_2 =B_2\cup B_2+B_2\hcup{1} 2 \Sq^1 B_2
$.
%\cred
{
This term is related to the generator of group cohomology $\H^4(\B^2 \Z_2,\U(1))=\Z_4$ when N=2,
and $\H^4(\B^2 \Z_{\N},\U(1))$ for general N.
This term is also related to the generator of cobordism group $\Omega^4_{\SO}(\B^2 \Z_2,\U(1))\equiv \Tor  \Omega_4^{\SO}(\B^2 \Z_2) =\Z_4$ when N=2,
and $\Omega^4_{\SO}(\B^2 \Z_\rN,\U(1))\equiv \Tor  \Omega_4^{\SO}(\B^2 \Z_\rN)$ for general N.
For the even integer N $={2^k}$, we have  $\Omega^4_{\SO}(\B^2 \Z_{\rN={2^k}},\U(1)) =\Z_{2 \rN={2^{k+1}}}$
via a $\Z_{2^k}$-valued 2-cochain in 2d to $\Z_{2^{k+1}}$ in 4d.
For our concern (e.g. N = 2, 4, etc.), we have $\Omega^4_{\SO}(\B^2 \Z_\rN,\U(1)) =\Z_{2 \rN}$,
and the Pontryagin square is well-defined.
For the odd integer N  that we concern (e.g. N = 3 or say N $=\text{p}$ an odd prime), 
Pontryagin square still can be defined, but it is $\H^{2n}(-,\Z_{\text{p}^k}) \to\H^{2\text{p}n}(-,\Z_{\text{p}^{k+1}})$.
So we do \emph{not} have Pontryagin square at $\rN = 3$ in 4d.
See more details on the introduction to mathematical background in \cite{W2}.
}
Since we know that the probed-field topological term characterizes SPTs \cite{1405.7689},
which are classified by group cohomology  \cite{Chen2011pg1106.4772, Wen1610.03911} or cobordism theory \cite{Kapustin2014tfa1403.1467,Kapustin1406.7329,Freed2016}; 
we had identified the precise SPTs (\eq{eq:4dSPT-1}, \eq{eq:4dSPT-2}) differed between 
the vacua of
$\theta$ and $\theta+ 2\pi$. 

\subsubsection{Time reversal $\cT$ transformation} \label{sec:TR}
As mentioned in \eq{eq:YM-Sym}, the global symmetry of YM theory (at $\theta=0$ and $\theta=\pi$) contains a time-reversal symmetry $\cT$. 
We denote the spacetime coordinates $\mu$ for
2-form $B \equiv B_2$ and 1-form $C_1$ gauge fields as
$B_{2,\mu\nu}$ and $C_{1,\mu}$ respectively. 
We denote $a_\mu = a_\mu^{\al} T^{\al}$.
Then, 
time reversal acts as:
\bea \label{eq:T}
\cT: &&a_0 \to - a_0,\;\;\;\; a_i \to + a_i, \;\;\;\;  (t, x_i) \to (-t, x_i). \nn\\
&& C_{1,0}\to  - C_{1,0},\;\;\;\; C_{1,i} \to + C_{1,i}. \nn \\
%s&& B_2 = \dd C_1 \nn.\\
&& B_{2,0i} \to - B_{2,0i},\;\;\;\; B_{2,ij} \to + B_{2,ij}.
\eea
Thus the path integral transforms under time reversal, schematically, becomes
$\bZ^{\text{$4$d}}_{{\text{YM}}}[\cT B]$. 
By $\cT B$, we also mean $\cT B \cT^{-1}$
in the quantum operator form of $B$ (if we canonically quantize the theory).
More precisely,
\begin{widetext}
\begin{multline}
  \hspace*{-.8cm} 
%\bea
\bZ^{\text{$4$d}}_{{\text{YM}}}[B]=\int [{\cal D} {a}]
  \exp\big( 
   \ii 
  \int\limits_{M^4} (\frac{  \theta}{8 \pi^2} \text{Tr}\, (F_{a'} \wedge F_{a'})-  \frac{\theta \rN}{8 \pi^2} B_2 \wedge B_2 ) \big) 
  \overset{\cT}{\longrightarrow} \\
  \hspace*{-.8cm} 
\bZ^{\text{$4$d}}_{{\text{YM}}}[\cT B]=     \int [{\cal D} {a}]
  \exp\big( 
   \ii 
  \int\limits_{M^4} (\frac{  -\theta}{8 \pi^2} \text{Tr}\, (F_{a'} \wedge F_{a'})-  \frac{-\theta \rN}{8 \pi^2} B_2 \wedge B_2 ) \big) %\\
  =\bZ^{\text{$4$d}}_{{\text{YM}}}[B]  \cdot
       \int [{\cal D} {a}]
  \exp\big( 
   \ii 
  \int\limits_{M^4} (\frac{  -2\theta}{8 \pi^2} \text{Tr}\, (F_{a'} \wedge F_{a'})-  \frac{-2\theta \rN}{8 \pi^2} B_2 \wedge B_2 ) \big).
  %\eea
\end{multline}
$\bullet$ When $\theta=0$, this remains the same $\bZ^{\text{$4$d}}_{{\text{YM}}}[\cT B]=\bZ^{\text{$4$d}}_{{\text{YM}}}[B]$.\\
$\bullet$ When $\theta=\pi$, this term transforms to

\begin{multline} 
\bZ^{\text{$4$d}}_{{\text{YM}}}[B] \cdot
       \int [{\cal D} {a}]
  \exp\big( 
   \ii 
{(-2 \pi)}  \int\limits_{M^4} (\frac{ 1 }{8 \pi^2} \text{Tr}\, (F_{a'} \wedge F_{a'})-  \frac{ \rN}{8 \pi^2}  B_2 \wedge B_2 ) \big)\\
=\bZ^{\text{$4$d}}_{{\text{YM}}}[B] \cdot
%       \int [{\cal D} {a}]
  \exp\big( 
   \ii 
{(-2 \pi)}  (-c_2) 
+\frac{  {{(-2 \pi)} \ii}}{8 \pi^2}   \int\limits_{M^4} ( \text{Tr} F_{a'}  \wedge  \text{Tr} F_{a'} -  { \rN}  B_2 \wedge B_2)
\big)\vert_{ \Tr(F_{a'}) = \rN B_2 }\\
=\bZ^{\text{$4$d}}_{{\text{YM}}}[B]  \cdot
%       \int [{\cal D} {a}]
  \exp\big( 
   \ii 
{2 \pi}  c_2
+\frac{  {{(-2 \pi)} \ii}}{8 \pi^2}   \int\limits_{M^4} ( \rN(\rN-1)  B_2 \wedge B_2)
\big)
=\bZ^{\text{$4$d}}_{{\text{YM}}}[B] \cdot
  \exp\big( 
\frac{  { -\ii \rN(\rN-1)}}{4 \pi}   \int\limits_{M^4} (   B_2 \wedge B_2)
\big)
\end{multline}
  \end{widetext}
where we apply the 2nd Chern number $c_2$ identity:
\be
\frac{  1}{8 \pi^2}   \int_{M^4} (\text{Tr} F_{a'}  \wedge  \text{Tr} F_{a'}  - \text{Tr}\, (F_{a'} \wedge F_{a'}) ) =  c_2 \in \Z.
\ee
We can add a 4d SPT state (of a higher form symmetry) as a counter term.
Consider again a 1-form $\Z_\rN$-symmetry ($\Z_{\rN,[1]}^e$) protected higher-SPTs, classified by a cobordism group $\Omega^4_{\SO}(\B^2 \Z_\rN,\U(1))$,
\bea
\exp( 
  \ii 
\int_{M^4}  \frac{p \pi}{ \rN} \mathcal{P}_2(B_2) ) \sim 
\exp( 
  \ii  \frac{p \rN}{4 \pi} \int B_2 \wedge B_2),
\eea
here we again convert the 2-cochain field $B_2$ to 2-form field $B_2$ (to recall, see Sec.~\ref{sec:theta}).
For any 4-manifold, according to \cite{Gaiotto:2014kfa, Putrov2016qdo1612.09298}, 
\bea
\frac{\rN p}{2} &\in& \Z \text{ (For even $\rN$, $p \in \Z$. For odd $\rN$, $p \in 2 \Z$)}. \;\;\\
p &\simeq& p + 2 \rN. 
\eea
For even $\rN$, there are $2\rN$ classes of 4d higher SPTs for $p \in  \Z$.
For odd $\rN$, there are $\rN$ classes of 4d higher SPTs for  $p \in 2 \Z$.

For spin 4-manifolds (when $p$ and $\rN$ are odd):
\bea
p &\in& \Z.\\
p &\simeq& p + \rN. 
\eea
In this case, there are $\rN$ classes on the spin manifold.
This 4d higher SPTs (counter term) under TR sym changes to:
$
\frac{p \rN}{4 \pi} \int B_2 \wedge B_2 \to  -\frac{p \rN}{4 \pi} \int B_2 \wedge B_2
$, or more precisely,
\be \label{eq:SPT-shift}
\int  \frac{p \pi}{ \rN} \mathcal{P}_2(B_2) \to - \int  \frac{p \pi}{ \rN} \mathcal{P}_2(B_2).
\ee
\subsubsection{Mixed time-reversal and 1-form-symmetry anomaly}
\label{sec:mix-T-1}

Now we discuss the {mixed time-reversal $\cT$ and 1-form $\Z_\rN$ symmetry anomaly of \cite{Gaiotto2017yupZoharTTT}} in details.
We re-derive based on our language in \cite{1405.7689}.
 The charge conjugation $\mathcal{C}$, parity $\mathcal{P}$ and time-reversal $\cT$ form $\mathcal{C}\mathcal{P} \cT$.
 Since $\mathcal{C}\mathcal{P} \cT$ is a global symmetry for this YM theory, 
we can also interpret this anomaly as
a {mixed $\mathcal{C}\mathcal{P}$ and 1-form $\Z_\rN$ symmetry anomaly}.

So overall, $\bZ^{\text{$4$d}}_{{\text{YM}}}[B]$, say with a 4d higher-SPT $\frac{p \rN}{4 \pi} \int B_2 \wedge B_2$ labeled by $p$,  is sent to
  \begin{widetext} \label{eq:4d-YM-5d-SPTs}
\begin{multline}   
\bZ^{\text{$4$d}}_{{\text{YM}}}[B]  \cdot
  \exp\big( 
\ii (\frac{  {- \rN(\rN-1)}}{4 \pi} +\frac{-2p \rN}{4 \pi})  \int_{M^4} (   B_2 \wedge B_2) \big)
   =\bZ^{\text{$4$d}}_{{\text{YM}}}[B]  \cdot
  \exp\big( 
\frac{  { -\ii \rN(\rN-1 +2p)}}{4 \pi}   \int_{M^4} (   B_2 \wedge B_2) \big).
\end{multline}
  \end{widetext}
%For any manifold, $p \simeq p + 2\rN $.
\begin{enumerate}

\item
For even $\rN$, and $\theta= \pi$, here the 4d higher SPTs (counter term) labeled by 
$p$ becomes labeled by  $-(\rN-1)-p$. To check whether there is a mixed anomaly or not, 
which asks for the identification of two 4d SPTs before and after time-reversal transformation.
Namely
$(\rN-1 +2p)=0$ (mod out the classification of 4d higher SPTs given below \eq{eq:4dSPT-2})
\emph{cannot} be satisfied for any $p \in \Z$ (actually $p \in \Z_{2^{k+1}}$ via the Pontryagin square,
which sends a $\Z_{2^k}$-valued 2-form in 2d to $\Z_{2^{k+1}}$-class of 4d higher SPTs. For N = 2, we have $p \in \Z_4$.) 

So this indicates that for any $p$ (with or without 4d higher SPTs/counter term) in the YM vacua,
we \emph{detect} the mixed time-reversal $\cT$ and 1-form $\Z_\rN$ symmetry anomaly, which requires
a 5d higher SPTs to cancel the anomaly. We will write down this
5d higher SPTs/counter term in \Sec{sec:cobo-top}.

\item
For even $\rN$, and $\theta= 0$, we have  $\bZ^{\text{$4$d}}_{{\text{YM}}}[\cT B]=\bZ^{\text{$4$d}}_{{\text{YM}}}[B]$ without 4d SPTs.
With 4d SPTs, the only shift is \eq{eq:SPT-shift}, so to check the anomaly-free condition, we need 
$p=-p$, or $2p=0$, mod out the classification of 4d higher SPTs given below \eq{eq:4dSPT-2}.
This anomaly-free condition can be satisfied for $p=0$. 
For N = 2, we can also have $ 2 p = 0$ mod 4, which is true for $p=0,2$, even with the $p=2$-class of 4d SPTs.
In that case, there is no mixed higher anomaly of $\cT$ and $\Z_{\rN,[1]}^e$ symmetry,

\item
For odd $\rN$, and $\theta= \pi$,
the $(N-1 +2p)=0$ (mod out the classification of 4d higher SPTs given below \eq{eq:4dSPT-2})
can be satisfied for some $p=\frac{1-N}{2} \in \Z$, but $p$ needs to be even $p \in 2 \Z$  on a non-spin manifold.
If $p=\frac{1-N}{2} \in 2 \Z$, the 4d SPTs can be defined on a non-spin manifold.
If $p=\frac{1-N}{2} \in \Z$, the 4d SPTs can only be defined on a spin manifold.
So, for an odd $\rN$, there can be \emph{no} mixed anomaly at $\theta= \pi$, a 4d higher SPTs/counter term of $p=\frac{1-N}{2}$ 
preserves the $\cT$-symmetry and 1-form $\Z_\rN$-symmetry (such that two symmetries can be regulated locally onsite \cite{Wen2013oza1303.1803, 1405.7689, Wang2017locWWW1705.06728}).

\end{enumerate}

\subsubsection{Charge conjugation $\cC$, parity $\cP$, reflection $\cR$,  $\cC\cT$, $\cC\cP$ transformations,
 %$\Z_{\rN,[1]}^e \rtimes (\Z_2^T \times \Z_2^C)$-
$\Z_2^{CT} \times (\Z_{\rN,[1]}^e \rtimes  \Z_2^C )$ and
$\Z_2^{CT} \times \Z_{2,[1]}^e$
-symmetry, and their higher mixed anomalies} \label{sec:C-P-R}

%A0 -> - A0, A j ->  - Aj,  A mu -> - A mu

Follow \Sec{sec:TR} and the discrete $\cT$ transformation in \eq{eq:T}, 
and we denote $a_\mu = a_\mu^{\al} T^{\al}$, below
we list down additional
discrete transformations including 
charge conjugation $\cC$, parity $\cP$,  $\cC\cT$, $\cC\cP$: 
\bea 
\cC: && a_\mu^{\al}(t,x) \to a_\mu^{\al}(t,x), \;\;\;\; (t, x_i) \to (t, x_i). \label{eq:C}
\\
&& \:    T^{\al}  \to -T^{{\al}*}, \;\;\;\; a_\mu \to -a_\mu^*.\nn\\
\cP: &&\; a_0 \to a_0,\;\;\;\; a_i \to -a_i, \;\;\;\;  (t, x_i) \to (t, -x_i).\\
\cC\cT: && a_0 \to +a_0^* ,\;\;\;\; a_i \to -a_i^*, \;\;\;\;  (t, x_i) \to (-t, x_i).\;\;\; \label{eq:CT}\\
\cC\cP: &&\; a_0 \to - a_0^*,\;\;\;\; a_i \to +a_i^*, \;\;\;\;  (t, x_i) \to (t, -x_i).\; \;\;  \\
\cC\cP\cT: &&\; a_\mu \to + a_\mu^*,\;\;\;\;  (t, x_i) \to (-t, -x_i).
\eea
{The $*$ means the complex conjugation.} 
In Euclidean spacetime, we can regard the former $\cT$ in \eq{eq:T} (or $\cC\cT$ \eq{eq:CT}) as a 
reflection $\cR$ transformation \cite{Witten2016cio1605.02391}, which we choose to flip any of the Euclidean coordinate.
See further discussions of the crucial role of discrete symmetries in YM gauge theories
in \cite{2017arXiv171111587GPW}.

We can ask whether there is any higher mixed anomalies between the above discrete symmetries and the 1-form center symmetry. 
We can easily check that whether the  {$\frac{\theta}{8 \pi^2}\text{Tr}\, (F_{a} \wedge F_{a})$ term flips sign to $-\frac{\theta}{8 \pi^2}\text{Tr}\, (F_{a} \wedge F_{a})$} 
under any of the discrete symmetries.
Among the $\cC, \cP$, and $\cT$, only the $\cC$ \emph{does not} flip the ${\theta}$ term
and $\cC$ is a good global symmetry for all ${\theta}$ values.
So the answer is that each of the 
\bea \label{eq:T-CT-mix}
\text{$\cT$, $\cP$, $\cC \cT$ and $\cC \cP$}, 
\eea
have itself mixed anomalies with the 1-form center symmetry. 
Only %$\cC$,  $\cP \cT$ and  $\cC \cP \cT$ 
\bea \label{eq:C-PT-mix}
\text{$\cC$, $\cP \cT$,  and $\cC \cP \cT$}, 
\eea
do not have mixed anomalies with the 1-form center symmetry. 

Now, we come back to explain the non-commutative nature (the semi-direct product ``$\rtimes$'') of \eq{eq:YM-Sym} between 0-form and 1-form symmetries
$$
\Z_{\rN,[1]}^e \rtimes (\Z_2^T \times \Z_2^C).
$$
Obviously $(\Z_2^T \times \Z_2^C)$ is due to that $\cC$ and $\cT$ commute, and the combined diagonal group
diag$(\Z_2^T \times \Z_2^C)=\Z_2^{CT}$ has the
group generator $\cC \cT$.

We note that to physically understand some of the following statements, it may be helpful to view the symmetry transformation in the Minkowski/Lorentz signature instead of
the Euclidean signature.\footnote{In the Minkowski case, 
we also need to regard the time-reversal symmetries ($\cT$ and $\cC \cT$)  as anti-unitary symmetry, instead of the unitary symmetry (as the Euclidean case).}

\noindent
%\cblue
{
$\bullet$  The  non-commutative nature $\Z_{\rN,[1]}^e \rtimes \Z_2^T$ is due to that the
$\Z_2^T$ keeps 1-Wilson loop
$W_e=\Tr_{\text{R}}( \text{P} \exp(\ii \oint a)) \to \Tr_{\text{R}}( \text{P} \exp((-\ii )(-\oint a))) = W_e$ invariant,
while $\Z_2^T$ flips the 2-surface $U_e \to U_e^\dagger= U_e^{-1}$ due to the orientation of $U_e$ and its boundary 't Hooft line is flipped.
Thus, the 1-form $\Z_{\rN,[1]}^e$-symmetry charge of $W_e$, measured by the topological number of linking between $W_e$ and $U_e$,
now flips from $ n \in \Z_\rN$ to $ -n = \rN-n  \in \Z_\rN$. 
Since the charge operator of $\Z_{\rN,[1]}^e$ symmetry, $U_e$, is flipped thus does not commute under the $\Z_2^T$ symmetry,
this effectively defines the semi-direct product in a dihedral group like structure of $\Z_{\rN,[1]}^e \rtimes \Z_2^T$.
}

\noindent
%\cblue
{$\bullet$ The commutative nature $\Z_{\rN,[1]}^e \times \Z_2^{CT}$ is due to that the
$\Z_2^{CT}$ flips 1-Wilson loop $W_e=\Tr_{\text{R}}( \text{P} \exp(\ii \oint a)) \to W_e^\dagger= W_e^{-1}$,
while $\Z_2^{CT}$ keeps the 2-surface $U_e \to U_e$ invariant.
We can see that the $\Z_2^{CT}$ and $\Z_2^{T}$ flips the 1-loop and 2-surface oppositely.
Thus, the 1-form $\Z_{\rN,[1]}^e$-symmetry charge of $W_e$, measured by the topological number of linking between $W_e$ and $U_e$,
again flips from $ n \in \Z_\rN$ to $ -n = \rN-n \in \Z_\rN$. 
But the charge operator of $\Z_{\rN,[1]}^e$ symmetry, $U_e$, is invariant thus does commute under the $\Z_2^{CT}$ symmetry,
this effectively defines the direct product in a group structure of $\Z_{\rN,[1]}^e \times \Z_2^{CT}$.
}

\noindent
%\cblue
{$\bullet$ The non-commutative nature $\Z_{\rN,[1]}^e \rtimes \Z_2^C$ is due to that the
$\Z_2^C$ in \eq{eq:C} flips $W_e=$
$\Tr_{\text{R}}( \text{P} \exp(\ii \oint a))$ $\to$ $\Tr_{\text{R}}( \text{P} \exp(\ii (-\oint a^*))$
$=\Tr_{\text{R}}( \text{P} \exp(\ii (\oint a))^*)$
$=\Tr_{\text{R}}( \text{P} \exp(\ii (\oint a))^\dagger)$
$= W_e^\dagger= W_e^{-1}$,
while $\Z_2^C$ also flips the 2-surface $U_e \to U_e^\dagger= U_e^{-1}$ for the same reason.
Thus, the 1-form $\Z_{\rN,[1]}^e$-symmetry charge of $W_e$, measured by the topological number of linking between $W_e$ and $U_e$,
is invariant under $\Z_2^C$. 
But the charge operator of $\Z_{\rN,[1]}^e$ symmetry, $U_e$, is flipped thus does not commute under the $\Z_2^C$ symmetry,
this effectively defines the semi-direct product in a dihedral group like structure of $\Z_{\rN,[1]}^e \rtimes \Z_2^C$.}
{Note that the potentially related dihedral group structure of Yang-Mills theory under a dimensional reduction to $\R^3 \times S^1$ is recently explored in
\cite{Gaiotto2017yupZoharTTT, Aitken2018kky1804.05845}.}

When N = 2, it is obvious that we simply have the direct product $\Z_{2,[1]}^e \times \Z_2^T$ as \eq{eq:YM-SU2-Sym-T}.

We can rewrite \eq{eq:YM-Sym}'s 0-form and 1-form symmetries
\bea \label{eq:YM-Sym-2}
\boxed{
\Z_2^{CT} \times (\Z_{\rN,[1]}^e \rtimes  \Z_2^C )
}.
\eea
\\
We can rewrite \eq{eq:YM-SU2-Sym-T} as
\bea \label{eq:YM-SU2-Sym}
\boxed{
\Z_2^{CT} \times \Z_{2,[1]}^e  }.
\eea
%\cblue
{
It is related to the fact that for SU(2) YM theory, the charge conjugation $\Z_2^C$ is inside the gauge group, 
because there is no outer automorphism of SU(2) but only an inner automorphism ($\Z_2$) of SU(2).
For N=2, the charge conjugation matrix 
$\cC_{\SU(2)}=e^{\im\frac{\pi }{2}\sigma^2}\in \SU(2)$ is a matrix that provides an isomorphism map between fundamental representations 
of $SU(2)$ and its complex conjugate. 
We have $\cC_{SU(2)}\sigma^{{\al}} \cC_{SU(2)}^{-1}=-{(\sigma^{{{\al}}})^*}$. 
Let $U_{\SU(2)}$ be the unitary ${\SU(2)}$ transformation on the ${\SU(2)}$-fundamentals, so 
$
\cC_{SU(2)} U_{\SU(2)} \cC_{SU(2)}^{-1}=\exp(-\imth \frac{\theta}{2} (\sigma^{{{\al}}})^*)=U_{\SU(2)}^*,
$
which is a $\Z_2$ inner automorphism of SU(2).
}

%\cred
{We propose that {the structure of \eq{eq:YM-Sym}, \eq{eq:YM-SU2-Sym-T}, \eq{eq:YM-Sym-2} and \eq{eq:YM-SU2-Sym} can be regarded as an analogous 2-group.}
It can be helpful to further organize this 2-group like data into the context of \cite{Benini2018reh1803.09336}.}

%\cred{The structure of \eq{eq:YM-Sym}, \eq{eq:YM-SU2-Sym}, \eq{eq:YM-Sym-2} and \eq{eq:YM-SU2-Sym} can be regarded as a 2-group.}

\subsection{2d $\mathbb{CP}^{\rN-1}$-sigma model}

Here we consider 2d $\mathbb{CP}^{\rN-1}$-model \cite{Witten1978bc}, which is
a 2d sigma model with a target space $\mathbb{CP}^{\rN-1}$.
The $\mathbb{CP}^{\rN-1}$ model is a 2d toy model which mimics some similar behaviors of
4d YM theory: dynamically-generated energy gap and asymptotic freedom, %asymptotically-free, 
etc.
We will focus on 2d $\mathbb{CP}^{\rN-1}$-model at $\theta=\pi$.

\subsubsection{Related Models}
The path integral %(partition function) 
of 2d $\mathbb{CP}^{\rN-1}$-model is
\begin{widetext}
\begin{multline}
\label{eq:CP-pi}
\bZ^{\text{$2$d}}_{{\CP^{\rN-1}}}
\equiv\int [{\cal D} {z}] [{\cal D} {\bar z}] [{\cal D} a'] \exp\big( - S_{{\CP^{\rN-1}}+\theta}[z, {\bar z}, a'] \big)
\equiv \int  [{\cal D} {z}_j][{\cal D} {\bar z}_j]   [{\cal D} {a'}] \exp\big( - S_{{\CP^{\rN-1}}}[z_j, {\bar z}_j, a'] \big)\exp\big( - S_{\theta}[a'] \big)
\\
\equiv
\int
 [{\cal D} {z}] [{\cal D} {\bar z}]
 [{\cal D} {a'}] 
\delta(|z|^2-r^2) \exp\big(   \Big(- \int_{M^2} d^2x (\frac{1}{g'^2} |D_{a',\mu} z |^2)
+  \int_{M^2} (\frac{  \ii \theta}{2 \pi}  F_{a'}) \Big)\big).
\end{multline}
\end{widetext}
The ${z}_j \in \C$ is a complex-valued  field variable, with an index $j=1,\dots, \rN$.
(In math, the ${z}_j \sim c {z}_j$, identified by any complex number $c \in \C^{\times}$ excluding the origin,
is known as the homogeneous coordinates of the target space ${\CP^{\rN-1}}$.)
The delta function imposes a constraint: 
$|z|^2 \equiv \sum_{j=1}^{\rN} |z_j|^2=r^2$, here $r \in \R$ specifies the size of $\CP^{\rN-1}$.
The delta function $\delta(|z|^2-r^2)$ may be also replaced by a potential term, such
as the $\frac{\lambda}{4}(|z|^2-r^2)^2$ potential, at large $\lambda$ coupling energetically constraining $|z|^2=r^2$.
Here $|D_{a',\mu} z |^2 \equiv (D_{a',\mu} z )^\dagger (D_{a'}^\mu  z )$.
Here $F_{a'}=\dd a'$ is the U(1) field strength of $a'$.

For 2d $\CP^{1}$ (2d $\CP^{\rN-1}$ at $\rN=2$),
we can rewrite the model as the O(3) nonlinear sigma model (NLSM).
The O(3) NLSM is parametrized by an O(3) $=\SO(3) \times \Z_2$ N\'eel vector
$\vec {n}=({n}_1,{n}_2, {n}_3)$,
which {obeys}
\be
\vec {n} = \frac{1}{r^2} z^\dagger_i \vec \sigma_{ij} z_j
\ee 
with
$|\vec {n}|^2=1$ and Pauli matrix 
$\vec \sigma=(\sigma^1,\sigma^2,\sigma^3 )$. 
It is called N\'eel vector because the 2d $\CP^{1}$ or O(3) NLSM
describes the Heisenberg anti-ferromagnet phase of quantum spin system \cite{Haldane1982rjPLA, Haldane1983ruPRL}.
To convert \eq{eq:CP-pi} to \eq{eq:O3-pi}, notice that we do not introduce the kinetic Maxwell term $|F_{a'}|^2$ for the U(1) photon {field} ${a'}$,
thus ${a'}$ is an auxiliary field, that can be integrated out and \eq{eq:CP-pi} is constrained by the EOM:
\be
a_\mu'= - \frac{\ii}{r^2} \sum_{j=1}^2 \bar z_j \partial_{\mu} z_j
= \frac{\ii}{2r^2} \sum_{j=1}^2 (  z_j \partial_{\mu} \bar z_j-\bar z_j \partial_{\mu} z_j )
,\ee 
and we can derive:
\bea
|D_{a',\mu} z |^2 &=&  \sum_{j=1}^2 |D_{a',\mu} z_j |^2 =((\frac{r}{2 })^2 \partial_{\mu}  \vec n \cdot \partial^{\mu}  \vec n), \\
\frac{  \ii \theta}{2 \pi}  \epsilon^{\mu \nu} \partial_\mu {a'}_\nu&=& (\frac{  \ii \theta}{8 \pi}  \epsilon^{\mu \nu} \vec{n} \cdot ( \partial_{\mu}   \vec{n} \times  \partial_{\nu}   \vec{n})).
\eea
Then we rewrite $\bZ^{\text{$2$d}}_{{\CP^{1}}}$ as $\bZ^{\text{$2$d}}_{{\tO(3)}}$ of the O(3) NLSM path integral: 
\begin{widetext}
\begin{multline}
\label{eq:O3-pi}
\bZ^{\text{$2$d}}_{{\tO(3)}}
\equiv\int [{\cal D} \vec {n}]  \delta(|\vec {n}|^2-1) \exp\big( - S_{{\tO(3)}+\theta}[\vec {n}] \big)
\\
\equiv
\int [{\cal D} \vec {n}]  \delta(|\vec {n}|^2-1)
 \exp\big(   \Big(- \int_{M^2} d^2x (\frac{1}{g''^2} \partial_{\mu}  \vec n \cdot \partial^{\mu}  \vec n)
+  \int_{M^2} (\frac{  \ii \theta}{8 \pi}  \epsilon^{\mu \nu} \vec{n} \cdot ( \partial_{\mu}   \vec{n} \times  \partial_{\nu}   \vec{n})) \Big)\big).
\end{multline}
%$$\delta_{ab}\sigma^a_{ij}\sigma^b_{kl}=\delta_{ij}\delta_{kl}-2\epsilon_{ik}\epsilon_{jl}=2\delta_{il}\delta_{jk}-\delta_{ij}\delta_{kl}$$
\end{widetext}
Note that 
$(\frac{  \ii \theta}{8 \pi}  \epsilon^{\mu \nu} \vec{n} \cdot ( \partial_{\mu}   \vec{n} \times  \partial_{\nu}   \vec{n}))
=(\frac{  \ii \theta}{4 \pi}  \vec{n} \cdot ( \partial_{\tau}   \vec{n} \times  \partial_{x}   \vec{n}))$.
The O(3) NLSM coupling ${g''}$ in
$((\frac{1}{g''})^2 \partial_{\mu}  \vec n \cdot \partial^{\mu}  \vec n)=((\frac{r}{2 g'})^2 \partial_{\mu}  \vec n \cdot \partial^{\mu}  \vec n)$ 
is related to the $\CP^{1}$ model via ${g''}= ({2 g'}/{r})$, which is inverse proportional to the radius size of the 2-sphere $\CP^{1}=S^2$.

In fact, the UV high energy theory of $\bZ^{\text{$2$d}}_{{\CP^{1}}}=\bZ^{\text{$2$d}}_{{\tO(3)}}$
is known to be, in Renormalization Group (RG) flow, flowing to the same IR conformal field theory  CFT from another UV model via the
$\SU(2)_1$-WZW model (Wess-Zumino-Witten model \cite{Wess1971yuWZ, Witten1983twGlobalACA, Witten1983arNab}).
The $\SU(\rN)_k$-WZW model at the level $k$ is
\begin{widetext}
%\begin{multline}
\bea
\label{eq:WZW-2}
\bZ^{\text{WZW}}_{{\SU(\rN)_{k}}}= \int [{\cal D} U]  [{\cal D} U^\dagger]  \exp(-\frac{k}{8 \pi} \int_{M^2} \dd^2x \Tr\big(\partial_\mu U^\dagger \partial^\mu U \big)
+\frac{\ii k}{12 \pi}\int_{M'^3}  \Tr\big( (U^\dagger \dd  U)^3 \big) ),
\eea
%\end{multline}
\end{widetext}
with ${M^2=\partial({M'^3})}$.
At $\rN =2$, the UV theory of $\bZ^{\text{$2$d}}_{{\CP^{1}}}=\bZ^{\text{$2$d}}_{{\tO(3)}}$ flows to
this 2d CFT called the $\SU(2)_1$-WZW CFT at IR. 
The global symmetry can be preserved at IR.

For the general 2d $\CP^{\rN-1}$-model, its global symmetry can also be embedded into another 
$\SU(\rN)_1$-WZW model at UV; although unlike $\rN =2$ case,
$\CP^{\rN-1}$-models for $\rN >2$ conventionally and generically do not flow to an IR CFT.
 For $\rN >2$, there exist UV-symmetry preserving relevant deformations
 driving the RG flow away from an IR CFT.
 The global symmetry may be spontaneously broken, and the vacua can be gapped and/or degenerated.
See for example \cite{Affleck1987chHaldane, Affleck1988wz} and references therein.
 
\subsubsection{Global symmetry:\\
 $\Z_2^{CT} \times \PSU(2) \times \Z_2^{C}$ and
 $\Z_2^{CT} \times (\PSU(\rN) \rtimes \Z_2^{C'})$}

\label{sec:CP-global}

Let us check the global symmetry of 2d $\CP^{\rN-1}$-model.\\

\noindent
\underline{Continuous global symmetry:}
In \eq{eq:CP-pi}, it is easy to see the continuous global SU(N) transformation rotating between the SU(N) fundamental complex scalar multiplet $z_j$ 
{via}
$
z \to V z=V_{ij} z_j= (e^{\ii \Theta^\alpha T^\alpha})_{ij} z_j
$
which has its $\Z_\rN$-center subgroup being gauged away by the U(1) gauge field $a'$. 
So we have the net continuous global symmetry 
\bea
\PSU(\rN)=\SU(\rN)/\Z_\rN=\U(\rN)/\U(1),
\eea 
which acts on gauge invariant object faithfully (e.g. the
 $\PSU(2)=\SO(3)$ symmetry can act on the gauge-invariant $\vec n$ vector in the 2d $\CP^{1}$-model or O(3) NLSM faithfully).\\

Now we explore 2d $\CP^{\rN-1}$-model's discrete global symmetry as finite groups.\\

\noindent
\underline{Discrete global symmetry for $\rN=2$:}
\\
\noindent 
%\cred
{$\bullet$ $\Z_2^T$-time-reversal symmetry, there is a $\cT$-symmetry allowed for any $\theta$, acting on fields and coordinates of \eq{eq:CP-pi} and \eq{eq:O3-pi},
whose transformations become
\bea
\Z_2^T: \;\;&& z_i  \to \epsilon_{ij}  \bar z_j , %z  \to \ii \sigma^y  \bar z
\quad  \vec n \to -\vec n, \; \nn\\
&& (a'_t, a'_x) \to (a'_t, -a'_x) , \quad (t,x) \to (-t,x).
\eea
Here a Pauli matrix $\sigma^2_{ij}$ gives $\epsilon_{ij} =\ii \sigma^2_{ij}$. \\
 \noindent
$\bullet$  $\Z_2^x$-translation symmetry ($\equiv \Z_2^{C}$ as an effective charge conjugation symmetry $\Z_2^{C}$ ) allowed for $\theta= 0, \pi$, acts as}
\bea
\Z_2^x \; (\equiv \Z_2^{C}):  \;\;&& z_i  \to \epsilon_{ij}  \bar z_j  ,\quad  \vec n \to -\vec n, \; \nn\\
&& (a'_t, a'_x) \to -(a'_t, a'_x) , \quad (t,x) \to (t,x). \quad
\eea
It is easy to understand the role of $\Z_2^x$-translation on the 
UV-lattice model of Heisenberg anti-ferromagnet (AFM) phase of quantum spin system \cite{Haldane1982rjPLA, Haldane1983ruPRL}.
Its AFM Hamiltonian operator is
\bea \label{eq:H}
\hat H = \sum_{\langle i,j\rangle} |J|\hat{\vec S}_i \cdot \hat{\vec S}_j + \dots 
\eea where ${\langle i,j\rangle}$
is for the nearest-neighbor lattice sites $i$ and $j$'s AFM interaction between spin operators $\hat{\vec S}$,
and $|J|>0$ is the AFM coupling.
So $\Z_2^x$-translation flips the spin orientation, also flips the AFM's N\'eel vector $\vec n \to -\vec n$.\\
 %
 %\cred
 {
  \noindent
$\bullet$  $\Z_2^P$-parity symmetry allowed for $\theta= 0, \pi$ acts as
\bea \label{eq:Z2parity}
\Z_2^P:  \;\;&& z_i  \to  + z_j  ,\quad  \vec n \to + \vec n, \; \nn\\
&& (a'_t, a'_x) \to (a'_t, -a'_x) , \quad (t,x) \to (t,-x). \quad
\eea
\noindent
$\bullet$  $\cC\cP\cT$-symmetry (as $\Z_2^{CPT}=\diag(\Z_2^{C}, \Z_2^{P}, \Z_2^{T})$, the diagonal symmetry generator of $\cC$, $\cP$ and $\cT$) allowed for any $\theta$ acts as
\bea \label{eq:Z2parity}
\Z_2^{CPT}:  \;\;&& z_i  \to  + z_j  ,\quad  \vec n \to + \vec n, \; \nn\\
&& (a'_t, a'_x) \to -(a'_t, a'_x) , \quad (t,x) \to (-t,-x). \quad
\eea
}
 \noindent
$\bullet$ %${\Z_2^{\charge}}$-
${\Z_2^{C'}}$-another charge conjugation 
symmetry of $\CP^{\rN-1}$-model allowed for $\theta= 0, \pi$ acts as
\bea \label{eq:Z2charge}
%\Z_2^{\charge}
{\Z_2^{C'}}:  \;\;&& z_i  \to   \bar z_i ,\quad   (n_1,n_2,n_3) \to (n_1, -n_2,n_3), \; \nn\\
&& (a'_t, a'_x) \to -(a'_t, a'_x) , \quad (t,x) \to (t,x).
\eea
 \noindent
$\bullet$ 
${\Z_2^{C' T}}$-symmetry
allowed for $\theta= 0, \pi$ acts as
\bea \label{eq:Z2CT}
{\Z_2^{C' T}}:  \;\;&& z_i  \to \epsilon_{ij}   z_j,\quad   (n_1,n_2,n_3) \to (-n_1, n_2, -n_3), \; \nn\\
&& (a'_t, a'_x) \to (-a'_t, a'_x) , \quad (t,x) \to (-t,x).
\eea

\noindent
$\bullet$  $\Z_2^{xT}$-symmetry ($\equiv \Z_2^{CT}$) as another choice of time-reversal allowed for $\theta= 0, \pi$, acts as
\bea
\Z_2^{xT} \;(\equiv \Z_2^{CT}):  \;\;&& z_i  \to    z_i  ,\quad  \vec n \to \vec n,  \\
&& (a'_t, a'_x) \to (-a'_t, a'_x) , \quad (t,x) \to (-t,x). \quad \nn
\eea

Next we check the commutative relation between the above continuous PSU(N) and the discrete symmetries
%we check the global SU(N) transformation acts on the  SU(N) fundamental multiplet $z_j$ as 
%$z \to V z=V_{ij} z_j= (e^{\ii \Theta_a^\alpha T^\alpha})_{ij} z_j$.
 
 For N = 2, we see that $\Z_2^T$ commutes with $\PSU(2)$, because
 $\cT V z = \ii \sigma^2 (V z)^*=\ii \sigma^2 V^* \bar z
 =V \ii \sigma^2 \bar z = V \cT z$.
 Similarly,  $\Z_2^x$ commutes with $\PSU(2)$.
 So, $\Z_2^{xT}$ commutes with $\PSU(2)$.
 We see that $\Z_2^{C'}$ does not commute with $\PSU(2)$, because
 $\cC' V z =(V z)^*= V^* \bar z$ while $V \cC' z= V \bar z$.
 Therefore, %\cred
 {$\Z_2^{C' T}=\text{diag}(\Z_2^{C'}, \Z_2^T)$} also does not commute with $\PSU(2)$.\\

%\cred
{Similar to \eq{eq:T-CT-mix} for 4d YM theory, here for 2d $\CP^{N-1}$ model, 
we can check that whether the $\frac{\theta}{2 \pi}(F_{a'})$ term %\ccred
{flips sign to $-\frac{\theta}{2 \pi}(F_{a'})$} 
under any of the discrete symmetries.
Among the $\cC, \cP$, and $\cT$ for 2d $\CP^{N-1}$ model, 
only the $\cT$ \emph{does not} flip the ${\theta}$ term
and $\cT$ is a good global symmetry for all ${\theta}$ values.
So each of the 
\bea \label{eq:T-CT-mix-Sigma}
\text{$\cC$, $\cP$, $\cC \cT$ and $\cP \cT$}, 
\eea
plays the similar role for the 2d anomalies at $\theta=\pi$. 
Only %$\cT$,  $\cC \cP$ and  $\cC \cP \cT$ 
\bea \label{eq:T-CP-mix-Sigma}
\text{$\cT$, $\cC \cP$, and $\cC \cP \cT$}, 
\eea
are good symmetries for all $\theta$.}\\

\noindent
\underline{Global symmetry for $\rN=2$:}\\
Overall, for 2d $\CP^1$ model at $\theta=0,\pi$, we can combine the above to get the full
0-form global symmetries 
\bea \label{eq:CP1-full-sym}
\boxed{
\Z_2^T \times \PSU(2) \times \Z_2^{x}=\Z_2^T \times \tO(3)
},
\eea
which is the same as $$\Z_2^T \times \PSU(2) \rtimes {\Z_2^{C'}}$$ %\Z_2^{\charge}$$ %$=\Z_2^T \times \tO(3)$, 
with a semi-direct product
``$\rtimes$'' since $\PSU(2)$ and ${\Z_2^{C'}}$ %$\Z_2^{\charge}$ 
do not commute.

%In some sense, 
It is very natural  to regard $\Z_2^{xT}$-symmetry as the \emph{new} $\Z_2^{CT}$-symmetry,
because it flips the time coordinates $t \to -t$, but it does not complex conjugate the $z$. 
So we may define\footnote{
\label{ft:CT}
Above we discuss
$ \Z_2^{CT} \equiv \Z_2^{xT}$ and $\Z_2^{T}$  both commute with the $\PSU(2)$ (also $\SU(2)$) for bosonic systems (bosonic QFTs).
Indeed, the 
$ \Z_2^{CT}$ and $\Z_2^{T}$ reminisce the discussion of  \cite{2017arXiv171111587GPW} (e.g. Sec.~II),
for the case including the fermions (with the fermion parity symmetry ${\Z_2^F}$ acted by $(-1)^F$), 
we have the natural $\Z_2^{CT}$-time-reversal symmetry, without taking complex conjugation on the matter fields, 
which gives rise to  the full symmetry
$\frac{\text{Pin}^+\times \SU(2)}{{\Z_2^F}}$;
while the other $\Z_2^{T}$-time-reversal symmetry, involving complex conjugation on the matter fields, gives rise to  
$\frac{\text{Pin}^-\times \SU(2)}{{\Z_2^F}}$.
}
\bea
\Z_2^{CT} \equiv \Z_2^{xT}. 
\eea
Similarly, we may regard the $\Z_2^{x}$-translation as a \emph{new} charge conjugation symmetry $\Z_2^{C} \equiv \Z_2^{x}$.
 
Therefore, 0-form global symmetries \eq{eq:CP1-full-sym} can  also be
\bea \label{eq:CP1-full-sym-CT}
&&\boxed{
\Z_2^{CT} \times \PSU(2) \times \Z_2^{x} \equiv \Z_2^{CT} \times \PSU(2) \times \Z_2^{C} 
} \nn \\
&&\equiv
\boxed{ \Z_2^{CT} \times \tO(3)
}.
\eea

\noindent
\underline{Global symmetry for $\rN>2$:}\\
For 2d $\CP^{\rN-1}$ model \eq{eq:CP-pi} at $\theta=0,\pi$, $\rN>2$, 
we follow the above discussion and the footnote \ref{ft:CT},
we again can define a natural definition of $\Z_2^{CT}$ (without involving the complex conjugation of $z$ fields).
Then we have instead the full
0-form global symmetries: 
\bea \label{eq:CPN-full-sym}
\boxed{
\Z_2^{CT} \times (\PSU(\rN) \rtimes \Z_2^{C'}) %\Z_2^{\charge},
}
,
\eea
where again $\Z_2^{C}$ %$\Z_2^{\charge}$ 
acts on $z_i  \to   \bar z_i$,  $a'_\mu \to - a'_\mu$ and $(t,x) \to (t,x)$ as \eq{eq:Z2charge}. 

%\cred{inner/outer automorphism}

We remark that the SU(2) (or N = 2 for $\CP^{1}$ model) is special because its
order-2 automorphism is an inner automorphism. 
The SU(2) fundamental representation is equivalent to its conjugate. 
This is related to the fact that
both $\Z_2^{CT}$ and $\Z_2^{T}$ can commute with the SU(2) or PSU(2),
also the remark we made in the footnote \ref{ft:CT}.

For SU(N) %\ccred
{with $\rN > 2$, % (for $\CP^{\rN-1}$ model) 
we gain an}
order-2 automorphism as an outer automorphism, which is 
the $\Z_2$ symmetry of Dynkin diagram A${}_{N-1}$, swapping fundamental with anti-fundamental representations. 
Although we have $\Z_2^{CT} \times\PSU(\rN) $ in \eq{eq:CPN-full-sym}, we would have
$\Z_2^{T} \ltimes \PSU(\rN)$ for $\rN > 2$.
See related and other detailed discussions in \cite{2017arXiv171111587GPW}.

%\ccred
{The above we have considered the global symmetry (focusing on the internal symmetry, and the discrete sector of the spacetime symmetry) 
without precisely writing down their continuous spacetime symmetry group part.}
In \Sec{sec:cobo-top}, we like to write down the ``full'' global symmetry including the spacetime symmetry group.

\section{Cobordisms, Topological Terms, and Manifold Generators:
Classification of All Possible Higher 't Hooft Anomalies}
\label{sec:cobo-top}

\subsection{Mathematical preliminary and co/bordism groups}

\label{sec:A}

Since we have obtained the full global symmetry $G$ (including the 0-form and higher symmetries) of 4d YM and 2d $\CP^{\rN-1}$ model,
we can now use the knowledge that their 't Hooft anomalies are classified by
5d and 3d cobordism invariants of the same global symmetry \cite{Freed2016}.
Namely, we can classify the 't Hooft anomalies
by enlisting the complete set of all possible cobordism invariants
from their corresponding  
5d and 3d bordism groups,  whose 5d and 3d manifold generators endorsed with the $G$ structure.

To begin with, 
we should rewrite the global symmetries in previous sections (e.g.
(\eq{eq:YM-Sym}/\eq{eq:YM-Sym-2}), 
(\eq{eq:YM-SU2-Sym-T}/\eq{eq:YM-SU2-Sym}))
into the form of
\bea  \label{eq:Gall}
%{G_{\text{spacetime} }} \ltimes_{N_{\text{shared}}}  {\mathbb{G}_{\text{internal}}}
G\equiv ({\frac{{G_{\text{spacetime} }} \ltimes  {\mathbb{G}_{\text{internal}} }}{{N_{\text{shared}}}}}),
\eea
where the ${G_{\text{spacetime} }}$ is the spacetime symmetry,
the ${\mathbb{G}_{\text{internal}} }$ the internal symmetry,\footnote{ \label{footnote:wj}
Later we denote the probed background spacetime $M$ connection over the spacetime tangent bundle $TM$, e.g. as
$w_j(TM)$ where $w_j$ is $j$-th Stiefel-Whitney (SW) class  \cite{milnor1974characteristic}.
We may also denote the probed background internal-symmetry/gauge connection over the principal bundle $E$, e.g. as
$w_j(E)=w_j(V_{{\mathbb{G}_{\text{internal}} }})$ where $w_j$ is also $j$-th SW class. In some cases, we may 
alternatively denote the latter as $w_j'(E)=w_j'(V_{{\mathbb{G}_{\text{internal}} }})$.
} 
%(for topological terms, usually can be gauged if needed),
the $\ltimes$ is a semi-direct product specifying a certain ``twisted'' operation (e.g. due to the symmetry extension from ${\mathbb{G}_{\text{internal}} }$ by ${G_{\text{spacetime} }}$)
and the ${N_{\text{shared}}}$ is the shared common normal subgroup symmetry between the two numerator groups.

The theories and their 't Hooft anomalies that we concern are in $d$d QFTs (4d YM and 2d $\CP^{\rN-1}$-model), 
but the topological/cobordism invariants are defined in the $D$d = $(d+1)$d manifolds.
The manifold generators for the bordism groups are actually the closed $D$d = $(d+1)$d manifolds.
We should clarify that although there can be 't Hooft anomalies for $d$d QFTs so ${\mathbb{G}_{\text{internal}} }$ may not be gauge-able, 
the SPTs/topological invariants defined in the closed $D$d = $(d+1)$d manifolds 
actually have ${\mathbb{G}_{\text{internal}} }$ always gauge-able
in that $D$d = $(d+1)$d.\footnote{This idea has been pursued to study the vacua of YM theories, for example, in \cite{2017arXiv171111587GPW} and 
references therein. See more explanations in \Sec{sec:con}'s \eq{eq:QFT-gauge-SPT}
}
This is related to the fact that in condensed matter physics, we say that the \emph{bulk} $D$d = $(d+1)$d SPTs
has an onsite local internal ${\mathbb{G}_{\text{internal}} }$-symmetry, thus this 
${\mathbb{G}_{\text{internal}} }$ must be gauge-able.

The \emph{new} ingredient in our present work slightly going beyond the cobordism theory of \cite{Freed2016} 
is that the ${\mathbb{G}_{\text{internal}} }$-symmetry may not only be an ordinary 0-form global symmetry,
but also include higher global symmetries.
The details of our calculation for such ``\emph{higher-symmetry-group cobordism theory}'' are provided in \cite{W2}.

Based on a theorem of Freed-Hopkin \cite{Freed2016} and an extended generalization that we propose \cite{W2},
there exists a 1-to-1 correspondence between  ``the invertible topological quantum field theories (iTQFTs) with symmetry (including higher symmetries)'' 
and ``a cobordism group.''
In condensed matter physics, this means that there is  
a 1-to-1 correspondence between ``the symmetric invertible topological order with symmetry (including higher symmetries)' that can be regularized on a lattice in its own dimensions' 
and ``a cobordism group,'' 
at least at lower dimensions.\footnote{
%\cred
{We have used a mathematical fact that all smooth and differentiable manifolds are triangulable manifolds, based on Morse theory. 
On the contrary, triangulable manifolds are smooth manifolds at least for dimensions up to $D=4$ (i.e. the ``if and only if'' statement is true below $D\leq 4$). 
%, which is enough for our purpose.
The concept of piecewise linear (PL) and smooth structures are equivalent in dimensions $D\leq 6$.
Thus all symmetric iTQFT classified by the cobordant properties of smooth manifolds have a triangulation (thus a lattice regularization) on a simplicial complex (thus a UV competition on a lattice). 
This implies a correspondence between ``the symmetric iTQFTs (on smooth manifolds)'' and ``the symmetric invertible topological orders (on triangulable manifolds)''
for $D\leq 4$.
See a recent application of this mathematical fact on the lattice regularization 
of symmetric iTQFTs and symmetric  invertible topological orders in \cite{Wang2018caiSM1809.11171} for various Standard Models of particle physics.
}}
More precisely, it is a 1-to-1 correspondence (isomorphism ``$\cong$'') between the following two well-defined ``mathematical objects'' (these ``objects'' turn out to be abelian groups):
\bea \label{eq:thm}
&&\left\{\begin{array}{ccc}\text{Deformation classes of reflection positive }\\\text{invertible } D
\text{-dimensional extended}\\
\text{topological field theories (iTQFT) with} \\
\text{symmetry group }
{\frac{{G_{\text{spacetime} }} \ltimes  {\mathbb{G}_{\text{internal}} }}{{N_{\text{shared}}}}}
\end{array}\right\}\nn \\
&&\cong[MT({\frac{{G_{\text{spacetime} }} \ltimes  {\mathbb{G}_{\text{internal}} }}{{N_{\text{shared}}}}}),\Sigma^{D+1}I\Z]_{\text{tors}}.
\eea
%{A little more explanation of the notation of $[MT({\frac{{G_{\text{spacetime} }} \ltimes  {\mathbb{G}_{\text{internal}} }}{{N_{\text{shared}}}}}),\Sigma^{D+1}I\Z]$:}
Let us explain the notation above:
 $MTG$ is the Madsen-Tillmann spectrum \cite{MadsenTillmann4} of the group $G$,
$\Sigma$ is the suspension, $I\Z$ is the Anderson dual spectrum, and ${\text{tors}}$ means taking only the finite group sector (i.e. the torsion group).

Namely, we classify the deformation classes of symmetric iTQFTs and also symmetric invertible topological orders (iTOs), via
this particular cobordism group
\bea
\Omega^{D}_{G} &\equiv&
\Omega^{D}_{({\frac{{G_{\text{spacetime} }} \ltimes  {\mathbb{G}_{\text{internal}} }}{{N_{\text{shared}}}}})} \nn\\
&\equiv&
\TP_D(G)\equiv[MTG,\Sigma^{D+1}I\Z].
\eea
 by classifying the cobordant relations of smooth, differentiable and triangulable
manifolds with a stable $G$-structure, via associating them to the homotopy
groups of Thom-Madsen-Tillmann spectra \cite{thom1954quelques,MadsenTillmann4},
given by a theorem in \Ref{Freed2016}. Here TP means the abbreviation of ``Topological Phases''
classifying the above symmetric iTQFT,
where our notations follow \cite{Freed2016} and \cite{W2}.
(For an introduction of the mathematical background for physicists, the readers can consult the Appendix A of \cite{2017arXiv171111587GPW}.)

Moreover, there are only the discrete/finite $\Z_n$-classes of the non-perturbative global 
't Hooft anomalies for YM and {$\CP^{\rN-1}$} model (so-called the \emph{torsion} group for $\Z_n$-class); 
there is no $\Z$-class perturbative anomaly (so-called the free class) for our QFTs.
So, we concern only the torsion group part of data in \eqn{eq:thm},
this is equivalent for us to simply look at the bordism group:
\bea
\Omega_{D}^{G} \equiv \Omega_{D}^{({\frac{{G_{\text{spacetime} }} \ltimes  {\mathbb{G}_{\text{internal}} }}{{N_{\text{shared}}}}})},
\eea
in order to classify all the 't Hooft anomalies for YM and {$\CP^{\rN-1}$} model.
  
Therefore, below we focus on the unoriented bordism groups and also some oriented bordism groups, replacing the orthogonal O group to a special orthogonal SO group:

Let $X$ be a fixed topological space, define the unoriented cobordism group
\bea
\Omega_D^{\tO}(X)&=&\{(M,f)\mid M\text{ is a closed }D\text{-manifold, }\nn\\
&&f:M\to X\text{ is a map}\}/\sim
\eea
$(M,f)\sim (M',f')$ if there exists a compact $(D+1)$-manifold $N$ and a map $h:N\to X$ such that $\partial N=M\sqcup M'$, and $h\mid_{M}=f$, $h\mid_{M'}=f'$.

Let $X$ be a fixed topological space, define the oriented cobordism group
\bea
\Omega_D^{\SO}(X)&=&\{(M,f)\mid M\text{ is a closed oriented }D\text{-manifold, }\nn\\
&&f:M\to X\text{ is a map}\}/\sim
\eea
$(M,f)\sim (M',f')$ if there exists a compact oriented $(D+1)$-manifold $N$ and a map $h:N\to X$ such that $\partial N=M\sqcup M'$, the orientations of $M$ and $M'$ are induced from that of $N$, and $h\mid_{M}=f$, $h\mid_{M'}=f'$.

Here $\sqcup$ is the disjoint union.
%
%\bea
%&&\Omega^{\tO}_{D}(X)=\{\text{a pair }(M,f)\text{ where }M\text{ is a closed }\notag\\
%&&D\text{-manifold and }f:M\to X\text{ is a map}\}/\text{bordism}.\label{eq:bordism} \quad
%\eea
%where bordism is an equivalence relation, namely, $(M,f)$ and $(M',f')$ are bordant if there exists a compact $D+1$-manifold $\CM$ and a map $h:\CM\to X$, 
%where $X$ is a generic topological space, such that the boundary of $\CM$ is the disjoint union of $M$ and $M'$, while we set $h|_M=f$ and $h|_{M'}=f'$.

In particular, when $X=\B^2\Z_n$, $f:M\to \B^2\Z_n$ is a cohomology class in $\H^2(M,\Z_n)$.
When $X=\B G$, with $G$ is a Lie group or a finite group (viewed as a Lie group with discrete topology), then $f:M\to \B G$ is a principal $G$-bundle over $M$.
%\cred
{To explain our notation, here $\B G$ is a classifying space of $G$, and $\B^2\Z_n$ is a higher classifying space (Eilenberg-MacLane space $K(\Z_n,2)$) of $\Z_n$.}\\

We have the following well-known facts:

\begin{itemize}
\item
 Unoriented cobordism groups are always $\Z_2$-vector spaces. 
 
\item

 $\Omega_D^{\SO}(X)$ is a subgroup of $\Omega_D^{\tO}(X)$ for $D\not\equiv0\mod4$. 
\end{itemize}

Our conventions in the following subsections are:
\begin{itemize}
\item
A map {between topological spaces} is always assumed to be continuous.
\item
For a top degree cohomology class with coefficients $\Z_2$, we often suppress an explicit integration over the manifold (i.e. pairing with the fundamental class $[M]$ with coefficients $\Z_2$), for example: 
$w_2(TM)w_3(TM)\equiv\int_Mw_2(TM)w_3(TM)$ where $M$ is a 5-manifold.

\item

%\cpurple
{The group operation in cobordism group is induced from the disjoint union of manifolds. 
%For any two $n$-manifolds $M$ and $M'$, the disjoint union $M\sqcup M'$ and the connected sum $M\#M'$ are cobordant.
}

\item
 If $\Omega_D^{\tO}(X)=\Z_2^r$, then the group homomorphisms $\phi_i:\Omega_D^{\tO}(X)\to\Z_2$ for $1\le i\le r$ form a complete set of cobordism invariants of $\Omega_D^{\tO}(X)$ if $\phi=(\phi_1,\dots,\phi_r):\Omega_D^{\tO}(X)\to\Z_2^r$ is a group isomorphism. 
 
 \item
  The elements of $\Omega_D^{\tO}(X)$ are manifold generators if their images in $\Z_2^r$ under $\phi$ generate $\Z_2^r$.

\end{itemize}

%(\eq{eq:YM-Sym}/\eq{eq:YM-SU2-Sym}), 
%(\eq{eq:YM-SU2-Sym-T}/\eq{eq:YM-Sym-2}))

In the following subsections, we consider the potential cobordism invariants/topological terms 
(5d and 3d [higher] SPTs for 4d YM and 2d $\CP^{\rN-1}$ model), and their manifold generators for bordism groups, as
the complete classification of all of their possible candidate higher 't Hooft anomalies.

First, we can convert the time reversal $\Z_2^{T'}$ ($\equiv \Z_2^{T}$ or $\Z_2^{CT}$) to the orthogonal O($D$)-symmetry group for such an underlying UV-completion of bosonic system (all gauge-invariant operators are bosons),
where the O($D$) is an extended symmetry group from SO($D$) via a short extension:
\bea
1 \to \SO(D) \to \tO(D) \to   \Z_2^{T'}  \to  1.
\eea
The $\SO(D)$ is the spacetime Euclidean rotational symmetry group for $D$d bosonic systems.\footnote{For the case of time-reversal symmetry, 
where there must be an underlying UV-completion of fermionic system (some gauge-invariant operators are fermions),
the more subtle time-reversal extension scenario is discussed in \cite{Freed2016} and \cite{2017arXiv171111587GPW}.}

Then we can easily list their converted full symmetry group $G$ and their relevant bordism groups, for 
SU(2) YM (\eq{eq:YM-SU2-Sym-T}/\eq{eq:YM-SU2-Sym}),
 SU(\rN) YM (\eq{eq:YM-Sym}/\eq{eq:YM-Sym-2}), 
$\CP^{1}$ model (\eq{eq:CP1-full-sym}/\eq{eq:CP1-full-sym-CT}),
and
$\CP^{\rN-1}$ model (\eq{eq:CPN-full-sym}),  
into the \eq{eq:Gall}'s form:
\begin{enumerate}[label=(\roman*)] %[label=$\star$.]
\item
{$\Omega _5^{{\rm O}}(\mathrm {B}^2\mathbb {Z}_2)  \equiv \Omega _5^{({\rm O}\times \mathrm{B} \mathbb {Z}_2)}$}: 
This is the bordism group 
for $\Z_2^{CT} \times \Z_{2,[1]}^e$ in \eq{eq:YM-SU2-Sym} without $\Z_2^C$, which we will study in \Sec{sec:B},
here \eq{eq:Gall}'s $G={{\rm O}(D)\times \mathrm{B} \mathbb {Z}_2}$ or
$G={{\rm O}(D)\times \Z_{2,[1]}^e}$.
\item
{$\Omega _3^{{\rm O}}(\mathrm {B}{\rm O}(3))$}:
This is the bordism group for
${\Z_2^{CT} \times \tO(3)}$ in \eq{eq:CP1-full-sym-CT}, which we will study in \Sec{sec:D},
here \eq{eq:Gall}'s $G={{\rm O}(D)  \times \tO(3)}$.
\item
{$\Omega _5^{{\rm O}}(\mathrm {B}^2\mathbb {Z}_4)$}:
This is the bordism group 
for $\Z_2^{CT} \times (\Z_{\rN,[1]}^e \rtimes  \Z_2^C )$ in \eq{eq:YM-Sym-2} at N = 4 without $\Z_2^C$, which we will study in \Sec{sec:E},
here \eq{eq:Gall}'s $G={{\rm O}(D)\times \mathrm{B} \mathbb {Z}_4}$ or
$G={{\rm O}(D)\times \Z_{4,[1]}^e}$.
\item
{$\Omega _5^{{\rm O}}(\mathrm {B}\mathbb {Z}_2\ltimes \mathrm {B}^2\mathbb {Z}_4) \equiv \Omega _5^{({\rm O}\times (\mathbb {Z}_2 \ltimes \mathrm {B}\mathbb {Z}_4)) }$ and $\Omega _5^{{\rm O}_{\ltimes }}(\mathrm {B}\mathbb {Z}_2 \ltimes \mathrm {B}^2\mathbb {Z}_4) \equiv \Omega _5^{({\rm O}\times \mathbb {Z}_2) \ltimes \mathrm {B}\mathbb {Z}_4 }$}:\\
The first is the bordism group with a $CT$-time reversal,
for $\Z_2^{CT} \times (\Z_{\rN,[1]}^e \rtimes  \Z_2^C )$
 in \eq{eq:YM-Sym-2} at N = 4, which we will study in \Sec{sec:F},
here \eq{eq:Gall}'s $G={{\rm O}(D)\times (\mathbb {Z}_2 \ltimes \mathrm {B}\mathbb {Z}_4)}$ or
$G={{\rm O}(D)\times  (\Z_2^C  \ltimes \Z_{4,[1]}^e)}$.\\
The second is actually the re-written bordism group with a $\cT$-time reversal,
for $\Z_{\rN,[1]}^e \rtimes (\Z_2^T \times \Z_2^C)$
at N = 4, here \eq{eq:Gall}'s $G={({\rm O}(D)\times \mathbb {Z}_2) \ltimes \mathrm {B}\mathbb {Z}_4 }$ or
$G={({\rm O}(D)\times \mathbb {Z}_2^C) \ltimes \Z_{4,[1]}^e }$.
But we will \emph{not} study this, since it is simply a more complicated re-writing of the same result of \Sec{sec:F}.
\item
{$\Omega _3^{{\rm O}}(\mathrm {B}(\mathbb {Z}_2\ltimes {\rm PSU}(4)))$}:
This is the bordism group for
$\Z_2^{CT} \times (\PSU(\rN) \rtimes \Z_2^{C'})$ in \eq{eq:CPN-full-sym} at N = 4,  which we will study in \Sec{sec:G},
here \eq{eq:Gall}'s $G={{\rm O}(D)  \times (\PSU(\rN) \rtimes \Z_2^{C'})}$.
\end{enumerate}

Based on the relation between bordism groups 
and their $D$d = $(d+1)$d cobordism invariants to the $d$d anomalies of QFTs,
below we may simply abbreviate
{``5d cobordism invariants for characterizing 4d YM theory's anomaly''}
as 
$$\text{``5d (Yang-Mills) terms.''}$$
We may simply abbreviate 
{``3d cobordism invariants for characterizing 2d $\CP^{\rN-1}$ model's anomaly''}
as 
$$\text{``3d ($\CP^{\rN-1}$) terms.''}$$

\subsection{$\Omega_5^{\tO}(\B^2\Z_2)$}
\label{sec:B}

Follow \Sec{sec:A}, now we enlist all possible 't Hooft anomalies of 4d pure SU(2) YM at $\theta=\pi$ (but when the $\Z_2^C$-background field is turned off) by 
obtaining the 5d cobordism invariants from bordism groups of (\eq{eq:YM-SU2-Sym-T}/\eq{eq:YM-SU2-Sym}).

We are given a 5-manifold $M^5$ and a map $f: M^5\to\B^2\Z_2$.
Here the map $f: M^5\to\B^2\Z_2$ is the 2-form $B=B_2$ gauge field in the YM gauge theory \eq{eq:SU2YM} (and \eqn{eq:SUNYMN} at N = 2). 
We like to obtain 
the bordism invariants of $\Omega_5^{\tO}(\B^2\Z_2)$. We find the bordism group  \cite{W2}:\footnote{
Interestingly, the oriented version of the bordism group $\Omega_5^{\SO}(\B^2\Z_2)$ has also been studied recently in a different context in  \cite{Kapustin2017jrc1701.08264}. 
Here we study instead the unoriented bordism group $\Omega_5^{\tO}(\B^2\Z_2)$ new to the literature  \cite{W2}.
%Appendix D
}
\bea
&&{\Omega_5^{\tO}(\B^2\Z_2)=\Z_2^4}, \\
&&\text{whose cobordism invariants are generated by}\nn\\
&&\left\{\begin{array}{l} 
B_2\cup\Sq^1B_2,\\ 
\Sq^2\Sq^1 B_2,\\ 
w_1(TM^5)^2\Sq^1 B_2,\\ 
w_2(TM^5)w_3(TM^5).
\end{array}\right.
\label{eq:bordism5OB2Z2}
\eea
Here $TM^5$ means the spacetime tangent bundle over $M^5$, see footnote \ref{footnote:wj}.
See \Ref{W2}, note that we derive on a 5d closed manifold,  
\bea
\Sq^2\Sq^1B_2&=&(w_2(TM^5)+w_1(TM^5)^2)\Sq^1B_2\nn\\
&=&(w_3(TM^5)+w_1(TM^5)^3)B_2, \\
w_1(TM^5)^2\Sq^1 B_2 &=&w_1(TM^5)^3B_2.
\eea 
%(see Appendix \ref{useful}).

%Since $\Sq^2\Sq^1B_2=(w_2(TM^5)+w_1(TM^5)^2)\Sq^1B_2$, we can rewrite the bordism invariants as $B_2\cup \Sq^1B_2$, $w_2(TM^5)\cup \Sq^1B_2$, $w_1(TM^5)^2\cup \Sq^1B_2$ and $w_2(TM^5)w_3(TM^5)$.

We have a group isomorphism %automorphism
\bea
&\Phi_1:&\Omega_5^{\tO}(\B^2\Z_2)\to\Z_2^4\notag\\
&&(M^5,B_2)\mapsto(B_2\cup \Sq^1B_2,\Sq^2 \Sq^1B_2,\notag\\
&&w_1(TM^5)^2\cup \Sq^1B_2,w_2(TM^5)w_3(TM^5)).
\eea
\begin{enumerate} [label=\textcolor{blue}{\arabic*}., ref={\arabic*}]
\item
Consider $(M^5,B)=(\RP^2\times\RP^3,\alpha\cup\beta)$.
Here $\alpha$ is the generator of $\H^1(\RP^2,\Z_2)$, and $\beta$ is the generator of $\H^1(\RP^3,\Z_2)$.

Since 
\bea
&&\left\{\begin{array}{l} 
\Sq^1(\alpha\cup\beta)=\alpha^2\cup\beta+\alpha\cup\beta^2,\\ 
w_1(T(\RP^2\times\RP^3))=\alpha,\\ 
w_2(T(\RP^2\times\RP^3))=\alpha^2,\\ 
\text{and $w_3(T(\RP^2\times\RP^3))=0$},
\end{array}\right.
\eea
thus the $\Phi_1$ maps $(\RP^2\times\RP^3,\alpha\cup\beta)$ to $(1,0,0,0)$.

\item
Consider $(M^5,B)=(S^1\times\RP^4,\gamma\cup\zeta)$.
Here $\gamma$ is the generator of $\H^1(S^1,\Z_2)$, and $\zeta$ is the generator of $\H^1(\RP^4,\Z_2)$.

Since 
\bea
&&\left\{\begin{array}{l} 
\Sq^1(\gamma\cup\zeta)=\gamma\cup\zeta^2,\\
w_1(T(S^1\times\RP^4))=\zeta,\\ 
\text{and $w_2(T(S^1\times\RP^4))=0$,}
\end{array}\right.
\eea
 thus the $\Phi_1$ maps $(S^1\times\RP^4,\gamma\cup\zeta)$ to $(0,1,1,0)$.

\item
Consider $(M^5,B)=(S^1\times\RP^2\times\RP^2,\gamma\alpha_1)$.
Here $\gamma$ is the generator of $\H^1(S^1,\Z_2)$, and $\alpha_i$ is the generator of $\H^1(\RP^2,\Z_2)$ of the $i$-th factor $\RP^2$.
Since 
\bea
&&\left\{\begin{array}{l} 
\Sq^1(\gamma\alpha_1)=\gamma\alpha_1^2,\\ 
w_1(T(S^1\times\RP^2\times\RP^2))=\alpha_1+\alpha_2,\\ 
\text{and $w_2(T(S^1\times\RP^2\times\RP^2))=\alpha_1^2+\alpha_2^2+\alpha_1\alpha_2$,}
\end{array}\right.
\eea
thus the
$\Phi_1$ maps $(S^1\times\RP^2\times\RP^2,\gamma\alpha_1)$ to $(0,0,1,0)$.

\item
Let $\W$ be the Wu manifold $\SU(3)/\SO(3)$, 

Since $\Sq^1w_2(T\W)$ $=w_3(T\W)$,
thus the $\Phi_1$ maps $(\W,$ $w_2(T\W))$ to $(1,1,0,1)$, 
and $\Phi_1$ maps $(\W,0)$ to $(0,0,0,1)$.
\end{enumerate}

So we conclude that a generating set of manifold generators for $\Omega_5^{\tO}(\B^2\Z_2)$ is
\bea
%\ccred
{\{(M^5, B )\}}&=&\{(\RP^2\times\RP^3,\alpha\cup\beta),(S^1\times\RP^4,\gamma\cup\zeta),\notag\\
&&(S^1\times\RP^2\times\RP^2,\gamma\alpha_1),(\W,0)\}.
\eea
Note that $(\W,0)$ can be replaced by $(\W,w_2(T\W))$.

%\cpurple
{
 The 4d Yang-Mills theory at $\theta=\pi$ has no 4d 't Hooft anomaly once the $\cT$ symmetry is not preserved. 
 This means that all 5d higher SPTs/cobordism invariant for 4d YM theory must vanish 
at $\Omega_5^{\SO}(\B^2\Z_2)$ when  $\cT$ (or $\cC\cT$) is removed.} 
%\ccred
{Compare with the data of $\Omega_5^{\SO}(\B^2\Z_2)$ given in Ref.~\cite{W2}, 
thus we find that the 5d terms (5d higher SPTs) for this 4d SU(2) YM are chosen among:
\bea \label{eq:select-N=2YM}
\boxed{
\left\{\begin{array}{l} 
B_2\cup\Sq^1B_2+\Sq^2\Sq^1 B_2,\\ 
w_1(TM^5)^2\Sq^1 B_2.
\end{array}\right.
}
\eea
}

This information will be used later to match the SU(2) YM anomalies at $\theta=\pi$.

\subsection{$\Omega_3^{\tO}(\B\tO(3))$}

\label{sec:D}

Follow \Sec{sec:A}, we enlist all possible 't Hooft anomalies of 2d $\CP^1$ model, or equivalently O(3) NLSM,  at $\theta=\pi$, by 
obtaining the 3d cobordism invariants from bordism groups of (\eq{eq:CP1-full-sym}/\eq{eq:CP1-full-sym-CT}).
From physics side, we will interpret the unoriented O($D$) spacetime symmetry with the time reversal from $\cC\cT$ instead of $\cT$.

We are given a 3-manifold $M^3$ and a map $f:M^3\to \B\tO(3)$.
Here the map $f:M^3\to \B\tO(3)$ is a principal $\tO(3)$ bundle
whose associated vector bundle is a rank 3 real vector bundle $E$ over $M^3$.

We like to obtain the bordism invariants of $\Omega_3^{\tO}(\B\tO(3))$. We compute the bordism group  \cite{W2}:
\bea
&&\Omega_3^{\tO}(\B\tO(3))=\Z_2^4, \\
&&\text{whose cobordism invariants are generated by}\nn\\
&&\left\{\begin{array}{l} 
w_1(E)^3,\\ 
w_1(E)w_2(E),\\ 
w_3(E),\\ 
w_1(E)w_1(TM^3)^2.
\end{array}\right.
\label{eq:bordism3OBO}
\eea

We have a group isomorphism %automorphism
\bea \label{eq:bordism3OBO-basis}
&\Phi_2:&\Omega_3^{\tO}(\B\tO(3))\to\Z_2^4\notag\\
&&(M^3,E)\mapsto(w_1(E)^3,w_1(E)w_2(E),\notag\\
&&w_3(E), w_1(E)w_1(TM^3)^2).  
\eea

Let $l_{\RP^n}$ denote the tautological line bundle over $\RP^n$ ($\RP^1=S^1$). If $x_n\in\H^1(\RP^n,\Z_2)$ denotes the generator, then $w(l_{\RP^n})=1+x_n$, $w(T\RP^n)=(1+x_n)^{n+1}$.

Let $\underline{n}$ denote the trivial real vector bundle of rank $n$, and let $+$ denote the direct sum.

By the Whitney sum formula, $w(E\oplus F)=w(E)w(F)$. Here $w(E)=1+w_1(E)+w_2(E)+\cdots$ is the total Stiefel-Whitney class of $E$.
Then we find:
\begin{enumerate}
\item
Since $w(3l_{\RP^3})=(1+x_3)^3=1+x_3+x_3^2+x_3^3$, and $w_1(T\RP^3)=0$,
thus the $\Phi_2$ maps $(\RP^3,3l_{\RP^3})$ to $(1,1,1,0)$.
\item
Since $w(l_{\RP^3}+\underline{2})=1+x_3$, thus the $\Phi_2$ maps 
$(\RP^3,l_{\RP^3}+\underline{2})$ to $(1,0,0,0)$.
\item
Since $w(l_{S^1}+\underline{2})=1+x_1$, and $w_1(T(S^1\times\RP^2))=x_2$, thus the $\Phi_2$ maps
$(S^1\times\RP^2,l_{S^1}+\underline{2})$ to $(0,0,0,1)$.

\item
Since $w(l_{S^1}+l_{\RP^2}+\underline{1})=(1+x_1)(1+x_2)=1+x_1+x_2+x_1x_2$, thus the  $\Phi_2$ maps
$(S^1\times\RP^2,l_{S^1}+l_{\RP^2}+\underline{1})$ to $(1,1,0,1)$.

\end{enumerate}

So a generating set of manifold generators for $\Omega_3^{\tO}(\B\tO(3))$ is 
\bea
%\ccred
{\{(M^3, E )\}}&=&\{(\RP^3,3l_{\RP^3}),(\RP^3,l_{\RP^3}+2),\notag\\
&&(S^1\times\RP^2,l_{S^1}+2),\notag\\
&&(S^1\times\RP^2,l_{S^1}+l_{\RP^2}+1)\}.
\eea

%\cpurple
{Note that $(S^1\times\RP^2,l_{S^1}+2l_{\RP^2})$ is also a manifold generator.
Note $w(l_{S^1}+2l_{\RP^2})=(1+x_1)(1+x_2)^2=1+x_1+x_2^2+x_1x_2^2$, 
 therefore $\Phi_2$ maps $(S^1\times\RP^2,l_{S^1}+2l_{\RP^2})$ to $(0,1,1,1)$.}

\subsection{$\Omega_5^{\tO}(\B^2\Z_4)$}

\label{sec:E}

Follow \Sec{sec:A}, now we enlist all possible 't Hooft anomalies of 4d pure SU(4) YM at $\theta=\pi$ (but when the $\Z_2^C$-background field is turned off) by 
obtaining the 5d cobordism invariants from bordism groups of (\eq{eq:YM-Sym}/\eq{eq:YM-Sym-2}).

We are given a 5-manifold $M^5$ and a map $f: M^5\to\B^2\Z_4$.
Here the map $f: M^5\to\B^2\Z_4$ is the 2-form $B=B_2$ gauge field in the YM gauge theory \eq{eq:SU2YM} (and \eqn{eq:SUNYMN} at N = 4).

We compute 
the bordism invariants of $\Omega_5^{\tO}(\B^2\Z_4)$, we find the bordism group  \cite{W2}:
\bea
&&{\Omega_5^{\tO}(\B^2\Z_4)=\Z_2^4}, \\
&&\text{whose cobordism invariants are generated by}\nn\\
&&\left\{\begin{array}{l} 
B_2\cup \beta_{(2,4)} B_2,\\ 
\Sq^2 \beta_{(2,4)}  B_2,\\ 
w_1(TM^5)^2\beta_{(2,4)}B_2,\\ 
w_2(TM^5)w_3(TM^5).
\end{array}\right.
\eea
where $\beta_{(2,4)}:\H^*(M^5,\Z_4)\to\H^{*+1}(M^5,\Z_2)$ is the Bockstein homomorphism associated with the extension $\Z_2\to\Z_8\to\Z_4$ (see Appendix \ref{Bockstein}).

We have a group isomorphism %automorphism
\bea
&\Phi_3:&\Omega_5^{\tO}(\B^2\Z_4)\to\Z_2^4\notag\\
&&(M^5,B_2)\mapsto(B_2\cup\beta_{(2,4)}B_2,\Sq^2\beta_{(2,4)}B_2,\notag\\
&&w_1^2(TM^5)\beta_{(2,4)}B_2,w_2(TM^5)w_3(TM^5)).
\eea

Let $K$ be the Klein bottle. 
\begin{enumerate}

\item

Let $\alpha'$ be the generator of $\H^1(S^1,\Z_4)$, $\beta'$ be the generator of the $\Z_4$ factor of $\H^1(K,\Z_4)=\Z_4\times\Z_2$ (see Appendix \ref{Klein}),
and $\gamma'$ be the generator of $\H^2(S^2,\Z_4)$. {Note that} $\beta_{(2,4)}\beta'=\sigma$ where $\sigma$ is the generator of $\H^2(K,\Z_2)=\Z_2$ (see Appendix \ref{Klein}).

Since $\beta_{(2,4)}(\alpha'\cup\beta'+\gamma')=\alpha'\cup\sigma$ and $w_2(T(S^1\times K\times S^2))=w_1(T(S^1\times K\times S^2))^2=0$,
we find that $\Phi_3$ maps
$(S^1\times K\times S^2,\alpha'\cup\beta'+\gamma')$ to $(1,0,0,0)$.

\item
Following the notation of \cite{barden1965simply}, $X_2$ is a simply-connected 5-manifold which is orientable but non-spin.
Let $\theta'$ and $\eta'$ be two generators of $\H^2(X_2,\Z_4)=\Z_4^2$, {then} $\beta_{(2,4)}\theta'$ is one of the two generators of $\H^3(X_2,\Z_2)=\Z_2^2$.
Since $w_2(TX_2)=(\theta'+\eta')\mod2$, $w_1(TX_2)=0$ and $w_3(TX_2)=0$,
we find that $\Phi_3$ maps $(X_2,\theta')$ to $(1,1,0,0)$.

\item
Since $w_1(T(S^1\times K\times\RP^2))^2=w_2(T(S^1\times K\times\RP^2))=\alpha^2$ where $\alpha$ is the generator of $\H^1(\RP^2,\Z_2)$, 
we find that $\Phi_3$ maps
$(S^1\times K\times\RP^2,\alpha'\cup\beta')$ to $(0,0,1,0)$.

\item
$\W$ is the Wu manifold, while $\Phi_3$ maps $(\W,0)$ to $(0,0,0,1)$.

\end{enumerate}

So a generating set of manifold generators for $\Omega_5^{\tO}(\B^2\Z_4)$ is
\bea
%\ccred
{\{(M^5, B )\}}&=&\{(S^1\times K\times S^2,\alpha'\cup\beta'+\gamma'),
(X_2,\theta'),
\notag\\
&&(S^1\times K\times\RP^2,\alpha'\cup\beta'),(\W,0)\}.
\eea

Note that
\begin{enumerate}
\item
$(S^1\times K\times T^2,\alpha'\cup\beta'+\zeta')$ is also a generator where $\zeta'$ is the generator of $\H^2(T^2,\Z_4)$.
Since $\beta_{(2,4)}(\alpha'\cup\beta'+\zeta')=\alpha'\cup\sigma$ and $w_2(T(S^1\times K\times T^2))=w_1(T(S^1\times K\times T^2))^2=0$,
we find $\Phi_3$ maps
$(S^1\times K\times T^2,\alpha'\cup\beta'+\zeta')$ to $(1,0,0,0)$.

\item
$(K\times S^3/\Z_4,\beta'\cup\epsilon'+\phi')$ is also a generator where $S^3/\Z_4$ is the Lens space $L(4,1)$, $\epsilon'$ is the generator of $\H^1(S^3/\Z_4,\Z_4)$, $\phi'$ is the generator of $\H^2(S^3/\Z_4,\Z_4)$.
Since $\beta_{(2,4)}(\beta'\cup\epsilon'+\phi')=\sigma\cup\epsilon'+\beta'\cup\phi$ where $\phi$ is the generator of $\H^2(S^3/\Z_4,\Z_2)$, 
and $w_2(T(K\times S^3/\Z_4))=w_1(T(K\times S^3/\Z_4))^2=0$, 
we find that
$\Phi_3$ maps
$(K\times S^3/\Z_4,\beta'\cup\epsilon'+\phi')$ to $(1,0,0,0)$.
\end{enumerate}

\subsection{{$\Omega_5^{\tO}(\B\Z_2\ltimes\B^2\Z_4) \equiv \Omega_5^{(\tO \times (\Z_2 \ltimes \B\Z_4)) }$} 
}

\label{sec:F}

Follow \Sec{sec:A}, now we enlist all possible 't Hooft anomalies of 4d pure SU(4) YM at $\theta=\pi$ (when the $\Z_2^C$-background field can be turned on) by 
obtaining the 5d cobordism invariants from bordism groups of (\eq{eq:YM-Sym}/\eq{eq:YM-Sym-2}).

Note that again from physics side, we will interpret the unoriented O($D$) spacetime symmetry with the time reversal from $\cC\cT$ instead of $\cT$.
So we choose the former $\Omega_5^{\tO}(\B\Z_2\ltimes\B^2\Z_4) \equiv \Omega_5^{(\tO \times (\Z_2 \ltimes \B\Z_4)) }$ for $\cC\cT$, 
rather than the more complicated latter $\Omega_5^{\tO_{\ltimes}}(\B \Z_2 \ltimes \B^2\Z_4) \equiv \Omega_5^{(\tO \times \Z_2) \ltimes \B\Z_4 }$ for  $\cT$. 

Before we dive into $\Omega_5^{\tO}(\B\Z_2\ltimes\B^2\Z_4) \equiv \Omega_5^{(\tO \times (\Z_2 \ltimes \B\Z_4)) }$, we first study the simplified
``untwisted'' bordism group  $\Omega_5^{\tO}(\B\Z_2\times\B^2\Z_4)$.

We are given a 5-manifold $M^5$ and a 1-form field $A:M\to \B\Z_2$ and a 2-form $B=B_2:M^5\to\B^2\Z_4$ gauge field in the YM gauge theory \eq{eq:SU2YM} (and \eqn{eq:SUNYMN} at N = 4). 
We compute 
the bordism invariants of $\Omega_5^{\tO}(\B\Z_2\times\B^2\Z_4)$, we find the bordism group \cite{W2}
\bea
&&{\Omega_5^{\tO}(\B\Z_2\times\B^2\Z_4)=\Z_2^{12}}, \\
&&\text{whose cobordism invariants are generated by}\nn\\
&&\left\{\begin{array}{l} 
B_2\cup\beta_{(2,4)}B_2,\\ 
\Sq^2\beta_{(2,4)} B_2,\\ 
w_1(TM^5)^2\beta_{(2,4)} B_2,\\ 
w_2(TM^5)w_3(TM^5),\\
A^5,\;A^2\beta_{(2,4)}B_2,\\
A^3B_2,\;A^3w_1(TM^5)^2,\\
AB_2^2,\;Aw_1(TM^5)^4,\\
AB_2w_1(TM^5)^2,\;Aw_2(TM^5)^2.
\end{array}\right.
\eea

We also compute 
the bordism invariants of $\Omega_5^{\SO}(\B\Z_2\times\B^2\Z_4)$, we find  \cite{W2}
\bea
&&{\Omega_5^{\SO}(\B\Z_2\times\B^2\Z_4)=\Z_2^6}, \\
&&\text{whose cobordism invariants are generated by}\nn\\
&&\left\{\begin{array}{l} 
\Sq^2\beta_{(2,4)} B_2,\\ 
w_2(TM^5)w_3(TM^5),\\
A^5,\;A^3B_2,\\
AB_2^2,\;Aw_2(TM^5)^2.
\end{array}\right.
\eea

%\cred
{
 The 4d Yang-Mills theory at $\theta=\pi$ has no 4d 't Hooft anomaly once the $\cC\cT$ (or $\cT$) symmetry is not preserved 
(as we discussed before that $\cC$-symmetry is a good symmetry for
any $\theta$ which has no anomaly directly from mixing with $\cC$ by its own). This means that
all 5d higher SPTs/cobordism invariant for 4d YM theory must vanish 
at $\Omega_5^{\SO}(\B\Z_2\times\B^2\Z_4)$ when $\cC\cT$ (or $\cT$) is removed. 
So the 5d SPTs for this 4d YM are chosen among:
\bea \label{eq:select-N=4YM}
\boxed{\left\{\begin{array}{l} 
B_2\cup\beta_{(2,4)}B_2,\\ 
w_1(TM^5)^2\beta_{(2,4)} B_2,\\ 
A^2\beta_{(2,4)}B_2,\;A^3w_1(TM^5)^2,\\
Aw_1(TM^5)^4,\;AB_2w_1(TM^5)^2.
\end{array}\right.}
\eea
}

Let $\alpha'$ be the generator of $\H^1(S^1,\Z_4)$, $\beta'$ be the generator of the $\Z_4$ factor of $\H^1(K,\Z_4)=\Z_4\times\Z_2$ (see Appendix \ref{Klein}),
$\gamma'$ be the generator of $\H^2(S^2,\Z_4)$. Note
$\beta_{(2,4)}\beta'=\sigma$ where $\sigma$ is the generator of $\H^2(K,\Z_2)=\Z_2$ (see Appendix \ref{Klein}).
Let $\alpha$ be the generator of $\H^1(\RP^2,\Z_2)$, $\beta$ be the generator of $\H^1(\RP^3,\Z_2)$, $\gamma$ be the generator of $\H^1(S^1,\Z_2)$.

Then a generating set of manifold generators for the Yang-Mills terms is
\bea
%\ccred
{\{(M^5, A, B )\}}&=&\{(S^1\times K\times S^2,A=0,B=\alpha'\cup\beta'+\gamma'),\notag\\
&&(S^1\times K\times\RP^2,A=0,B=\alpha'\cup\beta'),\notag\\
&&(S^1\times K\times\RP^2,A=\alpha,B=\alpha'\cup\beta'),\notag\\
&&(\RP^2\times\RP^3,A=\beta,B=0),\notag\\
&& (S^1\times\RP^4,A=\gamma,B=0),\notag\\
&&(S^1\times S^2\times\RP^2,A=\gamma,B=\gamma')\}.
\eea

%\cred
{Now we discuss this group,}
%\cred
{$\Omega_5^{\tO}(\B\Z_2\ltimes\B^2\Z_4) \equiv \Omega_5^{(\tO \times (\Z_2 \ltimes \B\Z_4)) }$,
here $\Z_2$ acts nontrivially on $\Z_4$.
We have a fibration
\bea
\xymatrix{
\B^2\Z_4\ar[r]&\B(\Z_2 \ltimes \B\Z_4)\ar[d]\\
&\B\Z_2}
\eea
which has the nontrivial Postnikov class in $\H^3(\B\Z_2,\Z_4)=\Z_2$.
\\
$B\in\H^2(M^5,\Z_{4,A})$ which is the twisted cohomology where $A\in\H^1(M^5,\Z_2)$ can be viewed as a group homomorphism $\pi_1(M^5)\to\text{Aut}(\Z_4)=\Z_2$. 
\\
We claim that among the candidates of the 5d higher SPTs/cobordism invariants 
for 4d SU(4) Yang-Mills theory at $\theta=\pi$, no one can vanish in $\Omega_5^{\tO}(\B\Z_2\ltimes\B^2\Z_4)$ (see Appendix \ref{BZ2B2Z4} and  \cite{W2}).
{Namely, we obtain that $\Omega_5^{\tO}(\B\Z_2\ltimes\B^2\Z_4)=\Z_2^{11}$, where only the $A^3B_2$ term is dropped, 
compared with $\Omega_5^{\tO}(\B\Z_2\times\B^2\Z_4)$.}
}

%\cpurple
{
We have a group isomorphism 
\bea
&\Phi_3':&\Omega_5^{\tO}(\B\Z_2\ltimes\B^2\Z_4)\to\Z_2^{11}\notag\\
&&(M^5,A,B_2)\mapsto(B_2\cup\beta_{(2,4)}B_2, AB_2w_1(TM^5)^2,\nn\\
&&w_1(TM^5)^2\beta_{(2,4)} B_2, A^2\beta_{(2,4)}B_2, \nn\\
&&A^3w_1(TM^5)^2, Aw_1(TM^5)^4,\nn\\
&&
\Sq^2\beta_{(2,4)} B_2, w_2(TM^5)w_3(TM^5),\nn\\
&&
A^5,
AB_2^2,Aw_2(TM^5)^2).
\eea
}
This group isomorphism $\Phi_3'$ will be used in
\Sec{sec:5d3d}.

\subsection{$\Omega_3^{\tO}(\B(\Z_2\ltimes\PSU(4)))$}

\label{sec:G}

Follow \Sec{sec:A}, we enlist all possible 't Hooft anomalies of 2d $\CP^{\rN-1}$ model at N = 4,  at $\theta=\pi$, by 
obtaining the 3d cobordism invariants from bordism groups of (\eq{eq:CPN-full-sym}).
From physics side, we will interpret the unoriented O($D$) spacetime symmetry with the time reversal from $\cC\cT$ instead of $\cT$.

We are given a 3-manifold $M^3$ and a map $f:M^3\to \B(\Z_2\ltimes\PSU(4))$ which corresponds to a principal $\Z_2\ltimes\PSU(4)$ bundle $E$ over $M^3$.

We compute the bordism invariants of $\Omega_3^{\tO}(\B\tO(3))$, we find the bordism group  \cite{W2}:
\bea
&&\Omega_3^{\tO}(\B(\Z_2\ltimes\PSU(4)))=\Z_2^4, \\
&&\text{whose cobordism invariants are generated by}\nn\\
&&\left\{\begin{array}{l} 
w_1(E)^3,\\ 
w_1(E)w_1(TM^3)^2,\\ 
\beta_{(2,4)}w_2(E),\\ 
w_1(E)(w_2(E)\mod2).
\end{array}\right.
\eea
where $E$ is a principal $\Z_2\ltimes\PSU(4)$ bundle over $M^3$ which is a pair
$(w_1(E),w_2(E))\in \H^1(M^3,\Z_2)\times \H^2(M^3,\Z_{4,w_1(E)})$ where $\H^2(M^3,\Z_{4,w_1(E)})$ is the twisted cohomology, $w_1(E)$ can be viewed as a group homomorphism $\pi_1(M^3)\to\text{Aut}(\Z_4)=\Z_2$. 

{In the following discussion, we use the ordinary cohomology instead of the twisted cohomology. 
If $w_1(E)=0$ or $w_2(E)=0$, then this simplification has no effect, while for the term $w_1(E)(w_2(E)\mod2)$, though both $w_1(E)$ and $w_2(E)$ may be nonzero, but we will see that for certain 3-manifold $M^3$, for example, $M^3=S^1\times T^2$, if the action of $\pi_1(M^3)=\Z^3$ on $\Z_4$ is  nontrivial on only one factor $\Z$, namely, $w_1(E)=\gamma$, then $w_2(E)$ may not be twisted by $w_1(E)$, for example, $w_2(E)=\zeta'$, then the ordinary cohomology is sufficient for our discussion.
%The term $w_1(E)(w_2(E)\mod2)$ may be modified  
%if we use the twisted cohomology. 
%The details of this case will be left for the future work. % \cite{WWZ2019-2}.
}

We have a group isomorphism %automorphism
\bea
&\Phi_4:&\Omega_3^{\tO}(\B(\Z_2\ltimes\PSU(4)))\to\Z_2^4\notag\\
&&(M^3,w_1(E),w_2(E))\mapsto(w_1(E)^3,w_1(E)(w_2(E)\mod2)   ,\notag\\
&&\beta_{(2,4)}w_2(E),w_1(E)w_1(TM^3)^2). 
\eea

\begin{enumerate}
\item
Recall that $\beta$ is the generator of $\H^1(\RP^3,\Z_2)$.
Since $w_1(T\RP^3)=0$,
$\Phi_4$ maps 
$(\RP^3,\beta,0)$ to $(1,0,0,0)$.

\item
Recall that $\gamma$ is the generator of $\H^1(S^1,\Z_2)$, $\gamma'$ is the generator of $\H^2(S^2,\Z_4)$.
$\Phi_4$ maps
$(S^1\times S^2,\gamma,\gamma')$ to $(0,1,0,0)$.

\item
$K$ is the Klein bottle.
Recall that $\alpha'$ is the generator of $\H^1(S^1,\Z_4)$, $\beta'$ is the generator of the $\Z_4$ factor of $\H^1(K,\Z_4)=\Z_4\times\Z_2$ (see Appendix \ref{Klein}).
Since $\beta_{(2,4)}(\alpha'\cup\beta')=\alpha'\cup\sigma$ where $\sigma$ is the generator of $\H^2(K,\Z_2)$, $\Phi_4$ maps 
$(S^1\times K,0,\alpha'\cup\beta')$ to $(0,0,1,0)$. 

\item 
Recall that $\gamma$ is the generator of $\H^1(S^1,\Z_2)$.
Since $w_1(T(S^1\times\RP^2))=\alpha$ where $\alpha$ is the generator of $\H^1(\RP^2,\Z_2)$,
$\Phi_4$ maps 
$(S^1\times\RP^2,\gamma,0)$ to $(0,0,0,1)$.

\end{enumerate}

So a generating set of manifold generators for $\Omega_3^{\tO}(\B(\Z_2\ltimes\PSU(4)))$ is 
\bea
%\ccred
{\{(M^3, w_1(E),  w_2(E) )\}}&=&\{(\RP^3,\beta,0),(S^1\times S^2,\gamma,\gamma') ,\notag\\
&&(S^1\times K,0,\alpha'\cup\beta'),\notag\\
&&  (S^1\times\RP^2,\gamma,0)\}.
\eea

Note that 
%\begin{enumerate}
%\item
$(S^1\times T^2,\gamma,\zeta')$ is also a manifold generator, where $\gamma$ is the generator of $\H^1(S^1,\Z_2)$, $\zeta'$ is the generator of $\H^2(T^2,\Z_4)$.
$\Phi_4$ maps $(S^1\times T^2,\gamma,\zeta')$ to $(0,1,0,0)$.
%\item
%$(S^3/\Z_4,\epsilon,\phi')$ is also a  manifold generator, where $S^3/\Z_4$ is the Lens space $L(4,1)$, $\epsilon$ is the generator of $\H^1(S^3/\Z_4,\Z_2)$, $\phi'$ is the generator of $\H^2(S^3/\Z_4,\Z_4)$.
%$\Phi_4$ maps $(S^3/\Z_4,\epsilon,\phi')$ to $(0,1,0,0)$.
%\end{enumerate}

\section{Review and Summary of Known Anomalies via Cobordism Invariants} %{Previous Works}
\label{sec:rev}

Follow \Sec{sec:cobo-top}, we have obtained the co/bordism groups relevant from the given full $G$-symmetry of 4d YM and 2d $\CP^{\rN-1}$ models.
Therefore, 
based on the correspondence between $d$d 't Hooft anomalies and $D$d=$(d+1)$d topological terms/cobordism/SPTs invariants,
we have obtained the classification of all possible higher 't Hooft anomalies for these 4d YM and 2d $\CP^{\rN-1}$ models.

Below we first match our result to the known anomalies found in the literature, and we shall put these 
known anomalies into a more mathematical precise thus a more general framework, under the cobordism theory.
We will write down the precise $d$d 't Hooft anomalies and $D$d=$(d+1)$d cobordism/SPTs invariants for them.
We will also clarify the physical interpretations (e.g. from condensed matter inputs) of anomalies.

\subsection{Mixed higher-anomaly of time-reversal $\Z_2^{CT}$ and 1-form center $\Z_\rN$-symmetry of SU($\rN$)-YM theory} %\\

First recall in 
\Sec{sec:mix-T-1}, we re-derives the mixed higher-anomaly 
of time-reversal $\Z_2^T$ and 1-form center $\Z_\rN$-symmetry of 4d SU($\rN$)-YM, at even $\rN$, 
discovered in [\onlinecite{Gaiotto2017yupZoharTTT}]. %1703.00501.
By turning on 2-form $\Z_{\rN}$-background field $B=B_2$ coupling to YM theory,
the $\Z_2^T$-symmetry shifts the 4d YM with an additional 5d higher SPTs term \eq{eq:4d-YM-5d-SPTs}.
We also learned that the same mixed higher-anomaly occur by replacing $\Z_2^T$ to \eq{eq:T-CT-mix},
$$
\text{$\Z_2^{CT}$, $\Z_2^{P}$, and $\Z_2^{CP}$}, 
$$
For our preference, we focus on ${\cC\cT}$ instead of ${\cT}$.
This type of anomaly has the linear dependence on ${\cC\cT}$ (thus linear also $\cT$) and quadratic dependence on $B_2$.
Compare with our \eq{eq:bordism5OB2Z2}, 
we find that the precise form for 5d cobordism invariant/ 4d higher 't Hooft anomaly is:
\bea \label{eq:BSq1B-1}
\boxed{B_2\Sq^1B_2}.
\eea
{More precisely, we need to consider instead \eq{eq:w1PB},
$B_2\Sq^1B_2+\Sq^2\Sq^1B_2 {= \frac{1}{2} \tilde w_1(TM) \cP_2(B_2)}$, see
\Sec{sec:newSU2YM} for details and derivations.}

\begin{widetext}

\begin{figure}[!h]
\centering
\includegraphics[scale=1.1]{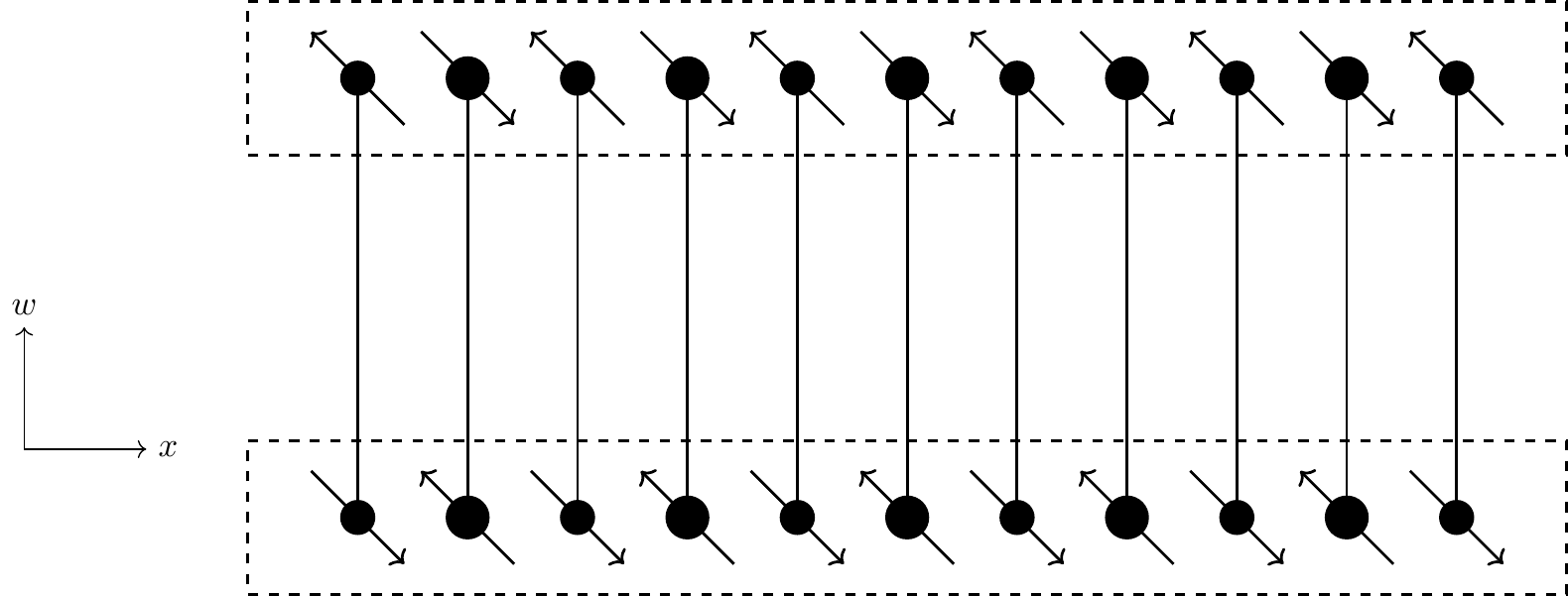}
\caption{An interpretation of 't Hooft anomaly ${w_1(E)({w_2(V_{\mathrm{SO}(3)})}+{w_1(TM)^2})}$  (\eq{eq:anom-stacking-Haldane})
for 2d  $\mathbb{CP}^1$ or O(3) NLSM model, is obtained from the 2d dangling spin-1/2 gapless modes living on the 2d boundary 
a 3d layer-stacking system of the 2d spin-1 Haldane chain.
Each vertical solid line represents a  2d spin-1 Haldane chain \eq{eq:Haldane}.
The 2d boundary combined from the dangling spin-1/2 gapless modes, encircled by the dashed-line rectangle, on the top and on the bottom,
are effectively the 2d $\mathbb{CP}^1$ model.
The ${w_1(E){w_2(V_{\mathrm{SO}(3)})}}$ has been identified by \cite{Metlitski2017fmd1707.07686}.
The same trick of \cite{Metlitski2017fmd1707.07686} applies to \eq{eq:Haldane}
 teaches us the more complete anomaly \eq{eq:anom-stacking-Haldane}.
The $\mathbb{Z}_2^x \equiv \mathbb{Z}_2^C$-translational nature of Heisenberg anti-ferromagnet (AFM) is purposefully emphasized by the
spin-1/2 orientation ($\up$ or $\down$) and the two different sizes of the black dots ($\bullet$).
Notice that understand the 3d bulk nature obtained from this stacking, inform us that
 the 3d bulk SPTs also includes a secretly hidden $A_x \cup A_x \cup A_x \equiv A_x^3$ or the
 ${w_1(E)^3 \equiv w_1(\mathbb{Z}_2^x)^3}$-anomaly \cite{WangSantosWen1403.5256, Metlitski2017fmd1707.07686}.} %BiXu1309.0515
\label{fig:stack}
\end{figure}
\end{widetext}

\subsection{Mixed anomaly of $\Z_2^C=\Z_2^x$- and time-reversal $\Z_2^{CT}$ or SO(3)-symmetry of $\mathbb{CP}^{1}$-model}
\label{sec:CTSO3}
%{$w_1(E)w_2(V_{\SO(3)})$: $(1,1,0,0)$ 1802.02153 from 5d $B_2\Sq^1B_2$, $w_1(E)w_1(TM)^2$: $(0,0,0,1)$ (*) from 5d $\Sq^2\Sq^1B_2$}

Now we move on to 
2d $\CP^1$ or O(3) NLSM model at $\theta=\pi$, we get the full
0-form global symmetries 
\eq{eq:CP1-full-sym-CT},
$\Z_2^{CT} \times \PSU(2) \times \Z_2^{x}$
$\equiv$ $\Z_2^{CT} \times \PSU(2) \times \Z_2^{C}$
$=$ $\Z_2^{CT} \times \tO(3)$.

It has been known that there is a non-perturbative global discrete anomaly from the $\Z_2^C$ (a discrete translational $\Z_2^x$ symmetry)
since the work of Gepner-Witten \cite{GepnerWitten1986wi}.
More recently, this non-perturbative global discrete anomaly
has been revisited by \cite{FuruyaOshikawa2015coaO, Yao2018kelHsiehOshikawa1805.06885} %1805.06885
to understand the nature of symmetry-protected gapless critical phases.

We can compare this anomaly (associated with $\Z_2^x$ symmetry and to the $\PSU(2)$-symmetry)
to the 3d cobordism invariant/ 2d 't Hooft anomaly we derive in \eq{eq:bordism3OBO}.
We find that $w_1(E)w_2(V_{\SO(3)})$,
where $w_1(E)=w_1(V_{\tO(3)})=w_1(\Z_2^x)$,
is the natural choice to describe the anomaly.

Ref.~\cite{SulejmanpasicTanizaki1802.02153}, %1802.02153 eq. 25, eq. 46. 
detects a so-called mixed $\cC$$\cP$$\cT$-type anomaly.
We can interpret their anomaly as the mixed
$\cC$ ($\Z_2^C=\Z_2^x$) with the $\cC$$\cT$ ($\Z_2^{CT}$)
type anomaly.
We compare it
to the 3d cobordism invariant/ 2d 't Hooft anomaly that we derived in \eq{eq:bordism3OBO},
and find
{$w_1(E)w_1(TM)^2=w_1(\Z_2^x)w_1(TM)^2$} is the natural choice to describe this anomaly. 

So overall, compare with \eq{eq:bordism3OBO}, we can interpret the above 2d anomalies are captured by
a 3d cobordism invariant for N = 2 case:
\bea \label{eq:anom-stacking-Haldane}
\boxed{{w_1(E)w_2(V_{\SO(3)}) }+{w_1(E)w_1(TM)^2}}.
\eea 

A very natural physics derivation to understand \eq{eq:anom-stacking-Haldane} is by the stacking 2d 
Haldane spin-1 chain picture \cite{Metlitski2017fmd1707.07686}, see \Fig{fig:stack}.
The Haldane spin-1 chain is a 2d SPTs protected by  
spin-1 rotation SO(3) symmetries and time-reversal (here $\Z_2^{CT}$); its 2d SPTs/topological term is well-known as:
\bea  \label{eq:Haldane}
\int_{\text{2d spin-1 chain}} w_2(V_{\SO(3)}) + w_1(TM)^2, 
\eea
obtained from group cohomology data $\H^2(\B \SO(3), \U(1))$ $=\Z_2$ and $\H^2(\B \Z_2^T, \U(1))$ $=\Z_2$ \cite{Chen2011pg1106.4772}. 
If the time-reversal or SO(3) symmetry is preserved, the boundary has 2-fold degenerate spin-1/2 modes on each 1d edge.
The layer stacking of such spin-1/2 modes to a 2d boundary (encircled by the dashed-line rectangle in \Fig{fig:stack}) 
can actually give rise to gapless 2d $\CP^1$ / O(3) NLSM / $\SU(2)_1$-WZW model.
Part of its anomaly is captured by the $\Z_2^x$-translation ($w_1(E)=w_1(\Z_2^x)$) times the
\eq{eq:Haldane}, which renders and thus we derive \eq{eq:anom-stacking-Haldane}.

\subsection{$w_3(E)$ anomaly of $\mathbb{CP}^{1}$-model}

Ref.~\cite{Komargodski2017dmc1705.04786} %1705.04786 
studies the anomaly of the same system,
and detects the anomaly $w_3(E)=w_3(V_{\tO(3)})$, %eq. 4 and eq. 15:
we can convert it to
\bea \label{eq:w3E}
&&{\boxed{w_3(E)}=w_3(V_{\tO(3)})}\\
&&=w_1(V_{\tO(3)})^3+w_1(V_{\tO(3)})w_2(V_{\SO(3)})+w_3(V_{\SO(3)}) \nn\\
&&{=w_1(\Z_2^x)^3+w_1(\Z_2^x)w_2(V_{\SO(3)})+w_3(V_{\SO(3)}) } \nn\\
&&{=w_1(E)^3+w_1(E)w_2(V_{\SO(3)})+w_3(V_{\SO(3)}) }\nn\\
&&{=w_1(E)w_2(E) +w_1(TM)w_2(E) }\nn\\
&&\boxed{=w_1(E)^3+w_1(E)w_2(V_{\SO(3)})+w_1(TM)w_2(V_{\SO(3)}) }.\nn
\eea
We also note that %\footnote{Similar equality and anomaly are discussed in \cite{Cordova2017vab1711.10008} in a different topic on Chern-Simons matter theories.}
\bea
&&w_1(E)w_2(E)=w_1(V_{\tO(3)})w_2(V_{\tO(3)}) \nn\\
&&=w_1(V_{\tO(3)})^3+w_1(V_{\tO(3)})w_2(V_{\SO(3)}) \nn\\
&&=w_1(\Z_2^x)^3+w_1(\Z_2^x)w_2(V_{\SO(3)}) \nn\\
&&=w_1(E)^3+w_1(E)w_2(V_{\SO(3)}). 
\eea
{Similar equality and anomaly are discussed in \cite{Cordova2017vab1711.10008} in a different topic on Chern-Simons matter theories.}

To summarize, we note that:\\
The $w_1(E)^3$ is  $(1,0,0,0)$ in the basis of \eq{eq:bordism3OBO-basis}.\\
The $w_1(E)w_2(E)$ is  $(0,1,0,0)$ in the basis of \eq{eq:bordism3OBO-basis}.\\
The $w_1(E)w_2(V_{\SO(3)})$ is  $(1,1,0,0)$ in the basis of \eq{eq:bordism3OBO-basis}.\\
The $w_3(V_{\SO(3)})$ $=$ $w_3(E)+w_1(E)w_2(E)$ $=$ $w_1(TM)w_2(E)$ is  $(0,1,1,0)$ in the  basis of \eq{eq:bordism3OBO-basis}.\\
The $w_3(E)=w_3(V_{\tO(3)})$ is  $(0,0,1,0)$ in the basis of our \eq{eq:bordism3OBO-basis}.\\

Therefore, Ref.~\cite{Komargodski2017dmc1705.04786}'s anomaly \eq{eq:w3E} given by
$w_3(E)$ $=w_1(E)^3$ $+w_1(E)w_2(V_{\SO(3)})$ $+w_1(TM)w_2(E)$
coincides with one of the cobordism invariant as $(0,0,1,0)$ in the basis of our \eq{eq:bordism3OBO-basis}.
We had explained the physical meaning of $w_1(E)w_2(V_{\SO(3)})$ term in \eq{eq:anom-stacking-Haldane}.
We will explain the meaning of $w_1(E)^3$ in \Sec{sec:cubic}
and the meaning of $w_1(TM)w_2(E)$ in \Sec{sec:w1(TM)w2(E)}

\subsection{Cubic anomaly of $\Z_2^C$ of $\mathbb{CP}^{1}$-model} 
%{$w_1(E)^3$: $(1,0,0,0)$ 1707.07686  from 5d $\Sq^2\Sq^1B_2$}
\label{sec:cubic}

Now we like to capture the physical meaning of a cubic anomaly of $\Z_2^C=\Z_2^x$-symmetry in \eq{eq:w3E}: 
\bea \label{eq:CP1-anom-A3}
\boxed{w_1(E)^3 \equiv w_1(\Z_2^x)^3  \equiv (A_x)^3},
\eea
which is a sensible cobordism invariant as the $(1,0,0,0)$ in the basis of \eq{eq:bordism3OBO-basis}.
Ref.~[\onlinecite{Metlitski2017fmd1707.07686}] %1707.07686 
also points out this $w_1(E)^3$ or the $A_x^3$-anomaly, where $A_x$ is regarded as the $\Z_2^x$-translational background gauge field.
We know that the 2d boundary physics we look at in \Fig{fig:stack} (encircled by the dashed-line rectangle) describes 
the gapless CFT theory of SU(2)$_1$ WZW model at the level $k=1$.
The SU(2)$_1$ WZW model at $k=1$ is equivalent to a $c=1$ compact non-chiral boson theory 
(the left and right chiral central charge $c_L=c_R=1$, but the chiral central charge $c_-=c_L- c_R=0$) 
at the self-dual radius \cite{DiFrancesco1997nkBook}.
Although properly we could use non-{Abelian bosonization} method \cite{Witten1983arNab}, here focusing on the
abelian $\Z_2^x$-symmetry and its anomaly, we can simply use the {Abelian bosonization}.

Since the SU(2)$_1$ WZW model at $k=1$ is equivalent to a $c=1$ compact non-chiral boson theory at the self-dual radius,
 we consider an action
\bea
&& {{S}}_{2\text{d}}= 
\frac{1}{2\pi \alpha'} \int dz  d\bar z \; (\partial_z \Phi) (\partial_{\bar z}  \Phi) + \dots, \\
&&{{S}}_{2\text{d}}= 
\frac{1}{4\pi} \int dt\; dx \; \big( K_{IJ} \partial_t \phi_{I} \partial_x \phi_{J} -V_{IJ}\partial_x \phi_{I}   \partial_x \phi_{J} \big) + \dots.\nn
\eea
requiring a rank-2 symmetric bilinear form $K$-matrix, 
\be \label{eq:K-mat}
K_{IJ} =\bigl( {\begin{smallmatrix} 
0 &1 \\
1 & 0
\end{smallmatrix}}  \bigl) \oplus 
\dots; \quad
V_{IJ} =\bigl( {\begin{smallmatrix} 
v &0 \\
0 & v
\end{smallmatrix}}  \bigl) \oplus 
\dots.
\ee

The first form of the action is familiar in string theory and a $c=1$ compact non-chiral boson theory
at the self-dual radius.
(In string theory, we are looking at
$R = \sqrt \alpha' = \sqrt 2$.)

The second form of the action is the familiar 2d boundary of 3d bosonic SPTs. 
This second description is also known as Tomonaga-Luttinger liquid theory \cite{Tomonaga1950zz, Luttinger1963zz, Haldane1981zzaLuttingerJPhC} in condensed matter physics. 
It is a $K$-matrix multiplet generalization of the usual chiral boson theory of Floreanini and Jackiw \cite{Floreanini1987asJackiw}.
The reason we write $\dots$ in \eq{eq:K-mat} is that there could be additional 3d SPTs sectors for 2d $\CP^1$-model (e.g. \eq{eq:2d-anomaly-CP1}), 
more than what we focus on in this subsection.
Here we trade the boson scalar $\Phi$ to $\phi_{1}$, while  $\phi_{2}$ is the dual boson field.
We can determine the bosonic anomalies \cite{WangSantosWen1403.5256} by looking at the anomalous symmetry transformation on the 2d theory, living on the boundary of which 3d SPTs.
We use the mode expansion for a multiplet scalar boson field theory \cite{WangSantosWen1403.5256},
with zero modes ${\phi_{0}}_{I}$ and winding modes $P_{\phi_J}$:
$$
\phi_I(x) ={\phi_{0}}_{I}+K^{-1}_{IJ} P_{\phi_J} \frac{2\pi}{L}x+\ii \sum_{{n\neq 0}}^{n \in \mathbb{Z}} \frac{1}{n} \alpha_{I,n} e^{- \ii n x \frac{2\pi}{L}},
$$
which satisfy the commutator 
$[{\phi_{0}}_{I},  P_{\phi_J}]=\ii\delta_{IJ}$. The Fourier modes satisfy a generalized Kac-Moody algebra: 
$[\alpha_{I,n} , \alpha_{J,m} ]= n K^{-1}_{IJ}\delta_{n,-m}$.
{For a modern but self-contained pedagogical treatment on a canonical quantization of $K$-matrix multiplet (non-)chiral boson theory,  
the readers can consult Appendix B of \cite{LianWang2018xep1801.10149}.}

Follow \cite{Metlitski2017fmd1707.07686}, based on the identification of spin observables of Hamiltonian model \eq{eq:H}
and the abelian bosonized theory,
 we can map the symmetry transformation to the continuum description
on the boson multiplet $\phi_I(x)=(\phi_1 (x),$ $\phi_2  (x)$).
%%%%%%%%%%
The commutation relation is $ [\phi_I(x_1), K_{I'J} \partial_x \phi_{J}(x_2)]= {2\pi} \ii  \delta_{I I'} \delta(x_1-x_2)$.
The continuum limit of 2d anomalous symmetry transformation is  \cite{Lu:2012dt1205.3156} \cite{WangSantosWen1403.5256}: 
\bea \label{eq:globalS2d}
&&\text{S}^{(p)}_{\rN}
=
\exp[
\frac{\ii}{\rN}\,
(
\int^{L}_{0}\,dx\,\partial_{x}\phi_{2}
+
p\,\int^{L}_{0}\,dx\,\partial_{x}\phi_{1}  
)
], \;\;\;\;\;\\
&&\text{S}^{(p)}_{\rN}
 {\begin{pmatrix} 
\phi_1 (x)   \\
\phi_2(x) 
\end{pmatrix}} (\text{S}^{(p)}_{\rN})^{-1}
=
 {\begin{pmatrix} 
\phi_1 (x)   \\
\phi_2(x)  
\end{pmatrix}} 
+\frac{2\pi}{\rN}{\begin{pmatrix} 
1   \\
p  \end{pmatrix}}.  
 \nn
\eea
Here $L$ is the compact spatial $S^1$ circle size of the 2d theory.
For 2d $\mathbb{CP}^{1}$-model, we have $\rN=2$ and $p=1$,
this is indeed known as the Type I \emph{bosonic anomaly} in \cite{WangSantosWen1403.5256},
which also recovers one anomaly found in \cite{Metlitski2017fmd1707.07686} and in \cite{Komargodski2017dmc1705.04786}'s  \eq{eq:w3E}.

%%%%%%%%%%%%%%%%%
%%%%%%%%%%%%%%%%%
%%%%%%%%%%%%%%%%%
%%%%%%%%%%%%%%%%%
\subsection{Mixed anomaly of time-reversal $\Z_2^T$ and 0-form flavor $\Z_\rN$-center symmetry of $\mathbb{CP}^{1}$-model}
%{$w_1(TM)w_2(E)$: $(0,1,1,0)$  from 5d $B_2\Sq^1B_2$ 1710.08923, 1711.04360}

\label{sec:w1(TM)w2(E)}

Ref.~\cite{Yamazaki:2017ulc,Tanizaki:2017qhf1710.08923}  point out another anomaly of $\mathbb{CP}^{1}$-model, which mixes between
time-reversal (which we have chosen to be $\cC\cT$) and the PSU(2) symmetry (which is viewed as the twisted flavor symmetry in \cite{Yamazaki:2017ulc,Tanizaki:2017qhf1710.08923}).
Compare with \eq{eq:bordism3OBO}, we can interpret the above 2d anomalies are captured by
a 3d cobordism invariant for N = 2 case:
\be \label{eq:w1TMw2SO3}
\boxed{w_1(TM) w_2(V_{\SO(3)})=w_1(TM)w_2(E) = { w_3(V_{\SO(3)})}}.
\ee  %1710.08923 (eq.3.6, 3.7), 1711.04360 eq.10
This also coincides with the last anomaly term in \eq{eq:w3E}'s $w_3(E)$.
We derive the above first equality in \eq{eq:w1TMw2SO3} based on 
$\Sq^1(w_1(E)^2)$ $=2 w_1(E)  \Sq^1 w_1(E)$ $= 0$ and
 combine Wu formula,
$ \Sq^1(w_1(E)^2)$ $=w_1(TM)(w_1(E)^2)$ $= 0$.
Thus,
\bea
&&w_1(TM) w_2(V_{\SO(3)}) =w_1(TM)(w_2(E)+w_1(E)^2)\nn \\
&& =w_1(TM)w_2(E) =w_1(TM) w_2(V_{\tO(3)}).
\eea
The last equality in \eq{eq:w1TMw2SO3} is due to
$w_1(TM) w_2(V_{\SO(3)})$ $= \Sq^1 w_2(V_{\SO(3)})$ $={ w_3(V_{\SO(3)})}.$

See \Ref{W2}, we can combine the Steenrod-Wu formula for $i<j$:
\bea
 \Sq^i(w_j) = w_iw_j+\sum_{k=1}^i \binom{j-i-1+k}{k}
w_{i-k} w_{j+k},\quad \;
\eea
and Wu formula 
\bea
\Sq^{d-j}(x_j)=u_{d-j} x_j,  \text{ for any } x_j \in \H^j(M^d;\Z_2)\quad
\eea
to obtain:
\bea
&&w_1(E)w_2(E)+w_3(E)=\Sq^1(w_2(E))=w_1(TM)w_2(E),\nn\\
&&\Rightarrow w_3(E) =(w_1(E)+w_1(TM))w_2(E)\nn\\
&&\Rightarrow w_1(TM)w_2(E) = w_3(E) + w_1(E) w_2(E),
\eea
so we derive $w_1(TM)w_2(E)$ is  $(0,1,1,0)$ in the basis of \eq{eq:bordism3OBO-basis}.
The physical meaning of the 2d anomaly \eq{eq:w1TMw2SO3} will be explored later in \Sec{sec:rule}, \Sec{sec:5d3d} and in Fig.~\ref{fig:YM-reduce},
which can be understood as the dimensional reduction of 4d anomaly of YM theory compactified on a 2-torus with twisted boundary conditions 
\cite{Yamazaki:2017ulc} \cite{Yamazaki:2017dra}.

In \Sec{sec:cubic}, We had checked some of the 2d bosonic anomaly by dimensional reducing from 4d to 2d, can be captured by
abelian bosonization method as Type I bosonic anomaly in \cite{WangSantosWen1403.5256}.
Some of the anomalies in the above may be also related to other (Type II or Type III) bosonic discrete anomalies, when
we break down the global symmetry to certain subgroups.

%
%
%
%\subsection{Anomalies of 2d $\mathbb{CP}^{\rN-1}$-model}
%
 %
%
%

\section{Rules of The Game  for Anomaly Matching Constraints} 
\label{sec:rule}

With all the QFT and global symmetries information given in \Sec{sec:remark},
and all the possible anomalies enumerated by the cobordism theory computed in \Sec{sec:cobo-top},
and all the known anomalies in the literature derived and re-written in terms of cobordism invariants organized in \Sec{sec:rev},
now we are ready to set up the rules of the game to determine the full anomaly constraints for these QFTs (4d SU(N) YM theory and 2d $\CP^{\rN-1}$ model at $\theta=\pi$).\\

{Below we simply abbreviate the 
``5d invariant'' as the 5d cobordism/(higher) SPTs invariants which captures the anomaly of 4d SU(N) YM at $\theta=\pi$ at even N,
and
``3d invariant'' as the 3d cobordism/SPTs invariants which captures the anomaly of 2d $\CP^{\rN-1}$ at $\theta=\pi$ at even N.
Our convention chooses the natural time-reversal symmetry transformation as $\cC\cT$.
}\\

\noindent
Rules:
\begin{enumerate}[label=\textcolor{blue}{Rule \arabic*.}, ref={Rule \arabic*}]

\item \label{Rule 1}
For 5d invariant,
for 4d SU(N) YM  at $\theta=\pi$ of an even integer N must have analogous anomaly captured by 5d cobordism term of 
$\sim w_1(TM)(B_2)^2$ (up to some properly defined normalization and quantization). 
(It will become transparent later in \eq{eq:w1PB} that the precise term needs to be $\frac{1}{2} \tilde w_1(TM) \cP_2(B_2))= B_2\Sq^1B_2+\Sq^2\Sq^1B_2$.)

\item \label{Rule 2}
The chosen 5d invariants may be non-vanished in O-bordism group, but they are vanished in SO-bordism group.

\item \label{Rule 3}
The 3d invariant for 2d $\CP^1$ model must include the 3d cobordism invariants discussed in \Sec{sec:rev}, in particular, \eq{eq:2d-anomaly-CP1}.

\item \label{Rule 4}
The 3d invariant for other 2d $\CP^{\rN-1}$ for even N (e.g. 2d $\CP^3$) model must include some of familiar terms generalizing that of 2d $\CP^1$ model.

\item \label{Rule 5} 
Due to the physical meanings of $\cC \cT$ and $\cT$ (and other orientation-reversal symmetries), 
we must impose a swapping symmetry for 5d invariants.

\item \label{Rule 6} 
Relating the 5d and 3d invariants: There is a dimensional reductional constraint and physical meanings 
between the 5d and 3d invariants, for example, by the twist-compactification on 2-torus $T^2$.

\item \label{Rule 7} 
The 5d invariants for a 4d pure YM theory must involve the nontrivial 2-form $B_2$ field. 
The 5d terms that involve no $B_2$ dependence should be discarded. %abandoned.

\end{enumerate}

%
%\newpage
\onecolumngrid
\begin{widetext}
\begin{figure}[!h]
\centering
\includegraphics[scale=2.]{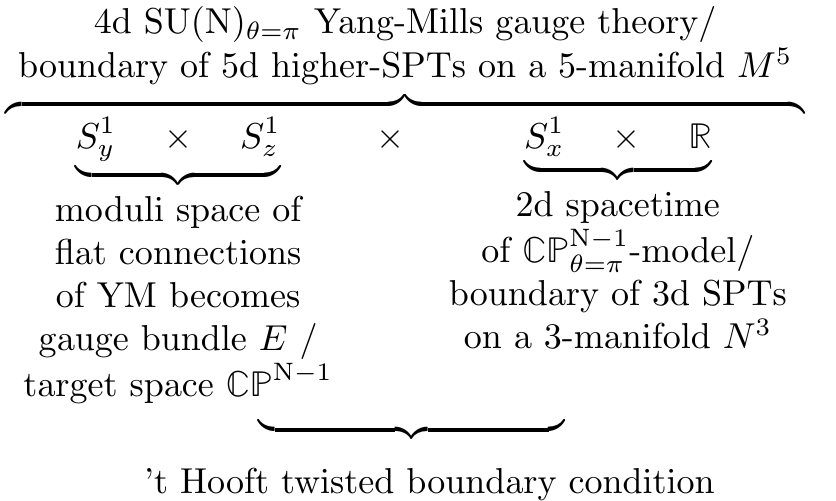}
  \caption{Follow the setup of the twisted boundary condition induced 't Hooft {boundary condition} %magnetic flux 
  \cite{tHooft1979rtgEM} along the 2-torus $T^2_{zx} \equiv S^1_z \times S^1_x$,
and the twisted compactification \cite{Yamazaki:2017ulc} \cite{Yamazaki:2017dra}, we examine that the higher anomaly of 4d SU(N) YM theory at $\theta=\pi$ induces
the anomaly of 2d $\mathbb{CP}^{\mathrm{N}-1}$ model at $\theta=\pi$. The 4d YM on ${S^1_x \times S^1_y \times S^1_z \times \mathbb{R}}$
is compactified along the small size of $T^2_{yz} \equiv S^1_y \times S^1_z$, whose moduli space of flat connections becomes the target space $\mathbb{CP}^{\mathrm{N}-1}$ \cite{Looijenga1976, FriedmanWitten1997yq9701162},
while the remained $ S^1_x \times \mathbb{R}$ becomes the 2d spacetime of 2d $\mathbb{CP}^{\mathrm{N}-1}$ model.
Our goal, in Sec.~\ref{sec:rule}, \ref{sec:newSUn} and \ref{sec:newCPn} is to identify the underlying 't Hooft anomalies 
of 4d SU(N) YM and 2d $\mathbb{CP}^{\mathrm{N}-1}$ model, namely identifying the theories living on the boundary ($\equiv$ bdry) of 5d and 3d (higher) SPTs
when all the (higher) global symmetries needed to be regularized strictly \emph{onsite} and \emph{local} 
(e.g. [\onlinecite{Wen2013oza1303.1803, 1405.7689, Wang2017locWWW1705.06728}]). 
(Higher global symmetries can be regularized, instead on the 0-simplex, on the higher-simplices: 1-simplex, 2-simplex, etc.)
The twisted boundary condition of 4d YM for 1-form $\mathbb{Z}_\mathrm{N}$-center symmetry (as a higher symmetry twist of \cite{1405.7689})
can be dimensionally reduced to the 0-form $\mathbb{Z}_\mathrm{N}$-flavor symmetry twisted \cite{Dunne2012aeUnsal1210.2423} in the 2d $\mathbb{CP}^{\mathrm{N}-1}$ model. 
In fact, the twisted boundary conditions can be designed numerically, and the twisted boundary conditions can result in a fractional instanton number in a gauge theory, 
see, for example, recent numerical attempts and Reference therein \cite{Itou2018wkm1811.05708}.
%(in addition to the 1-form symmetry twist)
%In \Sec{sec:5d3d}, we generalize
%to impose twisted boundary conditions along other novel 2-submanifolds ${\CU}^2$ 
%(such as $\RP^2$, $\RP^2\#T^2$, or $T^2\#T^2$, etc.).
}
\label{fig:YM-reduce}
\end{figure}

\pagebreak
\end{widetext}
\twocolumngrid

Here are the explanations for our rules.\\

\ref{Rule 1} is  based on \Sec{sec:SUN-YM-mix-higher},  
for 4d SU(N) YM  at $\theta=\pi$ of an even integer N must have analogous anomaly captured by 5d cobordism term of 
$\sim w_1(TM)B_2^2$ (up to some properly defined normalization and quantization), where we choose the linear time-reversal symmetry transformation from $\cC\cT$
and a quadratic term of 2-form fields $B_2$ coupling to 1-form center symmetry.\\

\ref{Rule 2}'s physical reasoning is that the time-reversal symmetry transformation from $\cC\cT$ plays an important role for the anomaly.
We can see from \Sec{sec:C-P-R} that only when 
time-reversal or orientation reversal is involved ({$\cT$, $\cP$, $\cC \cT$ and $\cC \cP$}), we have the mixed higher anomalies for YM theory;
while for the others ($\cC$,  $\cP \cT$ and  $\cC \cP \cT$), we do not gain mixed anomalies (e.g. with the 1-form center symmetry). \\

\ref{Rule 3} is dictated by the known physics derivations in \Sec{sec:rev} and in the literature.\\

\ref{Rule 4} will become clear in \Sec{sec:newCPn}.\\

\ref{Rule 5}, the swapping symmetry for 5d invariants between $\cC \cT$ and $\cT$ (and other orientation-reversal symmetries), 
we will interpret the unoriented O($D$) spacetime symmetry with the time reversal from $\cC\cT$ or from $\cT$
can be swapped. 
This means that we can choose 
the 5d topological invariant from the former $\Omega_5^{\tO}(\B\Z_2\ltimes\B^2\Z_4) \equiv \Omega_5^{(\tO \times (\Z_2 \ltimes \B\Z_4)) }$ for $\cC\cT$, 
rather than the more complicated latter $\Omega_5^{\tO_{\ltimes}}(\B \Z_2 \ltimes \B^2\Z_4) \equiv \Omega_5^{(\tO \times \Z_2) \ltimes \B\Z_4 }$ for  $\cT$. 
We focus on the 5d terms involving $\cC \cT$-symmetry.
\\

\ref{Rule 6} about the dimensional reduction from 5d to 3d (or 4d to 2d) is explained in \Fig{fig:YM-reduce} and the main text, such as in \Sec{sec:5d3d}.
We should also find the mathematical meanings behind this constraint in \Sec{sec:5d3d}.\\

\ref{Rule 7} is based on the physical input that there should be \emph{no} obstruction to regularize a pure YM theory by imposing only ordinary 0-form symmetry alone \emph{onsite}.
The obstruction only comes from 
regularizing a pure YM theory with the involvement of restricting
both the higher 1-form center symmetry to be \emph{on-link and local}, and the ordinary 0-form symmetry to be \emph{on-site and local}.
Here \emph{on-link} means that the symmetry acts locally on the 1-simplex,
and
\emph{on-site} means that the symmetry acts locally on the 0-simplex or a point.

Thus, it is necessary to turn on the 2-form background field $B_2$ in order to detect the 't Hooft anomaly of YM theory.
Namely, the 5d cobordism invariants of the form $w_1(TM)^{\rm t} \cup A^{\rm a}$ with $\rm t + \rm a =5$ should be discarded
out of the candidate list of 5d term for 4d YM anomalies.\\

Therefore, we can refine the set \eqn{eq:select-N=4YM} to a smaller subset
satisfying  \ref{Rule 7}:
\bea  \label{eq:select-N=4YM-2}
\boxed{\left\{\begin{array}{l} 
B_2\cup\beta_{(2,4)}B_2,\\ 
w_1(TM^5)^2\beta_{(2,4)} B_2,\\ 
A^2\beta_{(2,4)}B_2,\;AB_2w_1(TM^5)^2.
\end{array}\right.}
\eea

%%%%%%%%%%%%%%%%%%%%%%%%%%%
%%%%%%%%%%%%%%%%%%%%%%%%%%%
%%%%%%%%%%%%%%%%%%%%%%%%%%%
%%%%%%%%%%%%%%%%%%%%%%%%%%%
%%%%%%%%%%%%%%%%%%%%%%%%%%%
%%%%%%%%%%%%%%%%%%%%%%%%%%%
%%%%%%%%%%%%%%%%%%%%%%%%%%%
%%%%%%%%%%%%%%%%%%%%%%%%%%%
%%%%%%%%%%%%%%%%%%%%%%%%%%%
%%%%%%%%%%%%%%%%%%%%%%%%%%%
%%%%%%%%%%%%%%%%%%%%%%%%%%%
%\section{5d to 3d Dimensional Reduction}

\section{Rules of Dimensional Reduction: 5d to 3d}

\label{sec:5d3d}

Now we aim to utilize the Rule 6 in \Sec{sec:rule} and the new anomaly of 2d $\mathbb{CP}^{\rN-1}$-model  found in \Sec{sec:newCPn},
to deduce the new higher anomaly of 4d YM theory --- which later will be organized in \Sec{sec:newSUn}.

From the physics side, follow \cite{Yamazaki:2017ulc}, see Fig.\ref{fig:YM-reduce},
 we choose the 4d YM living on ${S^1_x \times S^1_y \times S^1_z \times \R}$,
such that the size $L_y$, $L_z$ of $S^1_y \times S^1_z$
is taken to be much smaller than the size $L_x$ of $S^1_x$,
namely  $L_y, L_z \ll L_x$. Then, below the energy gap scale
$$
\Delta_E \ll {L_y}^{-1} \text{ and } {L_z}^{-1}, 
$$
the resulting 2d theory on ${S^1_x \times \R}$
is given by a sigma model with a target space of $\mathbb{CP}^{\rN-1}$.
There are several indications that the low energy theory is a 2d
$\mathbb{CP}^{\rN-1}$-model:
\begin{itemize}
\item The 4d and 2d instanton matchings in \cite{atiyah1984, Donaldson1984tm} and other mathematical works.
The $\theta=\pi$-topological term of SU(N) YM 
is mapped to the $\theta=\pi$-topological term of 2d  $\mathbb{CP}^{\rN-1}$-model.

\item The moduli space of flat connections on the 2-torus $T^2=S^1_y \times S^1_z$ of 4d YM theory 
is the projective space $\mathbb{CP}^{\rN-1}$ \cite{Looijenga1976, FriedmanWitten1997yq9701162} (up to the geometry details of
no canonical Fubini-Study metric and singularities mentioned in \cite{Yamazaki:2017ulc} and footnote \ref{footnote:FS}). See Fig.\ref{fig:YM-reduce}.

\item The 1-form $\Z_\rN$-center symmetry of 4d YM is dimensionally reduced, in addition to 1-form symmetry itself, also to a 0-form $\Z_\rN$-flavor of 2d $\CP^{\rN-1}$ model.
The twisted boundary condition of 4d YM for 1-form $\Z_\rN$-center symmetry (e.g., \cite{Tanizaki:2017qhf1710.08923} as a higher symmetry twist of \cite{1405.7689})
can be dimensionally reduced to the 0-form $\Z_\rN$-flavor symmetry twisted \cite{Dunne2012aeUnsal1210.2423} in the 2d $\CP^{\rN-1}$ model. 

\item Ref.~\cite{Yamazaki:2017dra} derives that the physical meaning of the 2d anomaly \eq{eq:w1TMw2SO3}
is directly descended from the 4d anomaly \eq{eq:BSq1B-1} of YM theory by twisted $T^2$ compactification.
 
\end{itemize}

Encouraged by the above physical and mathematical evidences, 
in this section, we formalize the 4d and 2d anomaly matching under the twisted $T^2$ compactification,
into a mathematical precise problem of the
5d and 3d cobordism invariants (SPTs/topological terms) matching, under a
2-torus $T^2$ dimensional reduction.

Below we follow our notations of the bordism groups in \Sec{sec:cobo-top},
and their $D$d = $(d+1)$d cobordism invariants to the $d$d  anomalies of QFTs.
We may simply abbreviate, 
$$\text{``5d cobordism invariants for 4d YM theory's anomaly''}$$
$$\equiv  \text{``5d (Yang-Mills) terms.''}$$
We may simply abbreviate, 
$$\text{``3d cobordism invariants for 2d $\CP^{\rN-1}$ model's anomaly''}$$
$$\equiv  \text{``3d ($\CP^{\rN-1}$) terms.''}$$

\subsection{Mathematical Set-Up}\label{sec:set-up}

We aim to find the corresponding 3d $\CP^{\rN-1}$ terms which are obtained from certain 5d Yang-Mills terms under a ``$T^2$ reduction''.

Recall that we have the following group isomorphisms in \Sec{sec:B}, \Sec{sec:D}, \Sec{sec:F}, and \Sec{sec:G}:
\bea \label{eq:Phi1}
&\Phi_1:&\Omega_5^{\tO}(\B^2\Z_2)\to\Z_2^4,\notag\\
&&(M^5,B_2)\mapsto(B_2\cup \Sq^1B_2,\Sq^2 \Sq^1B_2,\notag\\
&&w_1(TM^5)^2\cup \Sq^1B_2,w_2(TM^5)w_3(TM^5)).
\eea 
\bea \label{eq:Phi2}
&\Phi_2:&\Omega_3^{\tO}(\B\tO(3))\to\Z_2^4,\notag\\
&&(M^3,E)\mapsto(w_1(E)^3,w_1(E)w_2(E),\notag\\
&&w_3(E), w_1(E)w_1(TM^3)^2).
\eea
\bea  \label{eq:Phi3'}
&\Phi_3':&\Omega_5^{\tO}(\B\Z_2\ltimes\B^2\Z_4)\to\Z_2^{11}\notag\\
&&(M^5,A,B_2)\mapsto(B_2\cup\beta_{(2,4)}B_2, AB_2w_1(TM^5)^2,\nn\\
&&w_1(TM^5)^2\beta_{(2,4)} B_2, A^2\beta_{(2,4)}B_2, \nn\\
&&A^3w_1(TM^5)^2, Aw_1(TM^5)^4,\nn\\
&&
\Sq^2\beta_{(2,4)} B_2, w_2(TM^5)w_3(TM^5),\nn\\
&&
A^5,
AB_2^2,Aw_2(TM^5)^2).
\eea
and
\bea
&\Phi_4:&\Omega_3^{\tO}(\B(\Z_2\ltimes\PSU(4)))\to\Z_2^4,\notag\\
&&(M^3,w_1(E),w_2(E))\mapsto(w_1(E)^3,w_1(E)(w_2(E)\mod2)   ,\notag\\
&&\beta_{(2,4)}w_2(E),w_1(E)w_1(TM^3)^2). 
\eea

We have the group isomorphisms
\bea\label{eq:Phi1'}
&\Phi_1':&\Omega_5^{\tO}(\B^2\Z_2)/\Omega_5^{\SO}(\B^2\Z_2)\to\Z_2^2,\notag\\
&&[(M^5,B_2)]\mapsto(B_2\cup \Sq^1B_2+\Sq^2 \Sq^1B_2,\notag\\
&&w_1(TM^5)^2\cup \Sq^1B_2).
\eea

and

\bea  \label{eq:Phi3''}
&\Phi_3'':&\Omega_5^{\tO}(\B\Z_2\ltimes\B^2\Z_4)/\Omega_5^{\SO}(\B\Z_2\ltimes\B^2\Z_4)\to\Z_2^{6}\notag\\
&&[(M^5,A,B_2)]\mapsto(B_2\cup\beta_{(2,4)}B_2, AB_2w_1(TM^5)^2,\nn\\
&&w_1(TM^5)^2\beta_{(2,4)} B_2, A^2\beta_{(2,4)}B_2, \nn\\
&&A^3w_1(TM^5)^2, Aw_1(TM^5)^4.
\eea

\subsubsection{General Set-Up}\label{sec:set-up-general}

In general, if $V$ and $W$ are two vector spaces, $\rho:V\to W$ is a linear map, and we choose the  bases of $V$ and $W$ to be $(e_1,\dots,e_n)$ and $(f_1,\dots, f_m)$ respectively. Suppose the matrix of $\rho$ with respect to the bases chosen above is an $m \times n$ matrix $\tA$, namely, we have
$$\rho(e_1,\dots,e_n)=(f_1,\dots, f_m)\tA.$$

Let $\rho^*:W^*\to V^*$ be the dual linear map of $\rho$, and $(e_1^*,\dots,e_n^*)$ and $(f_1^*,\dots, f_m^*)$ be the dual bases of $V^*$ and $W^*$ respectively. Suppose $\tA=(a_{ij})$, then we have
$$\rho(e_i)=\sum_{j=1}^m  a_{ji} f_j,$$
and 
$$\rho^*(f_j^*)=\sum_{i=1}^na_{ji}e_i^*.$$
Namely, the matrix of $\rho^*$ with respect to the dual bases chosen above is $\tA^T$, the transpose of $\tA$.

\subsubsection{Mathematical Set-Up for $\rN=2$}\label{sec:set-up-N=2}

Below we elaborate on the $\rN=2$ case first.

Let $W$ in this subsection be the quotient group $\Omega_5^{\tO}(\B^2\Z_2)/\Omega_5^{\SO}(\B^2\Z_2)$. 

We choose a basis of $W$ to be $(f_1,f_2)$ where 
\bea
\left\{
\begin{array}{l}
f_1=\Phi_1'^{-1}(1,0),\\
f_2=\Phi_1'^{-1}(0,1).
\end{array}
\right.
\eea
The dual basis of $W^*$ is $(f_1^*,f_2^*)$
where 
\bea
\left\{
\begin{array}{l}
f_1^*=B\Sq^1B+\Sq^2\Sq^1B,\\
f_2^*=w_1(TM^5)^2\Sq^1B.
\end{array}
\right.
\eea

Let $V=\Omega_3^{\tO}(\B\tO(3))$.

We choose a basis of $V$ to be $(e_1,e_2,e_3,e_4)$ where
\bea
\left\{
\begin{array}{l}
e_1=\Phi_2^{-1}(1,0,0,0),\\
e_2=\Phi_2^{-1}(0,1,0,0),\\
e_3=\Phi_2^{-1}(0,0,1,0),\\
e_4=\Phi_2^{-1}(0,0,0,1).
\end{array}
\right.
\eea

The dual basis of $V^*$ is $(e_1^*,e_2^*,e_3^*,e_4^*)$
where 
\bea
\left\{
\begin{array}{l}
e_1^*=w_1(E)^3,\\
e_2^*=w_1(E)w_2(E),\\
e_3^*=w_3(E),\\
e_4^*=w_1(E)w_1(TN^3)^2.
\end{array}
\right.
\eea

We aim to construct a linear map $F:V\to W$, then we can find the image of $f_i^*$ under $F^*$, which is the desired 3d term reduced from the 5d term $f_i^*$.

\subsubsection{Mathematical Set-Up for $\rN=4$}\label{sec:set-up-N=4}
\label{sec:Set-UpforN=4}

Below we elaborate on the $\rN=4$ case.

Let $W$ be the subgroup of the quotient group $\Omega_5^{\tO}(\B\Z_2\ltimes\B^2\Z_4)/\Omega_5^{\SO}(\B\Z_2\ltimes\B^2\Z_4)$ defined by
\bea
W&=&\{[(M^5,A,B)]\in\Omega_5^{\tO}(\B\Z_2\ltimes\B^2\Z_4)/\Omega_5^{\SO}(\B\Z_2\ltimes\B^2\Z_4)\mid\nn\\
&&\text{the 3-rd to 6-th } \text{components of }\Phi_3''[(M^5,A,B)]\text{ are 0}\}.\nn\\
\eea

%\cred
{First, {the 5-th to 6-th components of} $\Phi_3''[(M^5,A,B)]$ {are 0} because
we derive that from \eqn{eq:select-N=4YM-2}, 
for \eqn{eq:Phi3''}, only $B_2 \beta_{(2,4)}B_2$, 
$w_1(TM^5)^2\beta_{(2,4)} B_2$, 
$A^2\beta_{(2,4)}B_2$, 
and
 $AB_2w_1(TM^5)^2$ are possible 5d terms in \eqn{eq:Phi3''} for YM theory, while
other terms  
$A^3w_1(TM^5)^2$ and $Aw_1(TM^5)^4$ must \emph{not} appear in the 4d anomaly of SU(N) YM at $\rN=4$.
Moreover, we will discuss the possibility of $w_1(TM^5)^2\beta_{(2,4)}B_2$ and $A^2\beta_{(2,4)}B_2$ separately on an upcoming work.
Here we assume that $w_1(TM^5)^2\beta_{(2,4)}B_2$ and $A^2\beta_{(2,4)}B_2$ are unlikely for a 4d SU(4) YM theory, thus 
we will set {the 3-rd to 6-th components of} $\Phi_3''[(M^5,A,B)]$ {are 0}.
}

We choose a basis of $W$ to be $(f_1,f_2)$ where 
\bea
\left\{
\begin{array}{l}
f_1=\Phi_3''^{-1}(1,0,0,\dots,0),\\
f_2=\Phi_3''^{-1}(0,1,0,\dots,0).
\end{array}
\right.
\eea
The dual basis of $W^*$ is $(f_1^*,f_2^*)$
where 
\bea
\left\{
\begin{array}{l}
f_1^*=B\beta_{(2,4)}B,\\
f_2^*=Aw_1(TM^5)^2B.
\end{array}
\right.
\eea

Let $V=\Omega_3^{\tO}(\B(\Z_2\ltimes\PSU(4)))$.

We choose a basis of $V$ to be $(e_1,e_2,e_3,e_4)$ where
\bea
\left\{
\begin{array}{l}
e_1=\Phi_4^{-1}(1,0,0,0),\\
e_2=\Phi_4^{-1}(0,1,0,0),\\
e_3=\Phi_4^{-1}(0,0,1,0),\\
e_4=\Phi_4^{-1}(0,0,0,1).
\end{array}
\right.
\eea

The dual basis of $V^*$ is $(e_1^*,e_2^*,e_3^*,e_4^*)$
where 
\bea
\left\{
\begin{array}{l}
e_1^*=w_1(E)^3,\\
e_2^*=w_1(E)(w_2(E)\mod2),\\
e_3^*=\beta_{(2,4)}w_2(E),\\
e_4^*=w_1(E)w_1(TN^3)^2.
\end{array}
\right.
\eea

We aim to construct a linear map $G:V\to W$, then we can find the image of $f_i^*$ under $G^*$, which is the desired 3d term reduced from the 5d term $f_i^*$.

\subsubsection{Construction of maps:\\ 
$F$ for $\rN=2$ and $G$ for $\rN=4$}
\label{sec:map-F-G}

{To construct the map $F$ and $G$, we first discuss the background fields for the global symmetries in both 4d/5d and 2d/3d. }
\begin{enumerate}
	\item {For $\rN=2$,  because our bordism calculations are only for the cases where time reversal commutes with other symmetries, the time reversal symmetry in 5d is identified as $\cC\cT$, and the background field is $w_1(TM^5)$. However, for $\rN=2$, the charge conjugation symmetry is trivial. In 3d, because both $\cT, \cC$ and $\cC\cT$  commute with all other symmetries, so we identify time reversal symmetry as $\cT$ whose background is $w_1(TN^3)$, and the charge conjugation symmetry in 3d is identified as $\cC$ whose background is $w_1(E)$. }
	\item {For $\rN=4$, because our bordism calculations are only for the cases where time reversal commutes with other symmetries, in 5d, we identify $\cC\cT$ as time reversal symmetry whose background field is $w_1(TM^5)$, and the charge conjugation is denoted as $\cC$ whose background field is $A$. In 3d, for the same reason as above, we identify $\cC\cT$ as time reversal symmetry whose background field is $w_1(TN^3)$, and  the charge conjugation is denoted as $\cC$ whose background field is $w_1(E)$. }

\end{enumerate}

We proceed to discuss the reduction rules for the symmetry background fields. We first focus on $\rN=2$. 
\begin{enumerate}
	\item $\cC\cT$ in 5d reduces to $\cC\cT$ in 3d. Correspondingly the background field for time reversal in 5d $w_1(TM^5)$ reduces to $w_1(TN^3)+ w_1(E)$. 
	This is because restricting $w_1(TM^5)$ to a submanifold should be of the form  $w_1(TN^3)+(...)$. To determine (...), we notice that 
	the 2d $\CP^1$ model with a theta term at $\theta =\pi \mod 2 \pi$ (but not other $\theta$ except $\theta =0 \mod 2 \pi$), respects the charge conjugation symmetry. On the other hand, the 4d YM with a theta term at $\theta =\pi \mod 2 \pi$ (but not other $\theta$ except $\theta =0 \mod 2 \pi$), respects the $\cC\cT$ symmetry. 
	We thus demand $w_1(TM^5)$ reduces to $w_1(TN^3)+w_1(E)$. 
	Formally, we need to find an embedding $\iota:N^3\hookrightarrow M^5$ such that 
	\bea\label{embeddingN2}
	w_1(TM^5)\mid_{N^3}=w_1(TN^3)+w_1(E). 
	\eea
	In summary, we find the symmetry reduction
	\begin{eqnarray}
	\text{5d}: \cC\cT \to \text{3d}: \cC\cT.
	\end{eqnarray}

	\item Following \Ref{Yamazaki:2017ulc}, the twisted boundary condition by the center symmetry (which is $B$) for the Yang-Mills is reduced to the twisted boundary condition by the $\Z_{\rN}$ global symmetry (which is $w_2(E)+\bar{K}_1w_1(E)^2$ for $\rN=2$) of $\CP^{\rN-1}$. Here we find two possibilities of the reduction, labeled by $\bar{K}_1\in \Z_2$. Formally, we need to find $B\in \H^2(M^5,\Z_2)$ such that  
	\be
	\begin{split}
		B\mid_{N^3}&=w_2(E)+\bar{K}_1 w_1(E)^2\\&=w_2(V_{\SO(3)}) +(\bar{K}_1+1) w_1(E)^2.
	\end{split}
	\ee
	We will determine  $\bar{K}_1$ at the end of \Sec{N=2reduction}. 
\end{enumerate}

After choosing an embedding $\iota:N^3\hookrightarrow M^5$, we have the Poincar\'e dual $\text{PD}:\H^2(M^5,\Z_2)\xrightarrow{\sim}\H_3(M^5,\Z_2)$, we denote $B'=\text{PD}^{-1}(\iota_*([N^3]))$. We require that $B'$ is the cup product of two different degree-1 cohomology classes. 
{We impose this condition because, as discussed at the beginning of \Sec{sec:5d3d},  the 3d submanifold $N^3$ is reduced from the 5d manifold $M^5$ by a 2-torus.}
{Suppose the normal bundle of the embedding $N^3\hookrightarrow M^5$ is $\nu$, then $\nu$ is the direct sum of two different line bundles, and the condition $B'\mid_{N^3}=w_2(\nu)$ is satisfied.}
{We also require that $\Sq^1B=\Sq^1B'$ to ensure that the map $F$ in Lemma \ref{lemmaF} is well-defined.}

Now we construct the map $F$, which we organize as the following lemma. 
\begin{lemma}\label{lemmaF}
	The map  $F$ is defined as
	\begin{eqnarray}
	F:\Omega_3^{\tO}(\B\tO(3))\to \Omega_5^{\tO}(\B^2\Z_2)/\Omega_5^{\SO}(\B^2\Z_2)
	\end{eqnarray}
	sending $(N^3,E)\in \Omega_3^{\tO}(\B\tO(3))$ to $[(M^5,B)]\in \Omega_5^{\tO}(\B^2\Z_2)/\Omega_5^{\SO}(\B^2\Z_2)$. $F$ is well defined. 
\end{lemma}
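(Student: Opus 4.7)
The plan is to establish well-definedness of $F$ in three stages: (a) existence of an admissible pair $(M^5, B)$ attached to each $(N^3, E)$; (b) independence of the resulting bordism class from the construction choices, modulo the oriented subgroup $\Omega_5^{\SO}(\B^2\Z_2)$; and (c) invariance under 4-dimensional bordisms in the source.

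For (a) I would give an explicit construction. First, pick a decomposition $w_1(E) = a_1 + a_2$ with $a_1, a_2 \in \H^1(N^3, \Z_2)$ and $a_1 \neq a_2$, together with real line bundles $L_i$ over $N^3$ satisfying $w_1(L_i) = a_i$; such a splitting is always possible after choosing an auxiliary class, because $\H^1(N^3, \Z_2)$ has enough room and no constraint fixes the individual summands. Setting $\nu := L_1 \oplus L_2$, define $M^5 := S(\nu \oplus \underline{\R})$, the sphere bundle of the rank-$3$ bundle $\nu \oplus \underline{\R}$, so that the constant section embeds $\iota: N^3 \hookrightarrow M^5$ with normal bundle $\nu$. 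The splitting $TM^5\vert_{N^3} = TN^3 \oplus \nu$ then yields $w_1(TM^5)\vert_{N^3} = w_1(TN^3) + w_1(E)$, which is precisely \eqref{embeddingN2}. The Poincar\'e dual $B' := \mathrm{PD}^{-1}(\iota_*[N^3])$ is the Thom class of $\nu$; because $\nu$ splits as a sum of two line bundles with distinct $w_1$, a cochain representative of $B'$ can be taken to be a cup product of two degree-one classes on $M^5$, and $B'\vert_{N^3} = w_2(\nu) = a_1 \cup a_2$. Finally, define $B := B' + \pi^*\bigl( w_2(V_{\SO(3)}) + \bar K_1\, w_1(E)^2 \bigr)$, which restricts correctly to $N^3$; the identity $\Sq^1 B = \Sq^1 B'$ follows once I check that the pullback correction term is annihilated by $\Sq^1$ using Wu's formula applied on $N^3$.

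For (b), suppose $(M^5_0, B_0)$ and $(M^5_1, B_1)$ are two admissible constructions for the same $(N^3, E)$. I interpolate the rank-$2$ bundles $\nu_0, \nu_1$ over $N^3 \times I$ to produce a rank-$2$ bundle $\tilde\nu$, and then form the sphere bundle $X^6 := S(\tilde\nu \oplus \underline{\R})$. This is a $6$-manifold with $\partial X^6 = M^5_0 \sqcup (-M^5_1)$, and the interpolated $B$-field extends to $X^6$ by naturality of the Thom construction. The key observation is that the only source of unorientability in $M^5_i$ is concentrated on $N^3$, which sits in the interior $N^3 \times \{1/2\}$; thus $X^6$ itself is orientable, so the difference class $[(M^5_0, B_0)] - [(M^5_1, B_1)]$ lies in $\Omega_5^{\SO}(\B^2\Z_2)$ and vanishes in the quotient.

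The main obstacle is stage (c). Given a bordism $(W^4, \tilde E)$ with $\partial W^4 = N^3_0 \sqcup (-N^3_1)$, I must extend the boundary decompositions $w_1(E_i) = a_{i,1} + a_{i,2}$ to a global decomposition $w_1(\tilde E) = \tilde a_1 + \tilde a_2$ over $W^4$. The obstruction lies in the cokernel of the restriction map $\H^1(W^4, \Z_2) \to \H^1(\partial W^4, \Z_2)$; I expect this obstruction either to vanish or to be absorbable into the $\Omega_5^{\SO}$-quotient using the long exact sequence of the pair $(W^4, \partial W^4)$ together with Lefschetz duality, perhaps after modifying the choices made in (a). Once the decomposition extends, forming the associated sphere bundle over $W^4$ yields the required $6$-manifold bordism, and the extension of $B$ is automatic by naturality of the Thom isomorphism. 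The most delicate calculation will be confirming that the extended $B$ continues to satisfy $\Sq^1 B = \Sq^1 B'$ over all of $W^4$, which should follow from the Wu relations applied on the bordism together with the cup-product factorization of $B'$.
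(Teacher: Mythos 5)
The paper's proof and yours take genuinely different routes, and yours has two serious gaps.

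The paper never constructs $F$ explicitly nor argues via $6$-dimensional bordisms. Instead it works entirely at the level of the numerical cobordism invariants $\Phi_2$ and $\Phi_1'$: since both $\Omega_3^{\tO}(\B\tO(3))$ and $\Omega_5^{\tO}(\B^2\Z_2)/\Omega_5^{\SO}(\B^2\Z_2)$ are finite $\Z_2$-vector spaces completely detected by these invariants, it suffices to verify that the vanishing of all four invariants $\int_{N^3}w_1(E)^3, \int_{N^3}w_1(E)w_2(E),\int_{N^3}w_3(E),\int_{N^3}w_1(E)w_1(TN^3)^2$ forces the vanishing of $\int_{M^5}(B\Sq^1B+\Sq^2\Sq^1B)$ and $\int_{M^5}w_1(TM^5)^2\Sq^1B$. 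This is carried out by a direct characteristic-class calculation: split $TM^5|_{N^3}=TN^3\oplus\nu$, use the Whitney sum formula to identify $w_1(\nu)=w_1(E)$ and compute $w_3(TM^5)|_{N^3}$, apply the splitting principle to $E\oplus\nu$ to show $\Sq^1w_2(E)=\Sq^1w_2(\nu)$, then use Wu formulas plus the imposed constraint $\Sq^1 B=\Sq^1 B'$ to close the computation. Your bordism-level strategy would be more constructive if it worked, but it does not close.

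The first gap is stage (b). Your own construction gives $M^5=S(\nu\oplus\underline{\R})$ and hence $X^6=S(\tilde\nu\oplus\underline{\R})$ over $N^3\times I$, for which $w_1(TX^6)=\pi^*\bigl(w_1(T(N^3\times I))+w_1(\tilde\nu)\bigr)=\pi^*\bigl(w_1(TN^3)+w_1(E)\bigr)$. This vanishes only when $N^3$ is orientable and $E$ reduces to $\SO(3)$, which is false for the relevant generators (e.g.\ $(S^1\times\RP^2,\,l_{S^1}+\underline{2})$). So $X^6$ is not orientable in general. (The orientability of $X^6$ is also not what you would want even if it held: an unoriented bordism $X^6$ between $M^5_0$ and $M^5_1$ already identifies the two classes in $\Omega_5^{\tO}(\B^2\Z_2)$; the harder unaddressed issue is whether the $B$-field and the constraints $B'=\mathrm{PD}^{-1}\iota_*[N^3]$ and $\Sq^1B=\Sq^1B'$ extend over $X^6$.)

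The second gap is stage (c), which you yourself flag as incomplete. The cokernel of the restriction $\H^1(W^4,\Z_2)\to\H^1(\partial W^4,\Z_2)$ is nontrivial in general (take $W^4=T^2\times D^2$), and nothing in the long exact sequence of the pair forces the obstruction to land in $\Omega_5^{\SO}$. Without a way to extend the splitting $w_1(E)=a_1+a_2$ over $W^4$, your $6$-dimensional cobounding manifold does not exist and the bordism-invariance of $F$ is not established. This is precisely the kind of difficulty the paper avoids by arguing at the level of invariants rather than manifolds.
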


\begin{proof}

For the map $F$ to be well-defined, the trivial element in $\Omega_3^{\tO}(\B\tO(3))$ must be mapped to the trivial element in $\Omega_5^{\tO}(\B^2\Z_2)/\Omega_5^{\SO}(\B^2\Z_2)$. Hence we need to prove that
\begin{eqnarray}
\left\{\begin{array}{llll}
\int_{N^3}w_1(E)^3=0,\\
\int_{N^3}w_1(E)w_2(E)=0,\\
\int_{N^3}w_3(E)=0,\\
\int_{N^3}w_1(E)w_1(TN^3)^2=0
\end{array}
\right.
\end{eqnarray}
implies 
\begin{eqnarray}
\left\{\begin{array}{ll}
\int_{M^5}B\Sq^1B+\Sq^2\Sq^1B=0,\\
\int_{M^5}w_1(TM^5)^2\Sq^1B=0
\end{array}
\right.
\end{eqnarray}

Since $TM^5\mid_{N^3}=TN^3\oplus \nu$ where $\nu$ is the normal bundle, 
by the Whitney sum formula for the total Stiefel-Whitney class, we have $w(TM^5\mid_{N^3})=w(TN^3)w(\nu)$.
So $w_1(TM^5)\mid_{N^3}=w_1(TN^3)+w_1(E)$ implies $w_1(E)=w_1(\nu)$, and $w_3(TM^5)\mid_{N^3}=w_3(TN^3)+w_2(TN^3)w_1(\nu)+w_1(TN^3)w_2(\nu)$, if $w_1(E)w_1(TN^3)^2=0$, then $w_2(TN^3)w_1(\nu)=0$ since $w_2(TN^3)=w_1(TN^3)^2$. Also since $w_3(TN^3)=0$, so $w_3(TM^5)\mid_{N^3}= w_1(TN^3)w_2(\nu)$.

{Consider $E\oplus \nu$, by splitting principle, the total Stiefel-Whitney class $w(E\oplus \nu)$ is the product of linear factors (of the form $1+x$ where $x$ is a degree-1 cohomology class). Since $w_1(E\oplus \nu)=w_1(E)+w_1(\nu)=0$, so $w_2(E\oplus \nu)$ is a 
sum of squares.\footnote{Suppose the total Stiefel-Whitney class $w(E\oplus \nu)=(1+x_1)^{n_1}(1+x_2)^{n_2}(1+x_3)^{n_3}$, then $w_1(E\oplus \nu)=n_1x_1+n_2x_2+n_3x_3=0$ implies $n_1=n_2=n_3=0\mod2$, so $n_i=2k_i$, and $w_2(E\oplus \nu)=k_1x_1^2+k_2x_2^2+k_3x_3^2$.} Since $w_2(E\oplus \nu)=w_2(E)+w_2(\nu)+w_1(E)w_1(\nu)=w_2(E)+w_2(\nu)+w_1(E)^2$, we have $\Sq^1w_2(E)=\Sq^1w_2(\nu)$. By Wu formula, we have $w_1(TN^3)w_2(\nu)=w_1(TN^3)w_2(E)$.}

By Wu formula, we have $\Sq^2\Sq^1B=(w_2(TM^5)+w_1(TM^5)^2)\Sq^1	B=(w_3(TM^5)+w_1(TM^5)^3)B$ and $w_1(TM^5)^2\Sq^1B=w_1(TM^5)^3B$.

We have
$\int_{M^5}w_1(TM^5)^2\Sq^1B'=\int_{M^5}w_1(TM^5)^3B'=\int_{N^3}(w_1(E)^3+w_1(TN^3)^3+w_1(E)w_1(TN^3)^2+w_1(TN^3)w_1(E)^2)=0$.

We also have 
$\int_{M^5}w_2(TM^5)\Sq^1B'=\int_{M^5}w_3(TM^5)B'=\int_{N^3}w_1(TN^3)w_2(\nu)=\int_{N^3}w_1(TN^3)w_2(E)=\int_{N^3}(w_1(E)w_2(E)+w_3(E))=0$ and
$\int_{M^5}B'\Sq^1B=\int_{N^3}\Sq^1(w_2(E)+\bar{K}_1w_1(E)^2)=\int_{N^3}(w_1(E)w_2(E)+w_3(E))=0$.

We also have
$\int_{M^5}w_1(TM^5)BB'=\int_{N^3}(w_1(TN^3)+w_1(E))(w_2(E)+\bar{K}_1w_1(E)^2)=0$.

By Wu formula, $\Sq^1(BB')=w_1(TM^5)BB'$, so $\int_{M^5}\Sq^1(BB')=\int_{M^5}(\Sq^1B)B'+B\Sq^1B'=0$, so $\int_{M^5}B\Sq^1B'=0$.

Since we impose the condition $\Sq^1B=\Sq^1B'$, we have proved the statement.

\end{proof}

We further discuss the reduction rules for the symmetry background fields in the case $\rN=4$. 
\begin{enumerate}
	\item $\cC\cT$ in 5d reduces to $\cC\cT$ in 3d. Correspondingly the background field for time reversal in 5d $w_1(TM^5)$ reduces to $w_1(TN^3)$. Moreover, the charge conjugation $\cC$ in 4d YM is a symmetry for any $\theta$, while $\cT$ in 2d $\CP^1$ model is also a symmetry for any $\theta$. Thus we demand that $A$ in 5d reduces to $w_1(TN^3)+ w_1(E)$ (which is the background for $\cT$) in 3d. Formally, we need to find an embedding $\iota:N^3\hookrightarrow M^5$ and $A\in \H^1(M^5,\Z_2)$  such that 
	\bea\label{embeddingN4}
	w_1(TM^5)\mid_{N^3}=w_1(TN^3),
	\eea
	and 
	\bea
	A\mid_{N^3}=w_1(TN^3)+w_1(E).
	\eea
	In summary, we find the symmetry reduction
	\begin{eqnarray}
	\text{5d}: (\cC\cT, \cC) \to \text{3d}: (\cC\cT, \cT). 
	\end{eqnarray}
	
	\item Following \Ref{Yamazaki:2017ulc}, the twisted boundary condition by the center symmetry (which is $B$) for the Yang-Mills is reduced to the twisted boundary condition by the $\Z_{\rN}$ global symmetry (which is $w_2(E)=w_2(V_{\PSU(4)})$ for $\rN=4$) of $\CP^{\rN-1}$.  Formally, we need to find $B\in \H^2(M^5,\Z_4)$ such that  
	\be
	\begin{split}
		B\mid_{N^3}=w_2(E).
	\end{split}
	\ee
\end{enumerate}

After choosing an embedding $\iota:N^3\hookrightarrow M^5$, we have the Poincar\'e dual $\text{PD}:\H^2(M^5,\Z_2)\xrightarrow{\sim}\H_3(M^5,\Z_2)$, we denote $B'=\text{PD}^{-1}(\iota_*([N^3]))$. We require that $B'$ is the cup product of two different degree-1 cohomology classes. {We impose this condition because, as discussed at the beginning of \Sec{sec:5d3d},  the 3d submanifold $N^3$ is reduced from the 5d manifold $M^5$ by a 2-torus.}
{We also require that $(\tilde B+B')\beta_{(2,4)}B=0$ and $Aw_1(TM^5)^2(\tilde B+B')=0$ to ensure that the map $G$ in Lemma \ref{lemmaG} is well-defined.} {Actually since
$TM^5\mid_{N^3}=TN^3\oplus\nu$ where $\nu$ is the normal bundle of rank 2, $w_1(TM^5)\mid_{N^3}=w_1(TN^3)$ implies $w_1(\nu)=0$. We claim that $\nu$ is a trivial bundle\footnote{Every orientable real line bundle is a trivial bundle. Since $B'$ is the cup product of two different degree-1 cohomology classes, $\nu$ is the direct sum of two different line bundles, each of them is orientable, thus a trivial bundle.}, thus
 $M^5=N^3\times T^2$,  the 3rd to 6th components of $\Phi_3''[(M^5,A,B)]$ are zero and $[(M^5,A,B)]\in W$.}

Now we construct the map $G$, which we organize as the following lemma. 
\begin{lemma}\label{lemmaG}
	The map $G$ is defined as
	$G:\Omega_3^{\tO}(\B(\Z_2\ltimes\PSU(4)))\to W\subset \Omega_5^{\tO}(\B\Z_2\ltimes\B^2\Z_4)/\Omega_5^{\SO}(\B\Z_2\ltimes\B^2\Z_4)$ sending $(N^3,w_1(E),w_2(E))\in \Omega_3^{\tO}(\B(\Z_2\ltimes\PSU(4)))$ to $[(M^5,A,B)]\in W$. $G$ is well defined. 
\end{lemma}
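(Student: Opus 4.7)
The plan is to mimic the strategy used for Lemma \ref{lemmaF}, adapted to the twisted $\Z_2\ltimes\PSU(4)$-bundle setting. First I would exhibit a canonical choice of the data $(M^5,A,B)$ realizing the image of $(N^3,w_1(E),w_2(E))$: take $M^5 = N^3\times T^2$ (so the normal bundle $\nu$ of the embedding $\iota:N^3\hookrightarrow M^5$ is trivial of rank 2), let $\pi_1:M^5\to N^3$ and $\pi_2:M^5\to T^2$ be the two projections, and set $A=\pi_1^*(w_1(TN^3)+w_1(E))$, $B=\pi_1^*w_2(E)$, where $w_2(E)\in\H^2(N^3,\Z_{4,w_1(E)})$. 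The Poincar\'e dual is then $B'=\pi_2^*(\alpha\cup\beta)\in\H^2(M^5,\Z_2)$ for $\alpha,\beta$ the generators of $\H^1(T^2,\Z_2)$, which is a cup product of two distinct degree-one classes as required. Because $T^2$ is parallelizable one has $w_1(TM^5)=\pi_1^*w_1(TN^3)$ and the reduction identities \eqref{embeddingN4}, $A|_{N^3}=w_1(TN^3)+w_1(E)$, $B|_{N^3}=w_2(E)$ are tautologically satisfied.

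Next I would verify that $[(M^5,A,B)]$ actually lies in the subgroup $W$, i.e.\ that the third through sixth components of $\Phi_3''$ vanish. Each of the classes $w_1(TM^5)^2\beta_{(2,4)}B$, $A^2\beta_{(2,4)}B$, $A^3w_1(TM^5)^2$, $Aw_1(TM^5)^4$ is a pullback via $\pi_1$ of a degree-$5$ class on the $3$-manifold $N^3$, hence vanishes identically. This checks the constraint $(M^5,A,B)\in W$ that was imposed in \Sec{sec:Set-UpforN=4}. The compatibility conditions $(\tilde B+B')\beta_{(2,4)}B=0$ and $Aw_1(TM^5)^2(\tilde B+B')=0$ (with $\tilde B=B\bmod 2$) are similarly straightforward: the $\tilde B$-pieces are pullbacks of degree-5 classes from $N^3$, while the $B'$-pieces integrate as $\int_{T^2}\alpha\cup\beta=1$ times a degree-$3$ class on $N^3$ which vanishes under the assumed hypotheses on $(N^3,E)$.

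With these preliminaries, well-definedness reduces to showing that the hypotheses
\begin{eqnarray}
\!\!\!\!\!\!\!\!w_1(E)^3=w_1(E)(w_2(E)\!\bmod 2)=\beta_{(2,4)}w_2(E)=w_1(E)w_1(TN^3)^2=0
\end{eqnarray}
force $\int_{M^5}B\,\beta_{(2,4)}B=0$ and $\int_{M^5}Aw_1(TM^5)^2B=0$. For the first invariant, interpret $B\beta_{(2,4)}B$ as $\tilde B\cdot\beta_{(2,4)}B$ with $\Z_2$ coefficients, then use $(\tilde B+B')\beta_{(2,4)}B=0$ to replace $\tilde B$ by $B'$, and apply the push-pull identity $\int_{M^5}B'\cup x=\int_{N^3}x|_{N^3}$ to obtain $\int_{N^3}\beta_{(2,4)}w_2(E)=0$. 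For the second invariant, use the analogous swap $Aw_1(TM^5)^2\tilde B=Aw_1(TM^5)^2B'$ and push down to $N^3$, giving
\begin{equation}
\int_{N^3}(w_1(TN^3)+w_1(E))\,w_1(TN^3)^2 = \int_{N^3}w_1(TN^3)^3 + w_1(E)w_1(TN^3)^2.
\end{equation}
The first summand vanishes by the Wu formula on a closed $3$-manifold ($\Sq^1 w_1^2=2w_1\Sq^1w_1=0$ but also $=w_1^3$), while the second vanishes by hypothesis.

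The main obstacle, in my view, is not any one of the computations above but the bookkeeping associated with the twisted $\Z_4$ coefficient system $\Z_{4,w_1(E)}$ and the interplay between the $\Z_4$-valued field $B$ and the $\Z_2$-valued $B'$, $\tilde B$, and $\beta_{(2,4)}$. In particular, one must ensure that the chosen lift $B\in\H^2(M^5,\Z_{4,A})$ with $B|_{N^3}=w_2(E)$ actually exists when $w_1(E)\neq 0$, so that the action of $\pi_1(M^5)$ on the $\Z_4$-coefficients descends consistently from the action of $\pi_1(N^3)$; this is why the set-up restricts attention to embeddings with $A|_{N^3}=w_1(TN^3)+w_1(E)$ and requires $B'$ to be a product of two distinct degree-one classes, guaranteeing a compatible splitting $\nu\cong\underline{2}$. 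Once this consistency is secured, the remaining arguments are the formal push-pull manipulations indicated above, and $G$ is seen to be a well-defined group homomorphism.
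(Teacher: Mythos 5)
Your core well-definedness argument (push down along the Poincar\'e dual $B'$ of $N^3$, use the imposed conditions to trade $\tilde B$ for $B'$, then kill the resulting integrals over $N^3$ using the hypotheses plus $\int_{N^3}w_1(TN^3)^3=0$ from the Wu formula) is correct and is exactly the paper's argument. The genuine problem is in your first paragraph: the ``canonical choice'' $B=\pi_1^*w_2(E)$ on $M^5=N^3\times T^2$ is not an admissible representative. With that choice every one of the two surviving invariants is a pullback of a degree-$5$ class from the $3$-manifold $N^3$, so $\int_{M^5}B\beta_{(2,4)}B=\int_{M^5}Aw_1(TM^5)^2B=0$ identically and your $G$ is the zero map. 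That contradicts the values the construction must produce, e.g.\ $G(S^1\times K,0,\alpha'\beta')=f_1$ and $G(S^1\times\RP^2,\gamma,0)=f_2$, which are what make Theorem \ref{5d3dN4} nontrivial. The paper's representatives instead take $B=\pi_1^*w_2(E)+\hat B'$, where $\hat B'$ is a $\Z_4$-lift of the volume class $B'=\pi_2^*(\alpha\cup\beta)$ of the compactification torus (e.g.\ $B=\alpha_1'\beta'+\alpha_2'\alpha_3'$ on $K\times T^3$); this extra summand is precisely the 't Hooft twist on $T^2$ and still restricts to $w_2(E)$ on $N^3\times\{\mathrm{pt}\}$.

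Relatedly, your second paragraph misreads the role of the conditions $(\tilde B+B')\beta_{(2,4)}B=0$ and $Aw_1(TM^5)^2(\tilde B+B')=0$: these are constraints that the chosen $(M^5,A,B)$ must satisfy for \emph{every} input, since they are what define which classes are legitimate images, whereas you verify them only ``under the assumed hypotheses on $(N^3,E)$,'' i.e.\ only for null-bordant inputs. Indeed, for your choice of $B$ one has $\tilde B\beta_{(2,4)}B=0$ but $\int_{M^5}B'\beta_{(2,4)}B=\int_{N^3}\beta_{(2,4)}w_2(E)$, which equals $1$ on the generator $(S^1\times K,0,\alpha'\beta')$, so the condition fails there. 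With the corrected lift $B=\pi_1^*w_2(E)+\hat B'$ one gets $\tilde B+B'=\pi_1^*(w_2(E)\bmod 2)$ and $\beta_{(2,4)}\hat B'=0$, so both conditions hold identically, and then your paragraphs three and four go through verbatim to establish well-definedness.
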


\begin{proof}
	
	For the map $G$ to be well-defined, the trivial element in $\Omega_3^{\tO}(\B(\Z_2\ltimes\PSU(4)))$ must be mapped to the trivial element in $W$. Hence we need to prove that
	\begin{eqnarray}
	\left\{\begin{array}{llll}
	\int_{N^3}w_1(E)^3=0,\\
	\int_{N^3}w_1(E)w_2(E)=0,\\
	\int_{N^3}\beta_{(2,4)}w_2(E)=0,\\
	\int_{N^3}w_1(E)w_1(TN^3)^2=0
	\end{array}
	\right.
	\end{eqnarray}
	implies
	\begin{eqnarray}
	\left\{\begin{array}{ll}
	\int_{M^5}B\beta_{(2,4)}B=0,\\
	\int_{M^5}Aw_1(TM^5)^2B=0
	\end{array}
	\right.
	\end{eqnarray}

	We have
	$\int_{M^5}B'\beta_{(2,4)}B=\int_{N^3}\beta_{(2,4)}w_2(E)=0$
	and $\int_{M^5}Aw_1(TM^5)^2B'=\int_{N^3}(w_1(TN^3)+w_1(E))w_1(TN^3)^2=0$.
	
	Since we impose the conditions $(\tilde B+B')\beta_{(2,4)}B=0$ and $Aw_1(TM^5)^2(\tilde B+B')=0$, we have proved the statement.

\end{proof}

\subsection{From $\Omega_3^{\tO}(\B\tO(3))$ to $\Omega_5^{\tO}(\B^2\Z_2)/\Omega_5^{\SO}(\B^2\Z_2)$}
\label{N=2reduction}

In this subsection, we use the linear dual of the map $F$ defined in Lemma \ref{lemmaF} to reduce the bordism  invariants of $\Omega_5^{\tO}(\B^2\Z_2)/\Omega_5^{\SO}(\B^2\Z_2)$ to the bordism invariants of $\Omega_3^{\tO}(\B\tO(3))$. 

We first consider {the 5d cobordism invariants that characterize the 4d SU(2) YM theory's anomaly}.
We may also name these 5d invariants as ``\emph{5d Yang-Mills terms},'' ``\emph{5d terms},'' ``\emph{Yang-Mills terms},'' ``\emph{5d anomaly polynomial of Yang-Mills}" 
or  ``\emph{5d iTQFTs whose boundary can live 4d Yang-Mills}." 

{
Based on the discussions around \eq{eq:select-N=2YM} in \Sec{sec:B} and the \ref{Rule 1} in \Sec{sec:rule}, we can safely propose that the 5d Yang-Mills term for $\rN=2$  is at most } 
\be \label{eq:4dYM-N=2-anom}
B\Sq^1B+\Sq^2\Sq^1B+K_1w_1(TM^5)^2\Sq^1B, 
\ee
where $K_1 \in \Z_2 = \{0,1\}$.

Amusingly, \Ref{Wan2019oyr1904.00994} actually derive \eqn{eq:4dYM-N=2-anom} based on putting 4d YM on unorientable manifolds, and then turning on background $B$ fields.
\Ref{Wan2019oyr1904.00994} also gives mathematical and physical interpretations of the $K_1$ term,
based on the gauge bundle constraint,
\bea \label{eq:gauge-bundle-N=2}
\boxed{w_2(V_{\SO(3)})=B+K_1w_1(TM^5)^2+K_2w_2(TM^5)}.\;\; \quad
\eea

{
We should emphasize that our approach in this work to derive this possible term \eqn{eq:4dYM-N=2-anom} 
is sharply distinct from \Ref{Wan2019oyr1904.00994}, although we obtain the same result!
Although the starting definitions of the $K_1$ in \eqn{eq:4dYM-N=2-anom} 
and the $K_1$ in \eqn{eq:gauge-bundle-N=2}  (and in \Ref{Wan2019oyr1904.00994})
are distinct,
below we should also derive that the two $K_1$ are actually equivalent.
Thus, we use the same label for both $K_1$.
}

We show that, using the linear dual of the map $F$,  the \emph{5d Yang-Mills term} in \eqn{eq:4dYM-N=2-anom} reduces to the anomaly polynomial in 3d in theorem \ref{5d3dN2}. 
\begin{widetext}
\begin{theorem}\label{5d3dN2}
	The 5d anomaly polynomial for the SU(2) YM theory 
	\begin{eqnarray}\label{N2result5d}
	{B\Sq^1B+\Sq^2\Sq^1B+K_1w_1(TM^5)^2\Sq^1B}
	\end{eqnarray}
	reduces to the anomaly polynomial of 2d $\CP^1$ theory
		\bea \label{N2result}
		&&(\bar{K}_1+1)w_1(E)^3+w_3(E)+K_1(w_1(E)^3+w_1(E)w_1(TN^3)^2)\nn\\
		&=&{\bar{K}_1w_1(E)^3+w_1(TN^3)w_2(V_{\SO(3)})+w_1(E)w_2(V_{\SO(3)})+K_1(w_1(E)^3+w_1(E)w_1(TN^3)^2)}
		\eea
\end{theorem}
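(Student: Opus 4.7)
The strategy is to compute the pullback $F^*$ applied to each generator of the dual basis of $W^*$, namely $f_1^* = B\Sq^1 B + \Sq^2 \Sq^1 B$ and $f_2^* = w_1(TM^5)^2 \Sq^1 B$, and then assemble the combination $f_1^* + K_1 f_2^*$. By the general mechanism of \Sec{sec:set-up-general}, the dual map evaluates the 5d invariant on the image $F(N^3,E) = [(M^5,B)]$ constructed in Lemma \ref{lemmaF}, where $\iota\colon N^3 \hookrightarrow M^5$ satisfies $w_1(TM^5)|_{N^3} = w_1(TN^3) + w_1(E)$ and $B|_{N^3} = w_2(E) + \bar K_1 w_1(E)^2$, with the auxiliary condition $\Sq^1 B = \Sq^1 B'$ for $B' = \text{PD}^{-1}(\iota_*[N^3])$. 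The primary computational tool is Poincar\'e duality $\int_{M^5}\alpha\cdot B' = \int_{N^3} \iota^*\alpha$ for any degree-3 class $\alpha$.

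First I would apply the Wu formula on $M^5$ to rewrite the higher Steenrod terms, exactly as in the proof of Lemma \ref{lemmaF}: $\Sq^2 \Sq^1 B = (w_2(TM^5) + w_1(TM^5)^2)\Sq^1 B = (w_3(TM^5) + w_1(TM^5)^3) B$ and $w_1(TM^5)^2\Sq^1 B = w_1(TM^5)^3 B$. The 5d polynomial then collapses to $B\Sq^1 B + w_3(TM^5) B + (1+K_1)w_1(TM^5)^3 B$, a sum of products of a degree-3 class with $B$. Each such product can be evaluated by first replacing $B$ by $B'$ using $\Sq^1 B = \Sq^1 B'$ (and $\Sq^1$-Cartan, noting $\Sq^1(B-B')=0$) and then applying the Poincar\'e duality identity. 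Next, using the Whitney sum $TM^5|_{N^3}=TN^3\oplus\nu$ with $w_1(\nu)=w_1(E)$ (forced by the lemma's condition), I obtain $w_1(TM^5)^3|_{N^3} = (w_1(TN^3)+w_1(E))^3$ and $w_3(TM^5)|_{N^3} = w_1(TN^3)w_2(\nu)$ (after discarding $w_3(TN^3)=0$ on a 3-manifold). The relation $w_2(E\oplus\nu) = w_2(E) + w_2(\nu) + w_1(E)^2$ being a sum of squares then gives $\Sq^1 w_2(\nu) = \Sq^1 w_2(E)$, and hence $w_1(TN^3)w_2(\nu) = w_1(TN^3)w_2(E)$ via Wu on $N^3$.

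Finally I would assemble the pieces. Using the Steenrod–Wu identity $\Sq^1 w_2(E) = w_1(E)w_2(E) + w_3(E)$ together with Wu on $N^3$ (which gives $\Sq^1 w_2(E) = w_1(TN^3) w_2(E)$), the $w_3(TM^5) B$ contribution reduces to $w_3(E) + w_1(E)w_2(E)$; the $w_1(TM^5)^3 B$ contribution yields $w_1(E)^3 + w_1(E)w_1(TN^3)^2$ (the other cubic monomials vanish by Wu/degree reasons on $N^3$); and the $B\Sq^1 B$ term reduces, via $\int_{M^5} B'\Sq^1 B' = \int_{N^3}\Sq^1 w_2(\nu)$ and the Wu substitution, to $\bar K_1 w_1(E)^3$ (the $w_2(E)$-part has already been absorbed in the $w_3(TM^5)B$ contribution, leaving only the $\bar K_1 w_1(E)^2$ shift). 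Summing reproduces the first line of \eqn{N2result}; the second line follows from identifying $w_2(E)=w_2(V_{\SO(3)})$ and rewriting $w_3(E)$ via \eqn{eq:w3E}. The main obstacle will be tracking the $B$-vs-$B'$ discrepancy cleanly: one must verify that the $\Sq^1$-closed difference $D=B-B'$ contributes, through the identities $\int_{M^5} D\Sq^1 B' = \int_{M^5} w_1(TM^5) D B'$ and $\int_{M^5}\Sq^1(DB')=\int_{M^5}w_1(TM^5) DB'$, only the claimed $\bar K_1$-dependent pieces and nothing more, so that the final answer depends on the choices $(K_1,\bar K_1)$ exactly as stated in \eqn{N2result}.
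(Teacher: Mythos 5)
Your route is genuinely different from the paper's. The paper never manipulates the 5d polynomial symbolically on a general class: it evaluates $F$ on the four explicit manifold generators $\textcircled{1},\dots,\textcircled{4}$ of $\Omega_3^{\tO}(\B\tO(3))$, constructing in each case the concrete pair $(M^5,B)$ (e.g.\ $S^1\times\RP^4$ with $B=\gamma\zeta+\bar K_1\zeta^2$, $\RP^2\times\RP^3$ with $B=\alpha\beta+\bar K_1\beta^2$, etc.), reads off the matrix of $F$ in the bases $(e_i)$, $(f_j)$, and transposes. Your ``formula-level'' computation --- Wu on $M^5$ to collapse everything to $(\text{degree-3 class})\cdot B$, pass from $B$ to $B'$ using $\Sq^1(B-B')=0$, then Poincar\'e duality and the Whitney sum formula on $N^3$ --- is a legitimate alternative that works directly on an arbitrary bordism class and needs no generating set; correctly executed it does reproduce \eqn{N2result}.

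However, one step as written fails and produces an answer contradicting the theorem. The Whitney sum formula gives $w_3(TM^5)|_{N^3}=w_3(TN^3)+w_2(TN^3)w_1(\nu)+w_1(TN^3)w_2(\nu)$; you discard only $w_3(TN^3)=0$, but the middle term $w_2(TN^3)w_1(\nu)=w_1(TN^3)^2w_1(E)$ (using $w_2(TN^3)=w_1(TN^3)^2$ on a closed 3-manifold and $w_1(\nu)=w_1(E)$) is precisely the nontrivial invariant $e_4^*$ and cannot be dropped --- in the proof of Lemma \ref{lemmaF} it disappears only because the hypothesis $\int_{N^3}w_1(E)w_1(TN^3)^2=0$ is in force there. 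With your truncated formula the coefficient of $w_1(E)w_1(TN^3)^2$ in the final sum comes out as $1+K_1$ instead of the correct $K_1$. The corrected bookkeeping is: $\int_{M^5}B\Sq^1 B=\int_{N^3}\big(w_1(E)w_2(E)+\bar K_1 w_1(E)^3\big)$ (so the $w_1(E)w_2(E)$ piece you wanted to ``absorb'' is really there and cancels mod 2 against the one below), $\int_{M^5}w_3(TM^5)B=\int_{N^3}\big(w_1(E)w_1(TN^3)^2+w_1(E)w_2(E)+w_3(E)\big)$, and $\int_{M^5}w_1(TM^5)^3B=\int_{N^3}\big(w_1(E)^3+w_1(E)w_1(TN^3)^2\big)$; summing $B\Sq^1B+w_3(TM^5)B+(1+K_1)w_1(TM^5)^3B$ then gives $(\bar K_1+1+K_1)w_1(E)^3+w_3(E)+K_1\,w_1(E)w_1(TN^3)^2$, i.e.\ exactly \eqn{N2result}.
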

\end{widetext}

\begin{proof}

We first define the following notations to simplify the proof. 
\begin{enumerate}
	\item  $\alpha$ is the generator of the cohomology $\H^1(\RP^2,\Z_2)$.
	\item  $\beta$ is the generator of $\H^1(\RP^3,\Z_2)$.
	\item $\gamma$ is the generator of $\H^1(S^1,\Z_2)$.
	\item $\zeta$ is the generator of $\H^1(\RP^4,\Z_2)$.
\end{enumerate}
Recall that the manifold generators of $\Omega_3^{\tO}(\B\tO(3))$ are 
\bea
(N^3, E )=\left\{\begin{array}{llll}(\RP^3,l_{\RP^3}+2)\quad\textcircled{1},\\
(\RP^3,3l_{\RP^3})\quad\textcircled{2},\\
(S^1\times\RP^2,l_{S^1}+l_{\RP^2}+1)\quad\textcircled{3},\\
(S^1\times\RP^2,l_{S^1}+2)\quad\textcircled{4}.
\end{array}
\right.
\eea
Using the definitions in \Sec{sec:set-up}, we find $\Phi_2(\textcircled{1})=(1,0,0,0)$, $\Phi_2(\textcircled{2})=(1,1,1,0)$, $\Phi_2(\textcircled{3})=(1,1,0,1)$, $\Phi_2(\textcircled{4})=(0,0,0,1)$.
Thus $e_1=\textcircled{1}$, $e_2=\textcircled{1}+\textcircled{3}+\textcircled{4}$, $e_3=\textcircled{2}+\textcircled{3}+\textcircled{4}$, $e_4=\textcircled{4}$.

{
Following the construction of the map $F$ in Lemma \ref{lemmaF}, we have:\footnote{Note that $\textcircled{5}$ $(N^3,E)=(S^1\times\RP^2,l_{S^1}+2l_{\RP^2})$ is also a manifold generator of $\Omega_3^{\tO}(\B\tO(3))$, actually $\textcircled{5}=e_2+e_3+e_4$.
	Since $w_1(TN^3)=\alpha$, $w_1(E)=\gamma$, $w_2(E)=\alpha^2$, $w_2(E)+\bar{K}_1w_1(E)^2=\alpha^2$, 
	$w_1(TN^3)+w_1(E)=\gamma+\alpha$,
	we have
	$M^5=S^1\times\RP^2\times\RP^2$, $B=\gamma\alpha_1+\alpha_2^2$ where $B'=\gamma\alpha_1$.
	So $F(\textcircled{5})=f_1+f_2=F(e_2+e_3+e_4)$. 
	So $F$ is indeed well-defined.}
}

$\textcircled{1}$ $(N^3,E)=(\RP^3,l_{\RP^3}+2)$:

Since $w_1(TN^3)=0$, $w_1(E)=\beta$, $w_2(E)=0$, $w_2(E)+\bar{K}_1w_1(E)^2=\bar{K}_1\beta^2$, $w_1(TN^3)+w_1(E)=\beta$,
we have $M^5=S^1\times\RP^4$, $w_1(TM^5)=\zeta$, $B=\gamma\zeta+\bar{K}_1\zeta^2$ where $B'=\gamma\zeta$.

$\textcircled{2}$ $(N^3,E)=(\RP^3,3l_{\RP^3})$:

Since $w_1(TN^3)=0$, $w_1(E)=\beta$, $w_2(E)=\beta^2$, $w_2(E)+\bar{K}_1w_1(E)^2=(\bar{K}_1+1)\beta^2$, $w_1(TN^3)+w_1(E)=\beta$,
we have $M^5=S^1\times\RP^4$, $w_1(TM^5)=\zeta$, $B=\gamma\zeta+(\bar{K}_1+1)\zeta^2$ where $B'=\gamma\zeta$.

$\textcircled{3}$ $(N^3,E)=(S^1\times\RP^2,l_{S^1}+l_{\RP^2}+1)$:

Since $w_1(TN^3)=\alpha$, $w_1(E)=\gamma+\alpha$, $w_2(E)=\gamma\alpha$, $w_2(E)+\bar{K}_1w_1(E)^2=\gamma\alpha+\bar{K}_1\alpha^2$, $w_1(TN^3)+w_1(E)=\gamma$,
we have $M^5=\RP^2\times\RP^3$, $w_1(TM^5)=\alpha$, $B=\alpha\beta+\bar{K}_1\beta^2$ where $B'=\alpha\beta$.

$\textcircled{4}$ $(N^3,E)=(S^1\times\RP^2,l_{S^1}+2)$:

Since $w_1(TN^3)=\alpha$, $w_1(E)=\gamma$, $w_2(E)=0$, $w_2(E)+\bar{K}_1w_1(E)^2=0$, $w_1(TN^3)+w_1(E)=\gamma+\alpha$,
we have $M^5=S^1\times\RP^2\times\RP^2$, $w_1(TM^5)=\alpha_1+\alpha_2$, $B=\gamma\alpha_1$ where $B'=\gamma\alpha_1$.

Recall that in \Sec{sec:set-up}
$f_1=\Phi_1'^{-1}(1,0)$, $f_2=\Phi_1'^{-1}(0,1)$, we have
$F(\textcircled{1})=(\bar{K}_1+1)f_1+f_2$, $F(\textcircled{2})=\bar{K}_1f_1+f_2$, $F(\textcircled{3})=(\bar{K}_1+1)f_1$, $F(\textcircled{4})=f_2$.

So
$F(e_1)=(\bar{K}_1+1)f_1+f_2$, $F(e_2)=0$, $F(e_3)=f_1$, $F(e_4)=f_2$, and 
$F^*(f_1^*)=(\bar{K}_1+1)e_1^*+e_3^*$, $F^*(f_2^*)=e_1^*+e_4^*$.

\end{proof}

{Compared with the 5d anomaly polynomial \eqn{N2result5d}, the 3d anomaly polynomial \eqn{N2result} contains an extra parameter $\bar{K}_1$. In the following, we argue that $\bar{K}_1=1$ by comparing with the results in the literature. 
\begin{enumerate}
	\item In \Ref{Metlitski2017fmd1707.07686}, the authors computed the anomaly for the $\mathbb{CP}^1$ model with the $\Z_2^x(\equiv \Z_2^C)$-translation symmetry and the $\SO(3)$ symmetry, on an oriented manifold. Denote $T_x$ as the generator of $\Z_2^x$. Because in \Ref{Metlitski2017fmd1707.07686} $T_x^2$ acts trivially on all physical observables , in our notation they only considered the case $K_1=0$.    The $\Z_2^x$ symmetry background field is $w_1(E)$, and the $\SO(3)$ background field is $w_2(V_{\SO(3)})$.  \Ref{Metlitski2017fmd1707.07686} found the following anomaly polynomial
	\begin{equation}\label{1707}
	w_1(E)^3+ w_1(E) w_2(V_{\SO(3)})
	\end{equation} 
	By setting $w_1(TN^3)=0$ (because \Ref{Metlitski2017fmd1707.07686} only discussed oriented manifold) and $K_1=0$ in \eqn{N2result} and comparing with \eqn{1707}, we find $\bar{K}_1=1$. 
	\item In \Ref{Komargodski2017dmc1705.04786}, the authors computed the anomaly for the $\mathbb{CP}^1$ model with the $\Z_2^x(\equiv \Z_2^C)$-translation symmetry, the $\SO(3)$ symmetry, and time reversal symmetry on an unorientable manifold. Using the same notation as above and furthermore we denote the time reversal background as $w_1(TN^3)$, \Ref{Komargodski2017dmc1705.04786} found the anomaly polynomial
	\begin{equation}
	w_1(E)^3+ w_1(E) w_2(V_{\SO(3)}) + w_1(TN^3) w_2(V_{\SO(3)}) 
	\end{equation}
	By setting $K_1=0$ in \eqn{N2result}, this again requires $\bar{K}_1=1$. 
\end{enumerate}
}

\subsection{From $\Omega_3^{\tO}(\B(\Z_2\ltimes\PSU(4)))$ to $W\subset \Omega_5^{\tO}(\B\Z_2\ltimes\B^2\Z_4)/\Omega_5^{\SO}(\B\Z_2\ltimes\B^2\Z_4)$}\label{N=4reduction}

In this subsection, we use the linear dual of the map $G$ defined in Lemma \ref{lemmaG} to reduce the bordism  invariants of $W$ to the bordism invariants of $\Omega_3^{\tO}(\B(\Z_2\ltimes\PSU(4)))$. 

We first consider {the 5d cobordism invariants that characterize the 4d SU(4) YM theory's anomaly}
(abbreviate them as ``\emph{5d Yang-Mills terms}'').
%\cpurple
{
Based on the discussions around \eq{eq:select-N=4YM} in \Sec{sec:F} and the \ref{Rule 1} and \ref{Rule 7} in \Sec{sec:rule}, we can safely propose that the 5d Yang-Mills term for $\rN=4$ is at most
\bea
&&B\beta_{(2,4)}B+K_1' w_1(TM^5)^2\beta_{(2,4)}B+K_C' A^2\beta_{(2,4)}B\nn\\
&&+K_{C,1}' Aw_1(TM^5)^2B, 
\eea
{where $K_1',K_C',K_{C,1}' \in \Z_2 = \{0,1\}$.
$K_1',K_C',K_{C,1}'$ are distinct couplings different from the gauge bundle constraint couplings $K_1,K_C,K_{C,1}$ later in \Eq{eq:gauge-bundle-constraint-N=4}.}

In \Sec{sec:Set-UpforN=4}, we also mentioned the appearances of $w_1(TM^5)^2\beta_{(2,4)}B$ and $A^2\beta_{(2,4)}B$ for the anomaly of 4d SU(4) YM are unlikely and the full discussion is left for the future work \cite{WWZ2019-2}.}
Following the derivation in \Ref{Wan2019oyr1904.00994},
we find that the 5d Yang-Mills term for $\rN=4$ is 
\be\label{N4YM}
B\beta_{(2,4)}B+K_{C,1}Aw_1(TM^5)^2B
\ee
where $K_{C,1}$ is from the gauge bundle constraint 
\bea \label{eq:gauge-bundle-constraint-N=4}
&&w_2(V_{\PSU(4)})=B+2(K_1w_1(TM^5)^2+K_2w_2(TM^5)\nn\\
&&+K_CA^2+K_{C,1}Aw_1(TM^5))\mod4.
\eea
The detailed derivation will be left for the future work \cite{WWZ2019-2}. 

In the following, we show that, using the linear dual of the map $G$,  the \emph{5d Yang-Mills term} in \eqn{N4YM} reduces to the anomaly polynomial in 3d in theorem \ref{5d3dN4}. 
\begin{widetext}
	\begin{theorem}\label{5d3dN4}
		The 5d anomaly polynomial for the SU(4) YM theory 
		\begin{eqnarray}\label{N4result5d}
		B\beta_{(2,4)}B+K_{C,1}Aw_1(TM^5)^2B
		\end{eqnarray}
		reduces to the anomaly polynomial of 2d $\CP^3$ theory
		\bea \label{N4result}
		\beta_{(2,4)}w_2(E)+ K_{C,1}w_1(E)w_1(TN^3)^2
		={\beta_{(2,4)}w_2(V_{\PSU(4)})+K_{C,1}w_1(E)w_1(TN^3)^2.}
		\eea
	\end{theorem}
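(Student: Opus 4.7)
The plan is to proceed in direct parallel with the proof of Theorem \ref{5d3dN2}: compute the matrix of the linear map $G$ of Lemma \ref{lemmaG} on the four generators $e_1,\dots,e_4$ of $\Omega_3^{\tO}(\B(\Z_2\ltimes\PSU(4)))$, then dualize to read off $G^*$ acting on the 5d polynomial. First I would recall from Section \ref{sec:G} that the generators are $(\RP^3,\beta,0)$, $(S^1\times S^2,\gamma,\gamma')$, $(S^1\times K,0,\alpha'\cup\beta')$, and $(S^1\times\RP^2,\gamma,0)$ (corresponding to $e_1,e_2,e_3,e_4$ under $\Phi_4$), together with the fact that Lemma \ref{lemmaG} forces the normal bundle $\nu$ of $\iota:N^3\hookrightarrow M^5$ to be trivial of rank $2$, so that in each case I may take $M^5=N^3\times T^2$.

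For each generator I would define $A$ to be the pullback of $w_1(TN^3)+w_1(E)$ along the projection $M^5\to N^3$, write $B=\tilde B+B'$ where $\tilde B$ pulls back $w_2(E)$ from $N^3$ and $B'=\alpha_1\cup\alpha_2$ is Poincar\'e dual to the $N^3$ slice (hence the cup product of two different degree-$1$ classes from the $T^2$ factor, as required in the construction of $G$), and verify the two technical vanishings $(\tilde B+B')\beta_{(2,4)}B=0$ and $Aw_1(TM^5)^2(\tilde B+B')=0$. Then I would evaluate the two 5d cobordism invariants $B\beta_{(2,4)}B$ and $Aw_1(TM^5)^2B$ on each $M^5$. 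The anticipated outcome is: both vanish for $e_1$ and $e_2$ because $w_1(TM^5)=0$ and the relevant $\Z_4$-classes on $\RP^3$, $S^2$, and $T^2$ lift to $\Z_8$, killing $\beta_{(2,4)}$; for $e_3=(S^1\times K,0,\alpha'\cup\beta')$ the identity $\beta_{(2,4)}(\alpha'\cup\beta')=\alpha'\cup\sigma$ recorded in Section \ref{sec:E} makes $B\beta_{(2,4)}B$ equal to the top form on $S^1\times K\times T^2$, contributing to $f_1$; and for $e_4=(S^1\times\RP^2,\gamma,0)$ the nonvanishing of $w_1(T\RP^2)^2$ makes $Aw_1(TM^5)^2B$ equal to the top form on $S^1\times\RP^2\times T^2$, contributing to $f_2$.

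This identifies the matrix of $G$ as the partial permutation sending $e_3\mapsto f_1$, $e_4\mapsto f_2$, and $e_1,e_2\mapsto 0$. Transposing, one gets $G^*(f_1^*)=e_3^*=\beta_{(2,4)}w_2(E)$ and $G^*(f_2^*)=e_4^*=w_1(E)w_1(TN^3)^2$, so the 5d polynomial $B\beta_{(2,4)}B+K_{C,1}Aw_1(TM^5)^2B=f_1^*+K_{C,1}f_2^*$ reduces to $\beta_{(2,4)}w_2(E)+K_{C,1}w_1(E)w_1(TN^3)^2$, matching the statement after invoking $w_2(E)=w_2(V_{\PSU(4)})$.

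The main obstacle will be the careful Bockstein bookkeeping on the Klein bottle factor of $e_3$, where $\beta_{(2,4)}\beta'=\sigma$ drives the entire matching of $B\beta_{(2,4)}B$ to $\beta_{(2,4)}w_2(E)$, together with the case-by-case verification of the well-definedness conditions of Lemma \ref{lemmaG} on each constructed $M^5=N^3\times T^2$; these amount to subtle degree counts and cup-product identities on mixed orientable/non-orientable products, and any discrepancy here would change which basis vectors $f_i$ receive nontrivial contributions.
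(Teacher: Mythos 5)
Your proposal is correct and follows essentially the same route as the paper's proof: evaluate $G$ on the four manifold generators with $M^5=N^3\times T^2$, find $G(e_1)=G(e_2)=0$, $G(e_3)=f_1$, $G(e_4)=f_2$, and dualize to get $G^*(f_1^*)=e_3^*$, $G^*(f_2^*)=e_4^*$. The only (immaterial) difference is your choice of $(S^1\times S^2,\gamma,\gamma')$ for $e_2$ where the paper uses the equivalent generator $(T^3,\gamma_1,\alpha_2'\alpha_3')$; both map to $(0,1,0,0)$ under $\Phi_4$ and give $G(e_2)=0$.
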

\end{widetext}
\begin{proof}

For simplicity, we define the following notations:
\begin{enumerate}

\item $K$ is the Klein bottle.
\item $\alpha'$ is the generator of $\H^1(S^1,\Z_4)=\Z_4$.
\item $\beta'$ is the generator of the $\Z_4$ factor of $\H^1(K,\Z_4)=\Z_4\times\Z_2$ (see Appendix \ref{Klein}).  Note that $(\beta'\mod2)^2=2\beta_{(2,4)}\beta'=0$.
%\item $\gamma'$ is the generator of $\H^2(S^2,\Z_4)=\Z_4$,\\
%\item $\zeta'$ is the generator of $\H^2(T^2,\Z_4)=\Z_4$.\\
\item $\alpha$ is the generator of $\H^1(\RP^2,\Z_2)=\Z_2$.
%\item $\beta$ is the generator of $\H^1(\RP^3,\Z_2)=\Z_2$,\\
\item $\gamma$ is the generator of $\H^1(S^1,\Z_2)=\Z_2$.
%\item $\zeta$ is the generator of $\H^1(\RP^4,\Z_2)=\Z_2$.\\
%$\bullet$ $\kappa$ is the other generator of $\H^1(K,\Z_2)=\Z_2^2$ which is different from $\beta'\mod2$. \\

\end{enumerate}

Using the definitions in \Sec{sec:set-up}, recall that the manifold generators of $\Omega_3^{\tO}(\B(\Z_2\ltimes\PSU(4)))$ are
\bea
(N^3, w_1(E), w_2(E))=\left\{\begin{array}{llll}(\RP^3,\beta,0)=e_1,\\
(T^3,\gamma_1,\alpha_2'\alpha_3')=e_2,\\
(S^1\times K, 0, \alpha'\beta')=e_3,\\
(S^1\times\RP^2,\gamma,0)=e_4.
\end{array}
\right.
\eea

Following the construction of the map $G$ in Lemma \ref{lemmaG}, we have:

\begin{enumerate}
\item
$(N^3, w_1(E), w_2(E))=(\RP^3,\beta,0)$, $w_1(TN^3)=0$,
we have $M^5=T^2\times\RP^3$, since both $B\beta_{(2,4)}B$ and $Aw_1(TM^5)^2B$ must vanish on $M^5$, so $G(e_1)=0$.

\item
$(N^3, w_1(E), w_2(E))=(T^3,\gamma_1,\alpha_2'\alpha_3')$, $w_1(TN^3)=0$,
we have $M^5=T^5$, since both $B\beta_{(2,4)}B$ and $Aw_1(TM^5)^2B$ must vanish on $M^5$, so $G(e_2)=0$.

\item
$(N^3, w_1(E), w_2(E))=(S^1\times K, 0, \alpha'\beta')$, $w_1(TN^3)=\beta'\mod2$,
 we have $M^5=K\times T^3$, $w_1(TM^5)=\beta'\mod2$, $A=\beta'\mod2$, $B=\alpha_1'\beta'+\alpha_2'\alpha_3'$, where $B'=\alpha_2'\alpha_3'\mod2$.
So $G(e_3)=f_1$.

\item
$(N^3, w_1(E), w_2(E))=(S^1\times\RP^2,\gamma,0)$, $w_1(TN^3)=\alpha$,
we have $M^5=T^3\times \RP^2$, $w_1(TM^5)=\alpha$, $A=\gamma_1+\alpha$, $B=\alpha_2'\alpha_3'$, where $B'=\alpha_2'\alpha_3'$.
So $G(e_4)=f_2$.

\end{enumerate}
So $G^*(f_1^*)=e_3^*$, $G^*(f_2^*)=e_4^*$.

\end{proof}

Next we can elaborate the new higher anomaly of 4d YM theory in \Sec{sec:newSUn}.

%%%%%%%%%%%%%%%%%%%%%%%%%%%
%%%%%%%%%%%%%%%%%%%%%%%%%%%
%%%%%%%%%%%%%%%%%%%%%%%%%%%
%%%%%%%%%%%%%%%%%%%%%%%%%%%
%%%%%%%%%%%%%%%%%%%%%%%%%%%
%%%%%%%%%%%%%%%%%%%%%%%%%%%
%%%%%%%%%%%%%%%%%%%%%%%%%%%
%%%%%%%%%%%%%%%%%%%%%%%%%%%
%%%%%%%%%%%%%%%%%%%%%%%%%%%
%%%%%%%%%%%%%%%%%%%%%%%%%%%
%%%%%%%%%%%%%%%%%%%%%%%%%%%
\section{New Higher Anomalies of 4d SU($\rN$)-YM Theory}

\label{sec:newSUn}

We provide more details on the anomaly of 4d YM theory. 
We 
deduce the new higher anomaly of 4d YM theory written in terms of invariants given in \Sec{sec:cobo-top}, 
and satisfying Rules in \Sec{sec:rule} and following the physical/mathematical 5d to 3d reduction scheme in \Sec{sec:5d3d}.

\subsection{SU($\rN$)-YM at $\rN=2$} %_{\theta=\pi}$

\label{sec:newSU2YM}

Let us formulate the potentially complete 't Hooft anomaly for 
4d SU($\rN$)-YM for $\rN=2$ at ${\theta=\pi}$, written in terms of a 5d cobordism invariant in \Sec{sec:cobo-top}.

Based on \ref{Rule 3} and \ref{Rule 6} in \Sec{sec:rule}, we deduce that
4d anomaly must match {2d $\mathbb{CP}^{1}$-model anomaly}'s \eq{eq:2d-anomaly-CP1}
via the sum of following two terms (5d SPTs). The first term is:
\bea\label{eq:w1PB}
&&B_2\Sq^1B_2+\Sq^2\Sq^1B_2 \\
&&%\cred{} 
{= \frac{1}{2} \tilde w_1(TM) \cP_2(B_2).}\nn
\eea
which is dictated by \ref{Rule 1}  in \Sec{sec:rule}.
(Note that {$\Sq^2\Sq^1B_2$}{$=(B_2 \cup_1 B_2) \cup_1 (B_2 \cup_1 B_2)$}.)
%\cred{Explain  $B_2\cup\Sq^1B_2+\Sq^2\Sq^1 B_2 = \frac{1}{2} w_1(TM) \cP_2(B_2).$}
Here $\tilde w_1(TM)\in\H^1(M,\Z_{4,w_1})$ is the mod 4 reduction of the twisted first Stiefel-Whitney class of the tangent bundle $TM$ of a 5-manifold $M$ which is the pullback of $\tilde w_1$ under the classifying map $M\to\B\tO(5)$.
Here $\Z_{w_1}$ denotes the orientation local system, the twisted first Stiefel-Whitney class $\tilde w_1\in \H^1(\B\tO(n),\Z_{w_1})$ is the pullback of the nonzero element of $\H^1(\B\tO(1),\Z_{w_1})=\Z_2$ under the determinant map $\B\det:\B\tO(n)\to\B\tO(1)$. 
%\ccred
{Since $2\tilde w_1(TM)=0$ mod 4, }
$\tilde w_1(TM)\cP_2(B_2)$ is even, so it makes sense to divide it by 2.
If $w_1(TM)=0$, then $\Z_{w_1}=\Z$ and $\H^1(\B\tO(1),\Z_{w_1})=\H^1(\B\tO(1),\Z)=0$, so $\tilde w_1=0$. Namely, $\frac{1}{2}\tilde w_1(TM)\cP_2(B_2)$ vanishes when $w_1(TM)=0$.

We can derive the last equality of \eq{eq:w1PB} by proving that both LHS and RHS are bordism invariants of $\Omega_5^{\tO}(\B^2\Z_2)$ and they coincide on manifold generators of $\Omega_5^{\tO}(\B^2\Z_2)$.

We can also prove that 
%\cpurple
{
\bea
\beta_{(2,4)}\cP_2(B_2)&=&\frac{1}{4}\delta\cP_2(B_2)\mod2\nn\\
&=&\frac{1}{4}\delta(B_2\cup B_2+B_2\hcup{1}\delta B_2)\nn\\
&=&\frac{1}{4}(\delta B_2\cup B_2+B_2\cup\delta B_2+\delta(B_2\hcup{1}\delta B_2))\nn\\
&=&\frac{1}{4}(2B_2\cup\delta B_2+\delta B_2\hcup{1}\delta B_2)\nn\\
&=&B_2\cup(\frac{1}{2}\delta B_2)+(\frac{1}{2}\delta B_2)\hcup{1}(\frac{1}{2}\delta B_2)\nn\\
&=&B_2\Sq^1B_2+\Sq^1B_2\hcup{1}\Sq^1B_2\nn\\
&=&B_2\Sq^1B_2+\Sq^2\Sq^1B_2.
\eea
Here we have used $\beta_{(2,4)}=\frac{1}{4}\delta\mod2$,
\bea
\delta(u\hcup{1}v)=u\cup v-v\cup u+\delta u\hcup{1}v+u\hcup{1}\delta v
\eea
for 2-cochain $u$ and 3-cochain $v$ \cite{Steenrod1947}, %(Steenrod 1947), 
$\Sq^1=\beta_{(2,2)}=\frac{1}{2}\delta\mod2$,
and $\Sq^kz_n=z_n\hcup{n-k}z_n$.
}
The first term contains two appear together in order to satisfy \ref{Rule 2}.

The other term is: 
\bea
w_1(TM)^2\Sq^1B_2.
\eea
We also check that the sum of two terms satisfy the \ref{Rule 5} in \Sec{sec:rule}.
Besides, \ref{Rule 7} restricts us to focus on the bordism group $\Omega_5^{\tO}(\B^2\Z_2)$ and discards other terms involving $\Omega_5^{\tO}(\B \Z_2 \times \B^2\Z_2)$.
Our final answer of 4d anomaly and 5d cobordism/SPTs invariant is  combined and given in
\eq{eq:SU2YM-5dSPT}:
\bea \label{eq:SU2YM-5dSPT-0}
B_2\Sq^1B_2+\Sq^2\Sq^1B_2+K_1w_1(TM)^2\Sq^1B_2. 
\eea
To our understanding, the whole expression indicates a new higher anomaly for this YM theory, 
which turns out to be new to the literature. 

\subsection{SU($\rN$)-YM at $\rN=4$} %$_{\theta=\pi}$

{Let us propose some 't Hooft anomaly for 
4d SU($\rN$)-YM at $\rN=4$ at ${\theta=\pi}$, written in terms of a 5d cobordism invariant in \Sec{sec:cobo-top}.
Here we \emph{do not} claim to have a complete set of 't Hooft anomaly.
As at $\rN=4$ we need to specify:
\begin{enumerate} 
\item the gauge bundle constraint \Eq{eq:gauge-bundle-constraint-N=4}.
\item the appropriate \emph{charge conjugation symmetry background field} coupling to YM.
\end{enumerate} 
However, we only have a potentially complete gauge bundle constraint \Eq{eq:gauge-bundle-constraint-N=4},
but we do not yet know whether we have captured all possible \emph{charge conjugation symmetry background field} coupling to YM.
The second issue will be left in the future work. %see  \cite{WWZ2019-2}.
}

Based on \ref{Rule 4} in \Sec{sec:rule}, we deduce the  {2d $\mathbb{CP}^{3}$-model anomaly}'s \eq{eq:2d-anomaly-CP3} generalizing the \eq{eq:2d-anomaly-CP1}.
Based on \ref{Rule 3} and \ref{Rule 6}, we deduce that
4d anomaly must match {2d $\mathbb{CP}^{3}$-model anomaly}'s \eq{eq:2d-anomaly-CP3}
via the sum of following two terms (5d SPTs). The first term is:
\bea
B_2\beta_{(2,4)}B_2
&=&\frac{1}{4}\tilde w_1(TM)\cP_2(B_2),
\eea
which is dictated by \ref{Rule 1}  in \Sec{sec:rule}.
Here $\tilde w_1(TM)\in\H^1(M,\Z_{8,w_1})$ is the mod 8 reduction of the twisted first Stiefel-Whitney class of the tangent bundle $TM$ of a 5-manifold $M$ which is the pullback of $\tilde w_1$ under the classifying map $M\to\B\tO(5)$.
Here $\Z_{w_1}$ denotes the orientation local system, the twisted first Stiefel-Whitney class $\tilde w_1\in \H^1(\B\tO(n),\Z_{w_1})$ is the pullback of the nonzero element of $\H^1(\B\tO(1),\Z_{w_1})=\Z_2$ under the determinant map $\B\det:\B\tO(n)\to\B\tO(1)$. Since $2\tilde w_1=0$, $\tilde w_1(TM)\cP_2(B_2)$ is divided by 4, so it makes sense to divide it by 4.
If $w_1(TM)=0$, then $\Z_{w_1}=\Z$ and $\H^1(\B\tO(1),\Z_{w_1})=\H^1(\B\tO(1),\Z)=0$, so $\tilde w_1=0$. Namely, $\frac{1}{4}\tilde w_1(TM)\cP_2(B_2)$ vanishes when $w_1(TM)=0$.

We can derive the last equality by proving that both LHS and RHS are bordism invariants of $\Omega_5^{\tO}(\B^2\Z_4)$ and they coincide on manifold generators of $\Omega_5^{\tO}(\B^2\Z_4)$.

We can also prove that
%\cpurple
{
\bea
\beta_{(2,8)}\cP_2(B_2)&=&\frac{1}{8}\delta\cP_2(B_2)\mod2\nn\\
&=&\frac{1}{8}\delta(B_2\cup B_2+B_2\hcup{1}\delta B_2)\nn\\
&=&\frac{1}{8}(\delta B_2\cup B_2+B_2\cup\delta B_2+\delta(B_2\hcup{1}\delta B_2))\nn\\
&=&\frac{1}{8}(2B_2\cup\delta B_2+\delta B_2\hcup{1}\delta B_2)\nn\\
&=&B_2\cup(\frac{1}{4}\delta B_2)+2(\frac{1}{4}\delta B_2)\hcup{1}(\frac{1}{4}\delta B_2)\nn\\
&=&B_2\beta_{(2,4)}B_2+2\beta_{(2,4)}B_2\hcup{1}\beta_{(2,4)}B_2\nn\\
&=&B_2\beta_{(2,4)}B_2+2\Sq^2\beta_{(2,4)}B_2\nn\\
&=&B_2\beta_{(2,4)}B_2
\eea
which is dictated by \ref{Rule 1}  in \Sec{sec:rule}.
(Note that {$\tilde B_2=B_2\mod2$}.)
Here we have used $\beta_{(2,8)}=\frac{1}{8}\delta\mod2$,
\bea
\delta(u\hcup{1}v)=u\cup v-v\cup u+\delta u\hcup{1}v+u\hcup{1}\delta v
\eea
for 2-cochain $u$ and 3-cochain $v$ \cite{Steenrod1947},
%(Steenrod 1947), 
$\beta_{(2,4)}=\frac{1}{4}\delta\mod2$,
and $\Sq^kz_n=z_n\hcup{n-k}z_n$.
}

The other term is: 
\bea
Aw_1(TM)^2B_2.
\eea
We also check that the sum of two terms satisfy the \ref{Rule 2} and \ref{Rule 5} in \Sec{sec:rule}.
By imposing  \ref{Rule 7}, we can rule out thus discard many other 5d terms in the bordism group $\Omega_5^{\tO}(\B\Z_2\ltimes\B^2\Z_4)$.
%By imposing  \ref{Rule 8}, we have to select $A^2\beta_{(2,4)}B_2$ in order to match our QFT derivation of the 4d anomaly from $\sim w_1(TM) A^2 B_2$ up to a normalization.
In summary, our final answer of 4d anomaly and 5d cobordism/SPTs invariant is combined and given in
\eq{eq:SU4YM-5dSPT}:
\be \label{eq:SU4YM-5dSPT-0}
B_2 \beta_{(2,4)}B_2+K_{C,1}Aw_1(TM)^2B_2.
\ee
To our understanding, the whole expression indicates a new higher anomaly for this YM theory, 
new to the literature.

%%%%%%%%%%%%%%%%%%%%%%%%%%%
%%%%%%%%%%%%%%%%%%%%%%%%%%%
%%%%%%%%%%%%%%%%%%%%%%%%%%%
%%%%%%%%%%%%%%%%%%%%%%%%%%%
%%%%%%%%%%%%%%%%%%%%%%%%%%%
%%%%%%%%%%%%%%%%%%%%%%%%%%%
%%%%%%%%%%%%%%%%%%%%%%%%%%%
%%%%%%%%%%%%%%%%%%%%%%%%%%%
%%%%%%%%%%%%%%%%%%%%%%%%%%%
%%%%%%%%%%%%%%%%%%%%%%%%%%%
%%%%%%%%%%%%%%%%%%%%%%%%%%%

\begin{widetext}

\section{New Anomalies of 2d $\mathbb{CP}^{\rN-1}$-model}
\label{sec:newCPn}

In this section, we provide more details and summarize the anomaly for the $\mathbb{CP}^{\rN-1}$-model. 

For 2d  $\mathbb{CP}^{1}$-model at $\theta=\pi$, in theorem \ref{5d3dN2}, we find that the 't Hooft anomaly is the combination of the cobordism invariants
\eq{eq:anom-stacking-Haldane},  \eq{eq:w3E}, \eq{eq:CP1-anom-A3} and \eq{eq:w1TMw2SO3}, which we repeat for readers' convenience:
%and obtain a concise way to express the potentially complete 't Hooft anomaly as:
\bea \label{eq:2d-anomaly-CP1}
\text{2d $\mathbb{CP}^{1}$-model anomaly : } &&\boxed{
 w_1(E)^3+w_1(TN^3)w_2(V_{\SO(3)})+w_1(E) w_2(V_{\SO(3)})
+ K_1(w_1(E)^3+w_1(E) w_1(TN^3)^2) }
\nn 
\\  &&
=w_3(E)+ K_1( w_1(E)^3+w_1(E)w_1(TN^3)^2).
\eea
Recall that,
under the basis $(w_1(E)^3,w_1(E)w_2(E), w_3(E), w_1(E)w_1(TN^3)^2)$,
we can express the following cobordism invariants using \eq{eq:bordism3OBO-basis}
$$
\left\{\begin{array}{l}
\text{$w_1(E)^3$ is $(1,0,0,0)$, }\\
\text{$w_1(E) w_2(V_{\SO(3)})$ is $(1,1,0,0)$,}\\
\text{$w_1(TN^3) w_2(V_{\SO(3)})$ is $(0,1,1,0)$,}\\
\text{$w_1(E) w_1(TN^3)^2$ is $(0,0,0,1)$,} \\
\text{$w_3(E)$ is $(0,0,1,0)$.}
\end{array}
\right.
$$
To summarize, the overall anomaly of
2d $\mathbb{CP}^{1}$-model can be expressed as a 3d cobordism invariant/topological term 
\eq{eq:2d-anomaly-CP1}, 
which is $(K_1,0,1,K_1)$ under the basis $(w_1(E)^3,w_1(E)w_2(E), w_3(E), w_1(E)w_1(TN^3)^2)$ of \eq{eq:bordism3OBO-basis}.

{For 2d  $\mathbb{CP}^{\rN-1}$-model at $\theta=\pi$, at even N,
Ref.~\cite{Komargodski2017dmc1705.04786} proposes an important quantity (called $u_3$ in Ref.~\cite{Komargodski2017dmc1705.04786}),  % eq. 15 and eq. 16
which is an element  $u_3 \in \H^3(\B ({\PSU(\rN) \rtimes \Z_2^C}),\Z^C) $ as an anomaly for that 2d theory.
First we notice that one needs to generalize the second SW class from $w_2 \in \H^2(\B {\PSU(2)},\Z_2)=\Z_2 $ to $\tilde w_2 \in \H^2(\B {\PSU(\rN) },\Z_\rN)=\Z_\rN$. 
Moreover, there is an additional $\Z_2^C$ twist modifying the ${\PSU(2)}$-bundle to ${\PSU(\rN) \rtimes \Z_2^C}$-bundle.
In the definition of $u_3\in \H^3(\B(\PSU(\rN)\rtimes \Z_2^C),\Z^C) = \Z_\rN$, 
$C$ specifies the symmetry as a charge conjugation $\Z_2^C$.
This means that $\dd u_3 \neq 0$, but $\dd_A u_3 = 0$, where $\dd_A$ is a twisted differential.
The construction of these classes is a Bockstein operator for the extension %: $\Z^C \to \Z^C \to \Z_\rN^C$,
applied to $u_2 \in \H^2(\B(\PSU(\rN) \rtimes \Z_2^C), \Z_\rN^C)$.
Eventually, the 3d invariant for the 2d anomaly term of Ref.~\cite{Komargodski2017dmc1705.04786} 
is $u_3 \in \H^3(\B(\PSU(\rN) \rtimes \Z_2^C), \U(1)) = \Z_2$.
}
 
{
In our setup, we consider $\tilde w_3(E) \equiv  \tilde w_3(V_{\PSU(\rN) \rtimes \Z_2})  \in \H^3(\B(\PSU(\rN) \rtimes \Z_2^C), \Z_2) = \Z_2$ 
here $E$ is the background gauged bundle of ${\PSU(\rN) \rtimes \Z_2}$.\\
For N = 2, we derive that $\tilde w_3(E) = w_3(E)  {=w_1(E)w_2(E)+ w_1(TN)w_2(E) }$ in \eq{eq:w3E}. We emphasize that $w_1(TN)$ and $w_1(E)$ are the symmetry background fields for $\cT$ and $\cC$ respectively.

For N = 4, in theorem \ref{5d3dN4}, we find that the 3d anomaly polynomial is
\begin{eqnarray}
\beta_{(2,4)}w_2(E)+ K_{C,1}w_1(E)w_1(TN^3)^2
={\beta_{(2,4)}w_2(V_{\PSU(4)})+K_{C,1}w_1(E)w_1(TN^3)^2.}
\end{eqnarray}
Let us remind the notations explained in \Sec{sec:G}. When N = 4, we have $E$ is the principal ${\Z_2 \ltimes\PSU(4)}$ bundle, 
while
$w_2(E) \in \H^2(M,\Z_{4,w_1(E)})$ is  a $\Z_4$-valued second twisted cohomology class, 
$w_1(E) \in \H^1(M,\Z_{2})$ is a group homomorphism $\pi_1(M)\to\text{Aut}(\Z_4)=\Z_2$.

We emphasize that when $K_{C,1}=0$, our anomaly polynomial 
\begin{eqnarray}\label{N43d}
\beta_{(2,4)}w_2(E)
={\beta_{(2,4)}w_2(V_{\PSU(4)})}
\end{eqnarray}
is consistent with the result in \Ref{Komargodski2017dmc1705.04786}. They derived that the anomaly polynomial is 
\bea\label{1705}
\tilde w_3(E)=\frac{1}{2}w_1(E) w_2(E)+\beta_{(2,4)}w_2(E) 
=\frac{1}{2}w_1(E) w_2(E)+\frac{1}{2} \tilde w_1(TN^3)w_2(E).
\eea
Compared with \eqn{N43d} there is an additional $\frac{1}{2}w_1(E) w_2(E)$ in \eqn{1705}. This superficial mismatch is because $w_{1}(TN^3)$ is identified as the background field for $\cC\cT$ in our work, while $\cT$ in \Ref{Komargodski2017dmc1705.04786}. If we replace $w_{1}(TN^3)$ in \eqn{N43d} by $w_1(TN^3)+ w_1(E)$, we correctly obtain \eqn{1705}.

}

Based on \ref{Rule 4} in \Sec{sec:rule}, we propose that 3d invariant for the anomaly of 2d $\mathbb{CP}^{3}$-model is:
\bea \label{eq:2d-anomaly-CP3}
\begin{split}
\text{2d $\mathbb{CP}^{3}$-model anomaly : } 
	& \boxed{
 \beta_{(2,4)} w_2(E)+K_{C,1}w_1(E)w_1(TN^3)^2}\\
= &
 \frac{1}{2} \tilde w_1(TN^3) w_2(E)
+K_{C,1}w_1(E)w_1(TN^3)^2
  \\
=& 
 \boxed{%\cred
 \frac{1}{2} \tilde w_1(TN^3) w_2(V_{{\PSU(4)}})
+K_{C,1}w_1(E)w_1(TN^3)^2.}
\end{split}
\eea

We should mention our anomaly term contains the previous anomaly found in the literature for more generic even N \cite{Komargodski2017dmc1705.04786, Yao2018kelHsiehOshikawa1805.06885, {Ohmori2018qza1809.10604}}. 

\end{widetext}

%\clearpage

%

%%%%%%%%%%%%%%%%%%%%%%%%%%%
%%%%%%%%%%%%%%%%%%%%%%%%%%%
%%%%%%%%%%%%%%%%%%%%%%%%%%%
%%%%%%%%%%%%%%%%%%%%%%%%%%%
%%%%%%%%%%%%%%%%%%%%%%%%%%%
%%%%%%%%%%%%%%%%%%%%%%%%%%%
%%%%%%%%%%%%%%%%%%%%%%%%%%%
%%%%%%%%%%%%%%%%%%%%%%%%%%%
%%%%%%%%%%%%%%%%%%%%%%%%%%%
%%%%%%%%%%%%%%%%%%%%%%%%%%%
%%%%%%%%%%%%%%%%%%%%%%%%%%%

\section{Symmetric TQFT, Symmetry-Extension and Higher-Symmetry Analog of Lieb-Schultz-Mattis theorem}
\label{sec:sTQFT}

Since we know the potentially complete 't Hooft anomalies of
the above 
4d SU($\rN$)-YM and 2d $\CP^{\rN-1}$-model at ${\theta=\pi}$,
we wish to constrain their low-energy dynamics further, based on the anomaly-matching.
This thinking can be regarded as %, for 4d SU($\rN$)-YM and 2d $\CP^{\rN-1}$-model, 
a formulation of a higher-symmetry analog of ``Lieb-Schultz-Mattis theorem \cite{Lieb:1961fr} \cite{Hastings:2003zx}.''
For example, the consequences of low-energy dynamics, under the anomaly saturation can be:\\

\noindent
\begin{enumerate} 
\item Symmetry-breaking:\\ 
$\bullet$ (say $C T$- or $T$-symmetry or other discrete or continuous $G$-symmetry breaking).\\
\item Symmetry-preserving: \\
$\bullet$ Gapless, conformal field theory (CFT),\\
$\bullet$ Intrinsic topological orders.\\ 
\; \quad\quad (Symmetry-preserving TQFT)\\
$\bullet$ Degenerate ground states.\\
etc.\\
\item Symmetry-extension \cite{Wang2017locWWW1705.06728}:
Symmetry-extension is another exotic possibility, which does not occur naturally without fine-tuning or artificial designed, explained in \cite{Wang2017locWWW1705.06728}.
However, symmetry-extension is a useful intermediate step, 
to obtain another earlier scenario: \emph{symmetry-preserving TQFT}, via gauging the extended-symmetry.
\end{enumerate}

Recently Lieb-Schultz-Mattis theorem has been applied to
higher-form symmetries acting on extended objects, see \cite{Kobayashi2018yukRyu1805.05367} and references therein.

In this section, we like to ask, whether it is possible to have a fully symmetry-preserving TQFT
to saturate the higher anomaly we discussed earlier, for 4d SU($\rN$)-YM and 2d $\CP^{\rN-1}$-model?
We use the systematic approach of \emph{symmetry-extension} method developed 
in Ref.~\cite{Wang2017locWWW1705.06728}. We will consider its
generalization to \emph{higher-symmetry-extension} method, also developed 
in our parallel work Ref.~\cite{Wan2018djlW2.1812.11955}.\footnote{One can also formulate a 
lattice realization of version given in \cite{Prakash2018ugo1804.11236}.
Closely related work on this symmetry-extension method 
include \cite{Kapustin1404.3230, Tachikawa2017gyf, 2018PTEP1801.05416, Guo2018vij1812.11959} and references therein.}

We will trivialize the 4d and 2d 't Hooft anomaly of 4d YM and 2d-$\CP^{\rN-1}$ models
(again we abbreviate them as 5d Yang-Mills and 3d $\CP^{\rN-1}$ terms) 
by pullback the global symmetry to the extended symmetry.
If the pullback trivialization is possible, then it means that we
 can use the ``symmetry-extension'' method of \cite{Wang2017locWWW1705.06728}
 to construct a fully symmetry-preserving TQFT, 
 at least as an exact solvable model.\footnote{
 \label{footnote-sTQFT}
 A caveat: One needs to beware that the dimensionality affects the dynamics 
 and stability of long-range entanglement, the ``\emph{symmetry-preserving TQFT}''
 at 2d or below can be destroyed by local perturbations. 
In addition, the construction of ``\emph{symmetry-extended TQFT}'' after gauging the extended symmetry can be in fact
``\emph{spontaneously symmetry breaking}'' due to dynamics.
See detailed explorations in \cite{Wang2017locWWW1705.06728}.
 More recently, 
\Ref{Wan2019oyr1904.00994, CordovaCO2019, KO-Strings-2019-talk}
find that
the ``\emph{higher-form symmetry spontaneously breaking}'' occurs in attempts to construct a 4d higher-symmetry anomalous TQFT.
See also a systematic discussion of {higher-form symmetry spontaneously breaking} in \cite{Lake2018dqm1802.07747}.
} 
 
 In below, when we write an induced fiber sequence: 
 \bea \label{eq:KGG-extend}
 [\B K]  \to \B \mathbb{G}' \to B\mathbb{G},
 \eea 
 we mean that
  $[\B K]$ is the extension from a finite group $K$ with the classifying space $\B K$, while $B\mathbb{G}$ is the classifying space of the original full symmetry $\mathbb{G}$ (including the higher symmetry).
  Moreover, the bracket in   $[\B K]$ means that the full-anomaly-free $K$ can be dynamically gauged
  to obtain a dynamical $K$ gauge theory as a symmetry-$\mathbb{G}$ preserving TQFT, see \cite{Wang2017locWWW1705.06728}.
  
  %\cred
  {
  However, as noticed in \cite{Wang2017locWWW1705.06728, Wan2018djlW2.1812.11955, Wan2019oyr1904.00994},
  there are a few possibilities of dynamical fates for the attempt to construct a theory via the symmetry-extension \eqn{eq:KGG-extend}:
\begin{enumerate}[leftmargin=1.5mm, label=\textcolor{blue}{\Roman*.}, ref={\Roman*}]
\item \emph{No $\mathbb{G}'$-symmetry extended gapped phase}:\\  \label{phase-I}
$\mathbb{G}'$-symmetry extended gapped phase is impossible to construct via \eqn{eq:KGG-extend}. Namely,
a $\mathbb{G}$-anomaly cannot be trivialized by pulling back to be $\mathbb{G}'$-anomaly free. 
Although we cannot prove that the symmetry-preserving gapped phase is impossible in general (say, beyond the symmetry-extension method
of \cite{Wang2017locWWW1705.06728}), the recent works \cite{Wan2018djlW2.1812.11955, CordovaCO2019, KO-Strings-2019-talk}
suggest a strong correspondence between ``the impossibility of symmetric gapped phase'' and 
``the non-existence of such $\mathbb{G}'$.''
\item \emph{$\mathbb{G}'$-symmetry extended gapped phase}:\\ \label{phase-II}
$\mathbb{G}'$-symmetry extended gapped phase can be constructed via \eqn{eq:KGG-extend}.
Namely, there exists certain $\mathbb{G}'$, such that
a $\mathbb{G}$-anomaly can be trivialized by pulling back to be $\mathbb{G}'$-anomaly free. 
However, there are at least two possible fates after dynamically gauging $K$:
\begin{enumerate}[leftmargin=2.0mm, label=\textcolor{blue}{(\alph*)}, ref={(\alph*)}]
\item \emph{$\mathbb{G}$-spontaneously symmetry-breaking (SSB) phase}: \label{phase-i} \\
After dynamically gauging $K$, the $\mathbb{G}$-symmetry would be spontaneously broken. 
\item \emph{Anomalous $\mathbb{G}$-symmetry-preserving $K$-gauge phase}: \label{phase-ii}\\
After dynamically gauging $K$,
the $\mathbb{G}$-symmetry would not be broken,
thus we obtain a $\mathbb{G}$-symmetry-preserving and dynamical $K$-gauge TQFT.
\end{enumerate}
\end{enumerate}
In this work, we will mainly focus on determining whether the phases can be
$\mathbb{G}$-symmetry extended gapped phase (namely the phase \ref{phase-II}) or not (namely the phase  \ref{phase-I}).
If the {$\mathbb{G}'$-symmetry extended gapped phase} 
is possible, we will comment briefly about the dynamics
after gauging the extended $K$: Whether it will be spontaneously symmetry-breaking (namely the phase \ref{phase-i})
or symmetry-preserving  (namely the phase \ref{phase-ii}).
Further detail discussions about the fate of 4d SU(N)$_{\theta=\pi}$ YM dynamics of these phases are pursuit recently in 
\Ref{Wan2019oyr1904.00994} and \cite{CordovaCO2019, KO-Strings-2019-talk}. 
}

The new ingredient and generalization here we need to go beyond the symmetry-extension method of \cite{Wang2017locWWW1705.06728} are:\\
(1) \emph{Higher-symmetry extension}: We consider a higher group $\mathbb{G}$ or higher classifying space $\B \mathbb{G}$.\\
(2) {Co/Bordism group and group cohomology of higher group $\mathbb{G}$ or higher classifying space $\B \mathbb{G}$}.\\
Another companion work of ours \cite{Wan2018djlW2.1812.11955} also implements this method,
and explore the constraints on the low energy dynamics for adjoint quantum chromodynamics theory in 4d (adjoint QCD$_4$).

 We first summarize the mathematical checks, and then we will explain their physical implications in the end of this section and in \Sec{sec:con}.
  
 In the following subsections, we will not directly present quantum Hamiltonian models involoving these higher-group cohomology cocycles and Stiefel-Whitney classes.
Nonetheless, we believe that it is fairly \emph{straightforward} 
to generalize the quantum Hamiltonian models of \cite{Wan1211.3695, Wan2014woa1409.3216, Wang1404.7854} to obtain
lattice Hamiltonian models for our \Sec{sec:sym-ext-H-1}, \Sec{sec:sym-ext-H-2}, \Sec{sec:sym-ext-H-3} and \Sec{sec:sym-ext-H-4} below. 
A sketch of the design of the lattice Hamiltonian models can be found in \Ref{Wan2019oyr1904.00994}.

\subsection{$\Omega_5^{\tO}(\B^2\Z_2)$: $\Z_{4,[1]}$-symmetry-extended but $\Z_{2,[1]}$-spontaneously symmetry breaking}
\label{sec:sym-ext-H-1}

We consider $B_2\Sq^1 B_2 + \Sq^2 \Sq^1 B_2 +K_1w_1(TM)^2\Sq^1B_2$ of \eq{eq:SU2YM-5dSPT-0} and \eq{eq:SU2YM-5dSPT}
for 4d SU($\rN$)$_{\theta=\pi}$-YM's anomaly at $\rN=2$.

Since
$\Sq^2\Sq^1B_2=(w_2(TM)+w_1^2(TM))\Sq^1B_2$
and
$\Sq^1B_2$ can be trivialized by $\B^2\Z_4\to\B^2\Z_2$ since when $B_2=B_2'\mod2$, $B_2':M\to\B^2\Z_4$, and $\Sq^1B_2=2\beta_{(2,4)}B_2'=0$ (see Appendix \ref{Bockstein}).

So
$B_2\Sq^1 B_2 + \Sq^2 \Sq^1 B_2 +K_1w_1(TM)^2\Sq^1B_2$
can be trivialized via 
\bea
[ \B^2\Z_{2,[1]}] \to \B \tO(d) \times \B^2\Z_{4,[1]}^e \to \B\tO(d) \times\B^2\Z_{2,[1]}^e,\quad\quad
\eea
which we shorthand the above induced fibration as
as
\bea
[ \B^2\Z_{2,[1]} ] \to \B {\mathbb{G}'}  \to \B {\mathbb{G}}.
\eea

{Given 
\bea
\omega_5^{\mathbb{G}}=B_2\Sq^1 B_2 + \Sq^2 \Sq^1 B_2 +K_1w_1(TM)^2\Sq^1B_2 \nn\\
= (B_2 + (1+K_1)w_1(TM)^2+w_2(TM)) \Sq^1 B_2, \quad
\eea
we find that pulling back ${\mathbb{G}}$ to ${\mathbb{G}'}$, we need to solve that
\bea
\omega_5^{\mathbb{G}'}=\delta \beta_4^{\mathbb{G}'}
\eea 
with
 the split cochain solution
\bea  \label{eq:beta4G'}
\beta_4^{\mathbb{G}'}
= (B_2+ (1+K_1)w_1(TM)^2 + w_2(TM)) \cup \gamma_2^{\mathbb{G}'}. \quad \quad
\eea
Here we define that
$\gamma_2^{\mathbb{G}'}$ satisfies 
\be \label{eq:Sq1B=dgamma}
\Sq^1 B_2^{\mathbb{G}'}  =\delta \gamma_2^{\mathbb{G}'} \quad
\ee
with a solution
\bea
 \gamma_2^{\mathbb{G}'}(g') \equiv \frac{g'^2-g'}{2}, \quad \gamma_2\in C^2(\B^2\mathbb{Z}_4,\mathbb{Z}_2).
\eea
with $g' \in \Z_4$.
For a 2-simplex/2-plaquette $ijk$, let $g'=g'_{ijk}$, so 
$
\gamma_2(g') = (g'^2 -g')/2 \mod 2,
$
where $g'\in \Z_4$ assigned on a 2-simplex, while $\gamma_2(g')$ maps the input $g'\in \Z_4$ to the output $\Z_2$-valued cochain in $C^2(\B^2\mathbb{Z}_4,\mathbb{Z}_2)$.
This boils down to simply show \eqn{eq:Sq1B=dgamma}
$\Sq^1 B_2 = B_2 \hcup{1} B_2$
on a 3-simplex say with vertices 0-1-2-3 can be \emph{split} into 2-cochains $ \gamma_2$ in the following way:
\bea
&&\delta (\gamma_2)(g') = - \gamma_2(g'_{0,1,2})+\gamma_2(g'_{0,1,3})-\gamma_2(g'_{0,2,3})+\gamma_2(g'_{1,2,3})\nn\\
&&= - \gamma_2(g'_{a})+\gamma_2(g'_{b})-\gamma_2(g'_{c})+\gamma_2({g'_a-g'_b+g'_c})\nn\\
&&=(g'_b + g'_c) (g'_a + g'_b)\mod 2\nn\\
&&=\Sq^1 B_2 (r(g')).
\eea
Therefore, we can also show \eqn{eq:beta4G'}
that $\beta_4^{\mathbb{G}'}
= (B_2+ (1+K_1)w_1(TM)^2 + w_2(TM)) \cup \gamma_2^{\mathbb{G}'}$ via the above given $\gamma_2$.

Using the data,
$\omega_5^{\mathbb{G}'}$, we can construct a \emph{${\mathbb{G}'}$-symmetry extended gapped phase} (namely the phase \ref{phase-II}).
Using the pair of the above data,
$\omega_5^{\mathbb{G}}$ and $\beta_4^{\mathbb{G}'}$,
we also hope to construct the 4d fully symmetry-preserving TQFT with an emergent 2-form 
$\Z_{2}$ gauge field
(given by $\beta_4^{\mathbb{G}'}$ and via gauging the 1-form $\Z_{2,[1]}$-symmetry) 
living on the boundary of 5d SPT (given by $\omega_5^{\mathbb{G}'}$).
However, it turns out that gauging $K$ results in 
\emph{${\mathbb{G}}$-spontaneously symmetry breaking} (SSB in 1-form $\Z_{2,[1]}^e$, namely the phase \ref{phase-i}).
The SSB phase agrees with the analysis in Sec.~8 of \Ref{Wan2019oyr1904.00994} and \cite{CordovaCO2019, KO-Strings-2019-talk}
}

\subsection{$\Omega_3^{\tO}(\B\tO(3))$: $\Z_4^T$-symmetry-extended but $\Z_2^T$-spontaneously symmetry breaking}
\label{sec:sym-ext-H-2}

We consider 
$w_1(E)^3+w_1(TN^3)w_2(V_{\SO(3)})+w_1(E) w_2(V_{\SO(3)})
+ K_1(w_1(E)^3+w_1(E) w_1(TN^3)^2) $ of \eq{eq:2d-anomaly-CP1} and 
\eq{eq:CP1-3dSPT} for 2d $\CP^{\rN-1}_{\theta=\pi}$-model's anomaly  at $\rN=2$.

Since $w_2(V_{\SO(3)})$ can be trivialized in $\SU(2)=\Spin(3)$.
Also $w_1(E)^3$ can be trivialized by 
$$
\Z_4^C\to \Z_2^C,
$$ 
and
since $\Sq^2w_1(E)=(w_2(TN^3)+w_1(TN^3)^2)w_1(E)=0$, $w_1(E)w_1(TN^3)^2=w_1(E)w_2(TN^3)$ can be trivialized by 
$$
\Pin^+(d)\to\tO(d).
$$

In summary, 
$w_1(E)^3+w_1(TN^3)w_2(V_{\SO(3)})+w_1(E) w_2(V_{\SO(3)})
+ K_1(w_1(E)^3+w_1(E) w_1(TN^3)^2)$
can be trivialized via an induced fiber sequence:
\begin{multline} 
[\B(\Z_2)^3] \to \B \Pin^+(d) \times \B \SU(2) \times \B\Z_4^C   \\
\to \B \tO(d) \times \B\PSU(2) \times \B\Z_2^C.
\end{multline}
The above shows that the anomaly can be trivialized in ${{G}'}=\Pin^+ \times \SU(2) \times \Z_4^C$,
we can construct a \emph{${{G}'}$-symmetry extended gapped phase} (namely the phase \ref{phase-II}). 
However, it turns out that gauging $K$ results in 
\emph{${{G}}$-spontaneously symmetry breaking} (SSB in 0-form symmetry here, namely the phase \ref{phase-i}).
The SSB phase agrees with the analysis in Appendix A.2.4 of \Ref{Wang2017locWWW1705.06728} and \cite{CordovaCO2019, KO-Strings-2019-talk}

\subsection{$\Omega_5^{\tO}(\B\Z_2\ltimes\B^2\Z_4)$: $\Z_{8,[1]}$-symmetry-extended but $\Z_{4,[1]}$-spontaneously symmetry breaking}
\label{sec:sym-ext-H-3}

We consider $\tilde B_2 \beta_{(2,4)}  B_2+K_{C,1}Aw_1(TM)^2 B_2 $ of 
\eq{eq:SU4YM-5dSPT-0} and \eq{eq:SU4YM-5dSPT}
for 4d SU($\rN$)$_{\theta=\pi}$-YM's anomaly at $\rN=4$.

Notice
$\beta_{(2,4)}B_2$ can be trivialized by $\B^2\Z_8\to\B^2\Z_4$, and notice that $B_2=B_2'\mod4$, $B_2':M\to\B^2\Z_8$, 
$\beta_{(2,4)}B_2=2\beta_{(2,8)}B_2'=0$ (see Appendix \ref{Bockstein}).

Since $w_1(TM^5)^2$ is trivialized in the group $\tE(d)\subset \tO(d)\times\Z_4$ defined in \cite{Freed2016}
which consists of the pairs $(A,j)$ with $\det A=j^2$. 

So
$\tilde B_2 \beta_{(2,4)}  B_2 +K_{C,1}Aw_1(TM)^2 B_2 $
can be trivialized via an induced fiber sequence:
\begin{multline} 
[ \B\Z_2\times \B^2\Z_{2,[1]}] \to \B \tE(d) \times \B\Z_2^C\ltimes \B^2\Z_{8,[1]}^e \\
\to \B\tO(d) \times \B\Z_2^C\ltimes\B^2\Z_{4,[1]}^e.
\end{multline} 
The above shows that the anomaly can be trivialized in ${\mathbb{G}'}$,
we can construct a \emph{${\mathbb{G}'}$-symmetry extended gapped phase} (namely the phase \ref{phase-II}). 
However, it turns out that gauging $K$ results in 
\emph{${\mathbb{G}}$-spontaneously symmetry breaking} (SSB in 1-form $\Z_{4,[1]}^e$ symmetry here, namely the phase \ref{phase-i}).
The SSB phase agrees with the analysis in Appendix A.2.4 of \Ref{Wang2017locWWW1705.06728} and \cite{CordovaCO2019, KO-Strings-2019-talk}
It also agrees with the fact found in  \Ref{Wang2017locWWW1705.06728} that the 1+1D symmetry-preserving bosonic TQFT is not robust against local perturbation,
thus  this TQFT flows to the SSB phase.

\subsection{$\Omega_3^{\tO}(\B(\Z_2\ltimes\PSU(4)))$: $\Z_4^T \times {\SU(4)}$-symmetry-extended but $\Z_2^T  \times {\PSU(4)}$-spontaneously symmetry breaking}
\label{sec:sym-ext-H-4}
We consider the 3d term \eq{eq:2d-anomaly-CP3} and \eq{eq:CP3-3dSPT} for 2d $\CP^{\rN-1}_{\theta=\pi}$-model's anomaly at $\rN=4$: 
$ \beta_{(2,4)} w_2(E) +K_{C,1}w_1(E)w_1(TN^3)^2 =\frac{1}{2} \tilde w_1(TN^3)w_2(V_{\PSU(4)}) +K_{C,1}w_1(E)w_1(TN^3)^2$.

Since there is a short exact sequence of groups:
$1\to\Z_4 \to \Z_2^C\ltimes\SU(4)\to \Z_2^C\ltimes\PSU(4)\to1$,
we have an induced fiber sequence:
$\B\Z_4\to \B(\Z_2^C\ltimes\SU(4) )\to \B(\Z_2^C\ltimes\PSU(4) )\stackrel{w_2}{\to}\B^2\Z_4$,
so $w_2(V_{\PSU(4)})$ can be trivialized by 
$$\B(\Z_2^C\ltimes\SU(4) )\to \B(\Z_2^C\ltimes\PSU(4) ).$$

Also since $\Sq^2w_1(E)=(w_2(TN^3)+w_1(TN^3)^2)w_1(E)=0$, $w_1(E)w_1(TN^3)^2=w_1(E)w_2(TN^3)$ can be trivialized by 
$$\Pin^+(d)\to\tO(d).$$

So
$ \beta_{(2,4)} w_2(E) +K_{C,1}w_1(E)w_1(TN^3)^2$
can be trivialized via
an induced fiber sequence:
\begin{multline} 
[ \B\Z_2\times\B \Z_4 ] \to \B \Pin^+(d) \times \B (\Z_2^C\ltimes\SU(4) )   \\
 \to \B \tO(d)  \times \B(\Z_2^C\ltimes\PSU(4) ).
\end{multline} 
The above shows that the anomaly can be trivialized in ${{G}'}=\Pin^+ \times (\Z_2^C\ltimes\SU(4) )$,
we can construct a \emph{${{G}'}$-symmetry extended gapped phase} (namely the phase \ref{phase-II}). 
However, it turns out that gauging $K$ results in 
\emph{${{G}}$-spontaneously symmetry breaking} (SSB in 0-form symmetry here, namely the phase \ref{phase-i}).
The SSB phase agrees with the analysis in Appendix A.2.4 of \Ref{Wang2017locWWW1705.06728} and \cite{CordovaCO2019, KO-Strings-2019-talk}.
It also agrees with the fact found in  \Ref{Wang2017locWWW1705.06728} that the 1+1D symmetry-preserving bosonic TQFT is not robust against local perturbation,
thus  this TQFT flows to the SSB phase.

\subsection{Summary on the fate of dynamics}

%\cred
{
In summary, in this section, 
for all examples \Sec{sec:sym-ext-H-1}, \Sec{sec:sym-ext-H-2}, \Sec{sec:sym-ext-H-3} and \Sec{sec:sym-ext-H-4},
we have found that there exists such a finite $K$ extension such that the  $\mathbb{G}$-anomaly becomes $\mathbb{G}'$-anomaly free, via the pull back 
procedure of \eqn{eq:KGG-extend}.\\
}
%

%\cred
{
Namely, we can obtain
various symmetry-$\mathbb{G}$ extended TQFTs (namely the phase \ref{phase-II}) to saturate (higher) 't Hooft anomalies of YM theories and $\mathbb{CP}^{\mathrm{N}-1}$-model,
via the $[\B K]$ extension to a higher-symmetry $\mathbb{G}'$ or a higher-classifying space $B\mathbb{G}'$. 
}\\
%

%\cred
{
However, when $K$ is dynamically gauged to obtain a dynamical $K$ gauge topologically ordered TQFT, 
thanks to a caveat in footnote \ref{footnote-sTQFT}, we find that
the above particular examples of 0-form-symmetric 2d TQFT and 1-form-symmetric 4d TQFT
become 
\emph{${\mathbb{G}}$-spontaneously symmetry breaking} (SSB in 0-form symmetry for 2d and
SSB in 1-form symmetry for 4d). Namely, dynamically gauging $K$ result in the symmetry-breaking phase \ref{phase-i}
in our examples. We do not obtain symmetry-preserving gapped phase \ref{phase-ii} in the end.
}

%\newpage

%
%
%
%
%
\begin{widetext}

%\newpage
\section{Main Results summarized in Figures}
%\newpage
\label{sec:tables}

\subsection{SU($\rN$)-YM and $\mathbb{CP}^{\rN-1}_{\theta=\pi}$-model at $\rN=2$}

\label{sec:tables-N=2}

In \Fig{Fig-reduce-1}, we organize the $\rN=2$ case of 
4d anomalies and 5d topological terms of 4d SU(2)$_{\theta=\pi}$ YM theory (these 5d terms are abbreviated as ``5d YM terms''), as well as the 2d anomalies and 3d topological terms of 2d $\mathbb{CP}^{1}_{\theta=\pi}$-model
(these 3d topological terms are abbreviated as ``3d $\mathbb{CP}^{1}_{\theta=\pi}$ terms'').
%in the bottom part of \Fig{Fig-reduce-1}.\\

%4d/2d anomalies and 5d/3d topological terms of 4d SU(N)$_{\theta=\pi}$ YM theory and 2d $\mathbb{CP}^{\rN-1}_{\theta=\pi}$-model:

%\cpurple
{
Based on the discussions around \eq{eq:select-N=2YM} in \Sec{sec:B} and the \ref{Rule 1} in \Sec{sec:rule}, we proposed that the 5d Yang-Mills term for $\rN=2$ is at most
\be
(B\Sq^1B+\Sq^2\Sq^1B)+K_1w_1(TM^5)^2\Sq^1B
{= \frac{1}{2} \tilde w_1(TM) \cP_2(B_2)+K_1w_1(TM^5)^2\Sq^1B}, 
\ee
where $K_1 \in \Z_2 = \{0,1\}$.
}

%\cpurple
{
Amusingly, recently \Ref{Wan2019oyr1904.00994} derived this precise anomaly based on a different method: putting 4d YM on unorientable manifolds, 
and then turning on background $B$ fields.
\Ref{Wan2019oyr1904.00994} also gives mathematical and physical interpretations of the $K_1$ term,
based on the gauge bundle constraint,
\bea \label{eq:GBC-N2}
w_2(V_{\PSU(2)})=w_2(V_{\SO(3)})=B+K_1w_1(TM^5)^2+K_2w_2(TM^5).
\eea
Following \Ref{Wan2019oyr1904.00994}, $K_1$ and $K_2$ in \eqn{eq:GBC-N2} are the choices of the gauge bundle constraint, with $K_1 \in \{0,1\}= \Z_2$ and $K_2 \in  \{0,1\}= \Z_2$.
The $K_1$ is associated with Kramers singlet ($\cT^2=+1$) or Kramers doublet ($\cT^2=-1$) of Wilson line under time-reversal symmetry.
The $K_2$ is related to bosonic or fermionic properties of Wilson line under quantum statistics.
}\\

\begin{figure}[!h]
\centering
\includegraphics[scale=1.18]{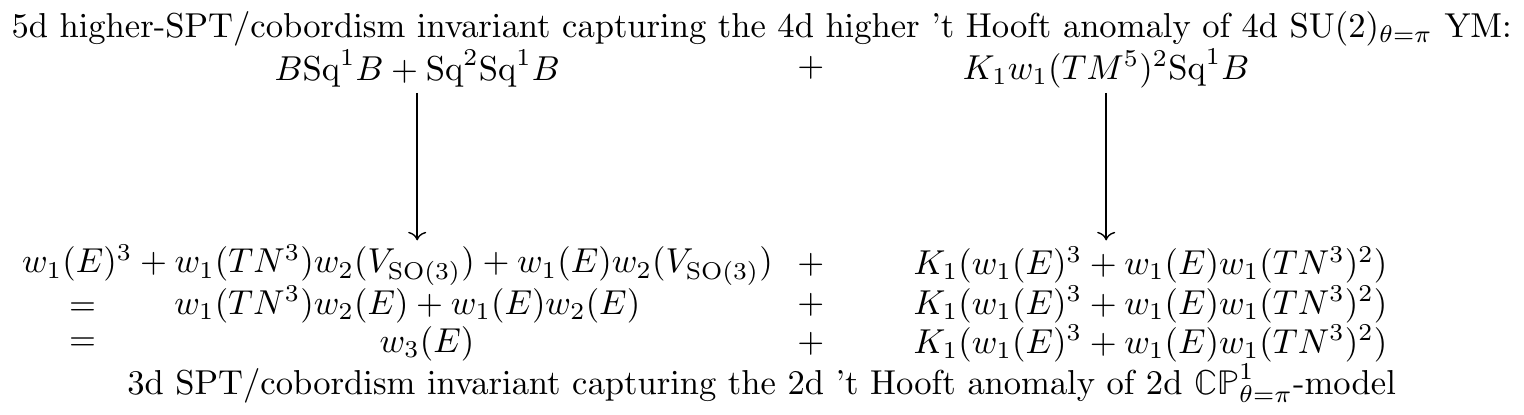}
\caption{{
A main result of our work obtains the higher 't Hooft anomalies of 4d SU(N)$_{\theta=\pi}$ YM and 't Hooft anomalies of 2d $\mathbb{CP}^{{\mathrm{N}}-1}_{\theta=\pi}$ at ${\mathrm{N}}=2$. 
We express the 4d and 2d anomalies in terms of 5d and 3d SPT/cobordism invariants respectively.  
We related the 4d and 2d anomalies by compactifying a 2-torus with twisted boundary conditions.
Some useful formulas are proved in Sec.~\ref{sec:newSUn}, \ref{sec:newCPn}  and \Ref{W2} ---
Note that on a closed 5-manifold, we rewrite:
$B_2{\mathrm{Sq}}^1B_2+{\mathrm{Sq}}^2{\mathrm{Sq}}^1B_2=\frac{1}{4} \delta(\mathcal{P}_2(B_2))=\beta_{(2,4)}\mathcal{P}_2(B_2)= \frac{1}{2} \tilde w_1(TM) \mathcal{P}_2(B_2)$,
$\Sq^2\Sq^1B_2
= (w_2(TM) +w_1(TM)^2) \Sq^1 B_2
=(w_3(TM) +w_1(TM)^3) B_2
$,
and
$B_2{\mathrm{Sq}}^1B_2=(\frac{1}{2} \tilde w_1(TM) \mathcal{P}_2(B_2)-(w_3(TM) +w_1(TM)^3) 
B_2)$.
}}
\label{Fig-reduce-1}
\end{figure}

\subsection{SU($\rN$)-YM and $\mathbb{CP}^{\rN-1}_{\theta=\pi}$-model at $\rN=4$}
\label{sec:tables-N=4}

%\noindent

In \Fig{Fig-reduce-2}, 
we organize the $\rN=4$ case of 
4d anomalies and 5d topological terms of 4d SU(4)$_{\theta=\pi}$ YM theory (these 5d terms are abbreviated as ``5d YM terms''):
{
\bea
&&B\beta_{(2,4)}B+K_1' w_1(TM^5)^2\beta_{(2,4)}B+K_C' A^2\beta_{(2,4)}B+K_{C,1}' Aw_1(TM^5)^2B, 
\eea
where $K_1',K_C',K_{C,1}' \in \Z_2 = \{0,1\}$. 
$K_1',K_C',K_{C,1}'$ are distinct couplings different from the gauge bundle constraint couplings $K_1,K_C,K_{C,1}$ in \Eq{eq:gauge-bundle-constraint-N=4}.
In \Sec{sec:Set-UpforN=4}, we also mentioned the appearances of $w_1(TM^5)^2\beta_{(2,4)}B$ and $A^2\beta_{(2,4)}B$ for the anomaly of 4d SU(4) YM are unlikely and the full discussion is left for the future work \cite{WWZ2019-2}.}
Thus we focus on: 
\bea
&&B\beta_{(2,4)}B+K_{C,1}Aw_1(TM^5)^2B.
\eea

%
%\cpurple
{Amusingly, similar to the discussion in \Ref{Wan2019oyr1904.00994},
we find that the 5d Yang-Mills term for $\rN=4$ is 
\be
B\beta_{(2,4)}B+K_{C,1}Aw_1(TM^5)^2B, 
\ee
where $K_{C,1}$ is from the gauge bundle constraint similar to the generalization in \Ref{Wan2019oyr1904.00994},
\bea
w_2(V_{\PSU(4)})=B+2(K_1w_1(TM^5)^2+K_2w_2(TM^5)
+K_CA^2+K_{C,1}Aw_1(TM^5))\mod4.
\eea
}
The full discussion will be left in a future work \cite{WWZ2019-2}.
We also organize the $\rN=4$ case
 of 2d anomalies and 3d topological terms of 2d $\mathbb{CP}^{3}_{\theta=\pi}$-model
(these 3d topological terms are abbreviated as ``3d $\mathbb{CP}^{3}_{\theta=\pi}$ terms'') in the bottom part of \Fig{Fig-reduce-2}.

\begin{figure}[!h]
\includegraphics[scale=1.18]{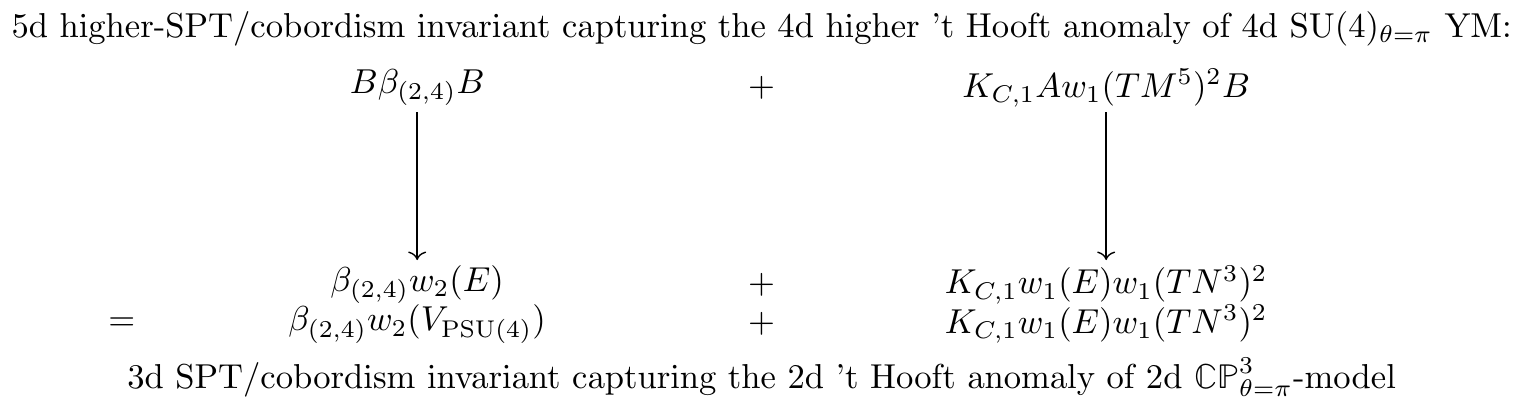}
\caption{{A main result of our work obtains the higher 't Hooft anomalies of 4d SU(N)$_{\theta=\pi}$ YM and 't Hooft anomalies of 2d $\mathbb{CP}^{\rN-1}_{\theta=\pi}$ at $\rN=4$. We express the 4d and 2d anomalies in terms of 5d and 3d SPT/cobordism invariants respectively.  
We related the 4d and 2d anomalies by compactifying a 2-torus with twisted boundary conditions.
Some useful formulas are proved in Sec.~\ref{sec:newSUn}, \ref{sec:newCPn} and \Ref{W2} ---
Note that on a closed 5-manifold, we rewrite:
$B_2\beta_{(2,4)}B_2
=\frac{1}{4}\tilde w_1(TM)\mathcal{P}_2(B_2)$.
}
}
\label{Fig-reduce-2}
\end{figure}

\pagebreak
\end{widetext}

%%%%%%%%%%%%%%%%%%%%%%%%%%%
%%%%%%%%%%%%%%%%%%%%%%%%%%%
%%%%%%%%%%%%%%%%%%%%%%%%%%%
%%%%%%%%%%%%%%%%%%%%%%%%%%%
%%%%%%%%%%%%%%%%%%%%%%%%%%%
%%%%%%%%%%%%%%%%%%%%%%%%%%%
%%%%%%%%%%%%%%%%%%%%%%%%%%%
%%%%%%%%%%%%%%%%%%%%%%%%%%%
%%%%%%%%%%%%%%%%%%%%%%%%%%%
%%%%%%%%%%%%%%%%%%%%%%%%%%%
%%%%%%%%%%%%%%%%%%%%%%%%%%%
\section{Conclusion and More Comments: Anomalies for the general N}
\label{sec:con} 
%\begin{enumerate}
%\end{enumerate}

In this work, we 
propose a new and more complete set
of 't Hooft anomalies of certain 
quantum field theories (QFTs): 
time-reversal symmetric 
4d SU(N)-Yang-Mills (YM) and 2d-$\CP^{\rN-1}$ models with a topological term $\theta=\pi$,
and then give an eclectic ``proof'' of the existence of these full anomalies (of ordinary 0-form global symmetries or higher symmetries) to match these QFTs.
Our  ``proof'' is formed by a set of analyses and arguments, combining 
algebraic/geometric topology, QFT analysis, condensed matter inputs and additional physical criteria

We mainly focus on N = 2 and N = 4 cases.
As known in the literature, we actually know that N = 3 case is absent from the strict 't Hooft anomaly. 
The absence of obvious 't Hooft anomalies also apply to the more general odd integer N case (although one needs to be careful about the global consistency or 
global inconsistency, see \cite{Gaiotto2017yupZoharTTT}).
For a general even N integer, it has not been clear in the literature what are the complete 't Hoot anomalies for these QFTs.

Physically we follow the idea that coupling the global symmetry of $d$d QFTs to background fields, we can detect the higher dimensional ($d+1$d) SPTs/counter term as \eq{eq:QFT-SPT}:
\begin{multline} 
\left.
 \bZ^{\text{$d$d}}_{{\text{QFT}}}   \right|_{\text{bgd.field}=0} \nonumber\\
 { \xrightarrow{\hspace*{1cm}}} \quad
 %{ \xrightarrow{\hspace*{1cm}}} 
 \bZ^{\text{$(d+1)$d}}_{{\text{SPTs}}}(\text{bgd.field}) \cdot 
 \left. \bZ^{\text{$d$d}}_{{\text{QFT}}}
   \right|_{\text{bgd.field}\neq 0},  \nonumber
\end{multline}
that cannot be absorbed by $d$d SPTs.
(Here, for condensed matter oriented terminology, we follow the conventions of \cite{1405.7689}.)
This underlying $d+1$d SPTs means that the $d$d QFTs have an obstruction to be regularized with all the relevant (higher) global symmetries strictly local or onsite.
Thus this indicates the obstruction of gauging, which indicates the $d$d 't Hooft anomalies (See [\onlinecite{Wen2013oza1303.1803, 1405.7689, Wang2017locWWW1705.06728}] for QFT-oriented discussion and references therein).

We comment that the above idea \eq{eq:QFT-SPT} is \emph{distinct} from another idea also relating to coupling QFTs to SPTs,
for example used in \cite{2017arXiv171111587GPW}:
There one couples $d$d QFTs to $d$d SPTs/topological terms,
\begin{multline} \label{eq:QFT-gauge-SPT}
\left.
 \bZ^{\text{$d$d}}_{{\text{QFT}}} (A_1, B_2, .)   \right|_{\text{bgd.field}}  \overset{\text{dynamical gauging + $d$d SPTs}}{ \xrightarrow{\hspace*{3cm}}} %\longrightarrow
  \\
  \int [{\cal D} A_1][{\cal D} B_2] \dots\,
  \bZ^{\text{$d$d}}_{{\text{QFT}}} (A_1, B_2, .)
    \cdot
   \bZ^{\text{$d$d}}_{{\text{SPTs}}}(A_1, B_2, .), 
\end{multline}
with the allowed global symmetries, and then dynamically gauging some of global symmetries.
A similar framework outlining the above two ideas, on coupling QFTs to SPTs and gauging, is also explored in \cite{Kapustin:2014gua}.

Follow the idea of \eq{eq:QFT-SPT} 
and the QFT and global symmetries information given in \Sec{sec:remark}, we 
classify all the possible anomalies enumerated by the cobordism theory computed in \Sec{sec:cobo-top}.
Then constrained by the known anomalies in the literature \Sec{sec:rev},
we follow the rules for the anomaly constraint we set in \Sec{sec:rule}
and a dimensional reduction method in \Sec{sec:5d3d},
we deduce the new anomalies of 2d-$\CP^{\rN-1}$ models in \Sec{sec:newCPn} and of 4d SU(N)-Yang-Mills (YM)  in \Sec{sec:newSUn}.

\subsection{Anomaly of 2d ${\CP^1}$ model}
To summarize the $d$d anomalies
and the $(d+1)$ cobordism/SPTs invariants of the above QFTs,
\begin{widetext}
we propose that a general anomaly formula (3d cobordism/SPT invariant) for 2d ${\CP^{\rN-1}_{\theta=\pi}}$ model at $\rN=2$ as:
%\ccred{(JW: why $K_1$ associates to $K_1w_1(E) w_2(V_{\SO(3)})$?)}
\begin{multline}
 \label{eq:CP1-3dSPT}
\bZ^{\text{$2$d}}_{{{\CP^1}_{\theta=\pi}}}(w_j(TM),w_j(E),\dots)\bZ^{\text{$3$d}}_{{\text{SPTs}}} \\
\equiv
\bZ^{\text{$2$d}}_{{{\CP^1}_{\theta=\pi}}}(w_j(TM),w_j(E),\dots) \exp(\ii \pi \int_{M^3} 
\big(
w_1(E)^3+w_1(TN^3)w_2(V_{\SO(3)})+w_1(E) w_2(V_{\SO(3)}) \\
+ K_1(w_1(E)^3+w_1(E) w_1(TN^3)^2)\big))\\
=\bZ^{\text{$2$d}}_{{{\CP^1}_{\theta=\pi}}}(w_j(TM),w_j(E),\dots) \exp(\ii \pi \int_{M^3} 
\big(w_3(E)+ K_1( w_1(E)^3+w_1(E)w_1(TN^3)^2) \big)).
\end{multline}
Note that we used and derived that
$w_1(E)^3 + w_1(E) w_2(V_{\SO(3)}) = w_1(E) w_2(E)$,
 $ w_1(TN^3) w_2(V_{\SO(3)})=w_1(TN^3) w_2(E)$
 and
$\big( w_1(E)    
+ w_1(TN^3) \big) w_2(E)=w_3(E)$.

Schematically, the 2d anomaly of  ${\CP^{1}_{\theta=\pi}}$, written as a qualitative expression for its 3d SPT term, behaves as:
$$
\boxed{\sim (A_x)^3+ \cT w_2(V_{\SO(3)}) +A_x w_2(V_{\SO(3)})  + K_1 ((A_x)^3 + A_x \cT^2 ) \mod 2}.
$$
Here $\cT$ means the  dependence on 
the time-reversal background field $w_1(TN^3)$. 
Here $w_2(V_{\SO(3)})$ behaves as a topological term for the SO(3)-symmetric 1+1D Haldane chain.
From \eqn{eq:CP1-anom-A3},
the $w_1(E)$ behaves as a $\Z_2$-translation background gauge field as $w_1(\Z_2^x)$ or $A_x$.
The $w_1(TN^3)^2$ behaves as a topological term for the $\Z_2^T$-symmetric 1+1D Haldane chain.

\subsection{Anomaly of 2d ${\CP^3}$ model}

We propose that a general anomaly formula (3d cobordism/SPT invariant) for 2d ${\CP^{\rN-1}_{\theta=\pi}}$ model at $\rN=4$ as:
\begin{multline}
 \label{eq:CP3-3dSPT}
\bZ^{\text{$2$d}}_{{\CP^3_{\theta=\pi}}}(w_j(TM), \t w_j(E),\dots)\bZ^{\text{$3$d}}_{{\text{SPTs}}} \\
\equiv
{\bZ^{\text{$2$d}}_{{\CP^3_{\theta=\pi}}}(w_j(TM), \t w_j(E),\dots) \exp(\ii \pi \int_{M^3} \big(  
 \frac{1}{2} \tilde{w}_1(TN^3) w_2(V_{\PSU(4)})
{+K_{C,1}w_1(E)w_1(TN^3)^2\big) )}}\\
=\bZ^{\text{$2$d}}_{{\CP^3_{\theta=\pi}}}(w_j(TM), \t w_j(E),\dots) \exp(\ii \pi \int_{M^3} \big(  
 \frac{1}{2} \tilde{w}_1(TN^3) w_2(E)
+K_{C,1}w_1(E)w_1(TN^3)^2  \big))\\
=\bZ^{\text{$2$d}}_{{\CP^3_{\theta=\pi}}}(w_j(TM), \t w_j(E),\dots) \exp(\ii \pi \int_{M^3} \big(  
\beta_{(2,4)} w_2(E)
+K_{C,1}w_1(E)w_1(TN^3)^2  \big)).
%\\
%=
%\bZ^{\text{$2$d}}_{{\CP^{3}_{\theta=\pi}}}(w_j(TM), \t w_j(E),\dots) \exp(\ii \pi \int_{M^3} \big({
%\t{w}_3(E)
%+  K_{C,1}w_1(E)w_1(TN^3)^2}\big)).
\end{multline}
Using the fact that $w_1(TN^3)$ is the background field for $\cC\cT$, $w_1(TN^3)$ can be schematically written as $\cT+ A_x$. Hence the 2d anomaly of  ${\CP^{3}_{\theta=\pi}}$, written as a qualitative expression up to a normalization factor for its 3d SPT term, behaves as:
\begin{eqnarray}
\begin{split}
\sim&\boxed{ (\cT+A_x)  w_2(V_{\PSU(4)})    +K_{C,1} A_x (\cT+A_x)^2  \mod 2}\\
=&  \cT w_2(V_{\PSU(4)}) +  A_x w_2(V_{\PSU(4)})  +K_{C,1} (A_x^3+ A_x \cT^2)   \mod 2
\end{split}
\end{eqnarray}
Here $\cT$ means the  dependence on 
the background field for time-reversal symmetry $\Z_2^T$. 
Here $w_2(V_{\PSU(4)})$ behaves as a topological term for a ${\PSU(4)}$-symmetric 1+1D generalized spin chain.
From \eqn{eq:CP1-anom-A3},
the $w_1(E)$ behaves as a $\Z_2$-translation background gauge field written as $w_1(\Z_2^x)$ or $A_x$.
The $w_1(TN^3)^2$ behaves as a topological term for the $\Z_2^T$-symmetric 1+1D Haldane chain.
\subsection{Higher Anomaly of 4d SU(2) Yang-Mills theory}

We propose that a general anomaly formula (5d cobordism/higher SPT invariant) for 4d ${\SU(\rN)_{\theta=\pi}}$-YM theory at $\rN=2$ as:
\begin{multline}
 \label{eq:SU2YM-5dSPT}
\bZ^{\text{$4$d}}_{{\text{SU(2)YM}_{\theta=\pi}}}(w_j(TM),A,B_2,\dots)\bZ^{\text{$5$d}}_{{\text{higher-SPTs}}} \\
\equiv
\boxed{\bZ^{\text{$4$d}}_{{\text{SU(2)YM}_{\theta=\pi}}}(w_j(TM),A,B_2,\dots) \exp(\ii \pi \int_{M^5} \big(B_2\Sq^1B_2+\Sq^2\Sq^1B_2+K_1w_1(TM)^2\Sq^1B_2 \big))}\\
=
\boxed{\bZ^{\text{$4$d}}_{{\text{SU(2)YM}_{\theta=\pi}}}(w_j(TM),A,B_2,\dots) \exp(\ii \pi \int_{M^5} \big(  \frac{1}{2} \tilde w_1(TM) \cP_2(B_2) +K_1w_1(TM)^2\Sq^1B_2\big))}.
\end{multline}
Schematically, the 4d anomaly of ${\SU(2)_{\theta=\pi}}$-YM theory, written as a qualitative expression for its 5d SPT term, behaves as:
$$
\boxed{\sim (\cT B B + K_1\cT^3 B) \mod 2}.
$$
Here $\cT$ means the  dependence on 
the time-reversal background field $w_1(TM)$. 
Here $B$  means the dependence on the $\Z_{2,[1]}^e$ background field $B$.

\subsection{Higher Anomaly of 4d SU(4) Yang-Mills theory}

We propose that a general anomaly formula (5d cobordism/higher SPT invariant) for 4d ${\SU(\rN)_{\theta=\pi}}$-YM theory at $\rN=4$ as:
\begin{multline}
 \label{eq:SU4YM-5dSPT}
\bZ^{\text{$4$d}}_{{\text{SU(4)YM}_{\theta=\pi}}}(w_j(TM),A,B_2,\dots)\bZ^{\text{$5$d}}_{{\text{higher-SPTs}}} \\
\equiv
\boxed{
\bZ^{\text{$4$d}}_{{\text{SU(4)YM}_{\theta=\pi}}}(w_j(TM),A,B_2,\dots) \exp(\ii \pi \int_{M^5} \big(B_2 \beta_{(2,4)}B_2+K_{C,1}Aw_1(TM)^2B_2\big))}.
\end{multline}
%Notice that $w_1(TM)$ is the background field for $\cC\cT$, $w_1(TM)$ can be schematically written as $\cT+ A$. 
Schematically the 4d anomaly of ${\SU(2)_{\theta=\pi}}$-YM theory, written as a qualitative expression for its 5d SPT term, behaves as:
$$
\boxed{\sim (\cT B B +  K_{C,1} A \cT^2 B) \mod 2}.
$$
Here $\cT$ means the  dependence on 
the time-reversal background field $w_1(TM)$. 
Here $B$  means the dependence on the $\Z_{4,[1]}^e$ background field $B$.
Here $A\equiv A_C$ means the dependence on the charge conjugation $C$ background field.
{We will leave the discussions for possible additional anomalies at N = 4 in an upcoming work \cite{WWZ2019-2}.}

\subsection{Higher Anomaly of 4d SU(N) Yang-Mills theory}

When $\rN$ is an even number, we propose that
a partial list of the 4d anomaly formula (5d cobordism/higher SPT invariant) the 4d ${\SU(\rN)_{\theta=\pi}}$-YM theory
\begin{multline}
 \label{eq:SUNYM-5dSPT}
\bZ^{\text{$4$d}}_{{\text{SU(\rN)YM}_{\theta=\pi}}}(w_j(TM),A,B_2,\dots)\bZ^{\text{$5$d}}_{{\text{higher-SPTs}}} \\
\equiv
\bZ^{\text{$4$d}}_{{\text{SU(\rN)YM}_{\theta=\pi}}}(w_j(TM),A,B_2,\dots) \exp(\ii \pi \int_{M^5} \big(B_2 \beta_{(2,\rN)}B_2+\frac{\rN}{2}\Sq^2\beta_{(2,\rN)}B_2+ \dots\big)\\
=
\boxed{\bZ^{\text{$4$d}}_{{\text{SU(\rN)YM}_{\theta=\pi}}}(w_j(TM),A,B_2,\dots) \exp(\ii \pi \int_{M^5} \big(
 \frac{1}{\rN} \tilde w_1(TM) \cP_2(B_2)
 + \dots\big)}.
\end{multline}
Note that we can derive $B_2 \beta_{(2,\rN=2^n)}B_2+\frac{\rN}{2}\Sq^2\beta_{(2,\rN)}B_2 = \frac{1}{\rN} \tilde w_1(TM) \cP_2(B_2)$, where 
Pontryagin square
$\mathcal{P}_2:\H^2(-,\Z_{2^n})\to\H^4(-,\Z_{2^{n+1}})$.
However, when charge conjugation $C$ is an additional $\Z_2$-discrete symmetry for SU(N) YM with $\rN >2$,
we foresee the additional new anomalies can happen, such as $Aw_1(TM)^2B_2$
 where $A\equiv A_C$ is the charge conjugation $C$ background field.
We will leave this additional anomalies in an upcoming work \cite{WWZ2019-2}.

\end{widetext}

We have commented about the higher symmetry analog of ``Lieb-Schultz-Mattis theorem'' %\cite{Lieb:1961fr} \cite{Hastings:2003zx}
in \Sec{sec:sTQFT},
for example, the consequences of low-energy dynamics due to the anomalies.
(For the early-history and the recent explorations on the emergent dynamical gauge fields and anomalous higher symmetries
in quantum mechanical and in condensed matter systems, see for example, \cite{Wilczek1984dhPRL} and \cite{Wen2018zux1812.02517} respectively, and references therein.)
In all examples of 4d ${\SU(\rN)_{\theta=\pi}}$-YM and 2d ${\CP^{\rN-1}_{\theta=\pi}}$ model in \Sec{sec:sTQFT},
we find the symmetry-extension method \cite{Wang2017locWWW1705.06728} or higher-symmetry-extension method \cite{Wan2018djlW2.1812.11955} can construct
their \emph{symmetry-extended gapped phases}.
However, the dynamical fates of the gauged topologically ordered gapped phases suggest them flow to become \emph{spontaneously symmetry breaking}
instead of \emph{symmetry preserving} \cite{Wan2019oyr1904.00994}.
The fact that {symmetry-preserving} gapped phase is not allowed is consistent with \Ref{CordovaCO2019, KO-Strings-2019-talk}.
We hope to address more about the dynamics in future work.

%\newpage
\section{Acknowledgments}

%\emph{Acknowledgments} ---
%The authorship is listed in the alphabetical order by convention.
The authors are listed in the alphabetical order by a standard convention.
JW thanks the participants of Developments in Quantum Field Theory and Condensed Matter Physics (November 5-7, 2018) 
at Simons Center for Geometry and Physics at SUNY Stony Brook University
for giving valuable feedback where this work is publicly reported. 
JW thanks Harvard CMSA for the seminar invitation (March 11,  April 10 and September 10, 2019) on presenting the related work \cite{CMSA}.   
We thank Pavel Putrov and Edward Witten for helpful remarks or comments.
JW especially thanks Zohar Komargodski, and also Clay Cordova, for discussions on the issue of the potentially missing anomalies of YM theories \cite{CKW}.
JW also thanks Kantaro Ohmori, Nathan Seiberg, and Masahito Yamazaki  %Yunqin Zheng 
for conversations. 
ZW acknowledges support from the Shuimu Tsinghua Scholar Program, NSFC grants 11431010 and 11571329. 
JW  acknowledges the Corning Glass Works Foundation Fellowship and NSF Grant PHY-1606531. 
YZ thanks the support from Physics Department of Princeton University.
This work is also supported by NSF Grant DMS-1607871 ``Analysis, Geometry and Mathematical
Physics'' and Center for Mathematical Sciences and Applications at Harvard University.

\appendix

\section{Bockstein Homomorphism}\label{Bockstein}

In general, given a chain complex $C_{\bullet}$ and a short exact sequence of abelian groups:
\bea
0\to A'\to A\to A''\to 0,
\eea
we have a short exact sequence of cochain complexes:
\bea
&&0\to\Hom(C_{\bullet},A')\to\Hom(C_{\bullet},A)\notag\\
&&\to\Hom(C_{\bullet},A'')\to0.
\eea
Hence we obtain a long exact sequence of cohomology groups:
\bea
&&\cdots\to\H^n(C_{\bullet},A')\to\H^n(C_{\bullet},A)\to\H^n(C_{\bullet},A'')\notag\\
&&\stackrel{\partial}{\to}\H^{n+1}(C_{\bullet},A')\to\cdots,
\eea
the connecting homomorphism $\partial$ is called Bockstein homomorphism.

For example,
$\beta_{(n,m)}:\H^*(-,\Z_{m})\to\H^{*+1}(-,\Z_{n})$ is the Bockstein homomorphism associated with the extension $\Z_n\stackrel{\cdot m}{\to}\Z_{nm}\to\Z_m$ where $\cdot m$ is the group homomorphism given by multiplication by $m$. In particular, $\beta_{(2,2^n)}=\frac{1}{2^n}\delta\mod2$.

Since there is a commutative diagram
\bea
\xymatrix{\Z_n\ar[r]^{\cdot m}\ar@{=}[d]&\Z_{nm}\ar[r]^{\mod m}\ar[d]^{\cdot k}&\Z_m\ar[d]^{\cdot k}\\
\Z_n\ar[r]^{\cdot km}&\Z_{knm}\ar[r]^{\mod km\quad}&\Z_{km},}
\eea
by the naturality of connecting homomorphism, we have 
the following commutative diagram:
\bea
\xymatrix{\H^*(-,\Z_m)\ar[r]^{\beta_{(n,m)}}\ar[d]^{\cdot k}&\H^{*+1}(-,\Z_n)\ar@{=}[d]\\
\H^*(-,\Z_{km})\ar[r]^{\beta_{(n,km)}}&\H^{*+1}(-,\Z_n).}
\eea

Hence we prove that
\bea\label{bsrel1}
\beta_{(n,m)}=\beta_{(n,km)}\cdot k.
\eea

In particular, since $\Sq^1=\beta_{(2,2)}$, we have $\Sq^1=\beta_{(2,4)}\cdot2$.
This formula is used in Sec. \ref{sec:sTQFT}.

\section{Poincar\'e Duality}\label{Poincare}

An orientable manifold is $R$-orientable for any ring $R$, while a non-orientable manifold is $R$-orientable if and only if $R$ contains a unit of order $2$, which is equivalent to having $2 = 0$ in $R$. Thus every manifold is $\Z_2$-orientable.

\textbf{Poincar\'e Duality}: Let $M$ be a closed connected $n$-dimensional
manifold,  $R$ is a ring, if $M$ is $R$-orientable, let $[M] \in \H_n(M,R)$ be the fundamental class for $M$ with coefficients in $R$, then
the map $\text{PD}: \H^k (M,R) \to \H_{n - k} (M,R)$  defined by $\text{PD}(\al) =[M]\cap \al$
is an isomorphism for all $k$.

%Fact: $\H_k(M,R)$ can be represented by a submanifold of $M$ when
%
%(1) $R=\Z_2$;
%
%(2) $R=\Z$, $k\le6$.

\section{Cohomology of Klein bottle with coefficients $\Z_4$}\label{Klein}

In this Appendix, we derive the relation of $\beta_{(2,4)}x=z$, where $x$ is the generator of the $\Z_4$ factor of $\H^1(K,\Z_4)=\Z_4\times\Z_2$ and $z$ is the generator of $\H^2(K,\Z_2)=\Z_2$.

One $\Delta$-complex structure of Klein bottle is shown in Fig. \ref{Delta-Klein}.
Let $\alpha_i$ denote the dual cochain of the 1-simplex $a_i$ with coefficients $\Z_4$, $\lambda_i$ the dual cochain of the 2-simplex $u_i$ with coefficients $\Z_4$, let $\tilde{}$ denote its mod 2 reduction and let $\{\;\}$ denote the cohomology class.

\begin{figure}[!h]
\begin{center}
\begin{tikzpicture}[scale=2, decoration={
    markings,
    mark=at position 0.5 with {\arrow{>}}}]
\draw[postaction={decorate}] (0,0)--(1,0);
\draw[postaction={decorate}] (0,1)--(0,0);
\draw[postaction={decorate}] (1,1)--(0,1);
\draw[postaction={decorate}] (1,0)--(1,1);

\draw[postaction={decorate}] (1,0)--(0,1);

\node at (0.5,-0.1) {$a_1$};
\node at (0.5,1.1) {$a_2$};
\node at (-0.1,0.5) {$a_1$};
\node at (1.1,0.5) {$a_2$};
\node at (0.3,0.3) {$u_1$};
\node at (0.7,0.7) {$u_2$};
\node[right] at (0.5,0.5) {$a_3$};
\end{tikzpicture}
\end{center}
\caption{One $\Delta$-complex structure of Klein bottle}
\label{Delta-Klein}
\end{figure}

The 2-simplexes and 1-simplexes are related by the boundary differential $\partial$ of chains, namely
$\partial u_1=2a_1+a_3$, $\partial u_2=2a_2-a_3$, so 
we deduce that the boundary differential $\delta$ of cochains have the following relation:
$\delta \alpha_1=2\lambda_1$, $\delta\alpha_2=2\lambda_2$, $\delta\alpha_3=\lambda_1-\lambda_2$. 
So we deduce that the cohomology classes $\{\lambda_1\}=\{\lambda_2\}$ are the same.

Since $\delta(\alpha_1-\alpha_2-2\alpha_3)=0$, $\delta(2\alpha_1)=0$, $\H^1(K,\Z_4)=\Z_4\times\Z_2$. Let $x=\{\alpha_1-\alpha_2-2\alpha_3\}$, $y=\{2\alpha_1\}$, then $x$ generates $\Z_4$, $y$ generates $\Z_2$, $x\mod2=\{\tilde{\alpha}_1+\tilde{\alpha}_2\}$, $y\mod2=0$.

By the definition of cup product,
$\alpha_1^2(u_1)=\alpha_1(a_1)\cdot\alpha_1(a_1)=1$, $\alpha_1^2(u_2)=\alpha_1(a_2)\cdot\alpha_1(a_2)=0$, so $\alpha_1^2=\lambda_1$, similarly $\alpha_2^2=\lambda_2$. 

$\{\tilde{\alpha}_1+\tilde{\alpha}_2\}^2=\{\tilde{\alpha}_1\}^2+\{\tilde{\alpha}_2\}^2=2z=0$ where $z=\{\tilde{\lambda}_1\}=\{\tilde{\lambda}_2\}$ is the generator of $\H^2(K,\Z_2)=\Z_2$, so $\beta_{(2,4)}x=z$.

\section{Twisted cohomology $\H^n(\B\Z_2,\Z_{4,\rho})$}

Here $\rho: \Z_2\to\text{Aut}(\Z_4)$ is a nontrivial homomorphism.

Since $\B\Z_2=S^{\infty}/\Z_2$ whose universal covering space is $S^{\infty}$, by the definition of twisted cohomology \cite{Davis-Kirk}, $\H^n(\B\Z_2,\Z_{4,\rho})$ is the $n$-th cohomology group of the cochain complex
$\Hom_{\Z\Z_2}(C_{\bullet}(S^{\infty}),\Z_4)$ where both $C_{\bullet}(S^{\infty})$ and $\Z_4$ are left $\Z\Z_2$-modules, $C_{\bullet}(S^{\infty})$ is the cellular chain complex of $S^{\infty}$ with two cells in each dimension. Denote the two cells in dimension $n$ by $e^n_1$ and $e^n_2$, and denote the action of $\Z_2$ on $C_{\bullet}(S^{\infty})$ by $\rho'$.

Then 
\bea\label{rho'1}
\rho'(e^n_1)=(-1)^ne^n_2,\text{   and   }\rho'(e^n_2)=(-1)^ne^n_1.
\eea

We have 
\bea\label{partial}
\partial_n(e^n_1)=\partial_n(e^n_2)=e^{n-1}_1-e^{n-1}_2.
\eea

$f_n\in\Hom_{\Z\Z_2}(C_{n}(S^{\infty}),\Z_4)$ satisfies 

\bea\label{rho'2}
f_n(\rho'(e^n_i))=\rho(f_n(e^n_i))=(f_n(e^n_i))^{-1}.
\eea

By \eqref{partial},
if $f_n\in\text{Ker}\delta_n$, then $f_n(e^n_1)=f_n(e^n_2)$.
While by \eqref{partial}, if $f_n\in\text{Im}\delta_{n-1}$, then there exists $f_{n-1}\in \Hom_{\Z\Z_2}(C_{n-1}(S^{\infty}),\Z_4)$ such that $f_n(e^n_1)=f_n(e^n_2)=\frac{f_{n-1}(e^{n-1}_1)}{f_{n-1}(e^{n-1}_2)}$.

If $n$ is odd, then by \eqref{rho'1} and \eqref{rho'2}, for any $f_n\in\Hom_{\Z\Z_2}(C_{n}(S^{\infty}),\Z_4)$, we have $f_n(e^n_1)=f_n(e^n_2)$.

If $n$ is even, then by \eqref{rho'1} and \eqref{rho'2}, for any $f_n\in\Hom_{\Z\Z_2}(C_{n}(S^{\infty}),\Z_4)$, we have $f_n(e^n_1)=(f_n(e^n_2))^{-1}$.

So if $n$ is odd, then $\text{Ker}\delta_n=\Z_4$, and $\text{Im}\delta_{n-1}=2\Z_4$, so $\H^n(\B\Z_2,\Z_{4,\rho})=\Z_4/2\Z_4=\Z_2$.
While if $n$ is even, then $\text{Ker}\delta_n=2\Z_4$, and $\text{Im}\delta_{n-1}=0$, so $\H^n(\B\Z_2,\Z_{4,\rho})=2\Z_4=\Z_2$.

\section{Cohomology of $\B\Z_2\ltimes\B^2\Z_4$}\label{BZ2B2Z4}
The reference for this appendix is the appendix of \cite{2013arXiv1309.4721K}.

In order to compute $\Omega_5^{\tO}(\B\Z_2\ltimes\B^2\Z_4)$, we need the data of $\H^n(\B\Z_2\ltimes\B^2\Z_4,\Z_2)$ for $n\le5$.

Let $\mathbb{G}$ be a 2-group with $\B\mathbb{G}=\B\Z_2\ltimes\B^2\Z_4$.
By the Universal Coefficient Theorem, 
\bea
\H^n(\B \mathbb{G},\Z_2)&=&\H^n(\B \mathbb{G},\Z)\otimes\Z_2\oplus\notag\\
&&\text{Tor}(\H^{n+1}(\B \mathbb{G},\Z),\Z_2).
\eea
So we need only compute $\H^n(\B\Z_2\ltimes\B^2\Z_4,\Z)$ for $n\le6$.

$\H^n(\B ^2\Z_4,\Z)$ is computed in Appendix C of \cite{Clement}.
\bea
\H^n(\B ^2\Z_4,\Z)=\left\{\begin{array}{lllllll}\Z&n=0\\
0&n=1\\
0&n=2\\
\Z_4&n=3\\
0&n=4\\
\Z_8&n=5\\
\Z_2&n=6\end{array}\right.
\eea

For the 2-group $\mathbb{G}$ defined by the nontrivial action $\rho$ of $\Z_2$ on $\Z_4$ and nontrivial fibration
\bea
\xymatrix{
\B ^2\Z_4  \ar[r] &\B \mathbb{G}\ar[d]\\
         &\B \Z_2}
\eea
classified by the nonzero Postnikov class $\pi\in\H^3(\B\Z_2,\Z_4)$. {Here we consider the fiber sequence
$\B ^2 \Z_{4,[1]} \to \B \mathbb{G} \to \B \Z_2 \to \B ^3 \Z_{4,[1]} \to \dots$ 
induced from a short exact sequence $1 \to \Z_{4,[1]} \to \mathbb{G} \to \Z_2 \to 1$.}
We have the Serre spectral sequence
\bea
\H^p(\B\Z_2,\H^q(\B^2\Z_4,\Z))\Rightarrow\H^{p+q}(\B\mathbb{G},\Z),
\eea
the $E_2$ page of the Serre spectral sequence is the $\rho$-equivariant cohomology 
$\H^p(\B \Z_2,\H^q(\B ^2\Z_4,\mathbb{Z}))$. The shape of the relevant piece is shown in Fig. \ref{fig:SSSfor(BZ2B2Z4)}.
%%%%%%
%%%%%%

%\begin{widetext}
\begin{figure}[!h]
\center
\begin{sseq}[grid=none,labelstep=1,entrysize=1cm]{0...7}{0...6}
\ssdrop{\Z}
\ssmoveto 1 0 
\ssdrop{0}
\ssmoveto 2 0
\ssdrop{\Z_2}
\ssmoveto 3 0
\ssdrop{0}
\ssmoveto 4 0
\ssdrop{\Z_2}
\ssmoveto 5 0
\ssdrop{0}
\ssmoveto 6 0
\ssdrop{\Z_2}
\ssmoveto 7 0
\ssdrop{0}
\ssmoveto 0 1
\ssdrop{0}
\ssmoveto 1 1
\ssdrop{0}
\ssmoveto 2 1
\ssdrop{0}
\ssmoveto 3 1
\ssdrop{0}
\ssmoveto 4 1
\ssdrop{0}
\ssmoveto 5 1
\ssdrop{0}
\ssmoveto 6 1
\ssdrop{0}
\ssmoveto 7 1
\ssdrop{0}
\ssmoveto 0 2
\ssdrop{0}
\ssmoveto 1 2
\ssdrop{0}
\ssmoveto 2 2
\ssdrop{0}
\ssmoveto 3 2
\ssdrop{0}
\ssmoveto 4 2
\ssdrop{0}
\ssmoveto 5 2
\ssdrop{0}
\ssmoveto 6 2
\ssdrop{0}
\ssmoveto 7 2
\ssdrop{0}
\ssmoveto 0 3
\ssdrop{\Z_2}
\ssmoveto 1 3
\ssdrop{\Z_2}
\ssmoveto 2 3
\ssdrop{\Z_2}
\ssmoveto 3 3
\ssdrop{\Z_2}
\ssmoveto 4 3
\ssdrop{\Z_2}
\ssmoveto 5 3
\ssdrop{\Z_2}
\ssmoveto 6 3
\ssdrop{\Z_2}
\ssmoveto 7 3
\ssdrop{\Z_2}
\ssmoveto 0 4
\ssdrop{0}
\ssmoveto 1 4
\ssdrop{0}
\ssmoveto 2 4
\ssdrop{0}
\ssmoveto 3 4
\ssdrop{0}
\ssmoveto 4 4
\ssdrop{0}
\ssmoveto 5 4
\ssdrop{0}
\ssmoveto 6 4
\ssdrop{0}
\ssmoveto 7 4
\ssdrop{0}
\ssmoveto 0 5
\ssdrop{\Z_8}
\ssmoveto 1 5
\ssdrop{\Z_2}
\ssmoveto 2 5
\ssdrop{\Z_2}
\ssmoveto 3 5
\ssdrop{\Z_2}
\ssmoveto 4 5
\ssdrop{\Z_2}
\ssmoveto 5 5
\ssdrop{\Z_2}
\ssmoveto 6 5
\ssdrop{\Z_2}
\ssmoveto 7 5
\ssdrop{\Z_2}
\ssmoveto 0 6
\ssdrop{\Z_2}
\ssmoveto 1 6
\ssdrop{\Z_2}
\ssmoveto 2 6
\ssdrop{\Z_2}
\ssmoveto 3 6
\ssdrop{\Z_2}
\ssmoveto 4 6
\ssdrop{\Z_2}
\ssmoveto 5 6
\ssdrop{\Z_2}
\ssmoveto 6 6
\ssdrop{\Z_2}
\ssmoveto 7 6
\ssdrop{\Z_2}
\ssmoveto 0 3
\ssarrow[color=black] 4 {-3}

\ssmoveto 2 3
\ssarrow[color=black] 4 {-3}

\ssmoveto 0 5
\ssarrow[color=black] 3 {-2}

\end{sseq}
%\center (remove for compilation)
\caption{Serre spectral sequence for $(\B \Z_2,\B^2\Z_4)$}
\label{fig:SSSfor(BZ2B2Z4)}
\end{figure}
%\end{widetext}

Note that $p$ labels the columns and $q$ labels the rows.

The bottom row is $\H^p(\B \Z_2,\Z)$.

The universal coefficient theorem tells us that 
$\H^3(\B ^2\Z_4,\Z)$ $=\H^2(\B ^2\Z_4,\R/\Z)$ 
$=$ $\Hom(\H_2(\B ^2\Z_4,\Z),\R/\Z)$ 
$=$ $\Hom(\pi_2(\B ^2\Z_4),\R/\Z)$ 
$=$ $\Hom(\Z_4,\R/\Z)$
$=\hat \Z_4$, 
so the $q=3$ row is $\H^p(\B \Z_2,\hat \Z_4)$, where $\Z_2$ acts on $\Z_4$ via $\rho$. For example, $\H^0(\B\Z_2,\hat\Z_4)$ is the subgroup of $\Z_2$-invariant characters in 
$\hat\Z_4$. 

$f\in\hat\Z_4$ is $\Z_2$-invariant if and only if $f(x)=f(x^{-1})=f(x)^{-1}$, so $f(x)^2=1$, we have $\H^0(\B\Z_2,\hat\Z_4)=\Z_2$.

It is also known that $\H^5(\B ^2\Z_4,\Z)=\H^4(\B ^2\Z_4,\R/\Z)$ is the group of quadratic functions $q:\Z_4\to\R/\Z$. The group at $(p,q)=(0,5)$ is then the subgroup of $\Z_2$-invariant quadratic forms.

Since for any quadratic function $q$, $q(x^{-1})=q(x)$, so $q$ is always $\Z_2$-invariant.

The first possibly non-zero differential is on the $E_3$ page:
\bea
\H^0(\B \Z_2,\H^5(\B ^2\Z_4,\Z))\to \H^3(\B \Z_2,\hat \Z_4).
\eea

Following the appendix of \cite{2013arXiv1309.4721K}, 
this map sends a $\Z_2$-invariant 
quadratic form $q:\Z_4\to\R/\Z$ to $\langle \pi, - \rangle_q$, where the bracket denotes the bilinear pairing $\langle x, y \rangle_q = q(x+y)-q(x)-q(y)$.

More precisely, the value of $\langle \pi, - \rangle_q$ on the simplex $(v_0,\dots,v_3)$ is $\langle \pi(v_0,\dots,v_3), - \rangle_q$ which is in $\hat\Z_4$.

The next possibly non-zero differentials are on the $E_4$ page:
\bea
\H^j(\B \Z_2,\hat \Z_4)\to \H^{j+3}(\B \Z_2,\R/\Z)\stackrel{\sim}{\to} \H^{j+4}(\B \Z_2,\Z).\quad\quad
\eea
The first map is contraction with $\pi$.

The last relevant possibly non-zero differential is on the $E_6$ page:
\bea
\H^0(\B \Z_2,\H^5(\B ^2\Z_4,\Z))\to \H^6(\B \Z_2,\Z).
\eea

Following the appendix of \cite{2013arXiv1309.4721K}, 
this differential is actually zero.

So the only possible differentials in Fig. \ref{fig:SSSfor(BZ2B2Z4)} below degree 5 are $d_3$ from $(0,5)$ to $(3,3)$ and $d_4$ from the third row to the zeroth row.

Since $q(k)=\e^{\frac{2\pi\ii k^2}{8}}$, if $\pi(v_0,\dots,v_3)=k$, then $\langle \pi(v_0,\dots,v_3),\pi(v_0,\dots,v_3)\rangle_q=q(\pi(v_0,\dots,v_3))^2=\e^{\frac{2\pi\ii k^2}{4}}=1$ if $k=0\mod2$, so $\text{Ker}d_3^{(0,5)}=2\Z_8=\Z_4$.

%Since $\langle \pi,\pi\rangle_q=2q(\pi)$, $8q(\pi)=0$, there are 2 among the 8 choices of $q(\pi)$ such that $q\to\langle\pi,-\rangle_q$ maps to the dual linear function of $\pi$, 

If we identify $\hat\Z_4$ with $\Z_4$, then the nonzero element in the image of $q\to\langle\pi,-\rangle_q$ is just $\pi$.
So the differential $d_3^{(0,5)}$ is nontrivial.

The differential $d_4^{(0,3)}:\H^0(\B \Z_2,\hat\Z_4)\to \H^3(\B \Z_2,\R/\Z)$ is defined by $$d_4^{(2,3)}(\lambda)(v_0,\dots,v_3)=\lambda(\pi(v_0,\dots,v_3))$$
which is actually zero since $\pi(v_0,\dots,v_3)\in2\Z_4$, if we identify $\hat\Z_4$ with $\Z_4$, then this is just the cup product of $\lambda$ and $\pi$, and $\lambda\in2\Z_4$.

The differential $d_4^{(2,3)}:\H^2(\B \Z_2,\hat\Z_4)\to \H^5(\B \Z_2,\R/\Z)$ is defined by $$d_4^{(2,3)}(\chi)(v_0,\dots,v_5)=(\chi(v_0,\dots,v_2))(\pi(v_2,\dots,v_5))$$
which is also actually zero since $\pi(v_2,\dots,v_5)\in2\Z_4$, if we identify $\hat\Z_4$ with $\Z_4$, then this is just the cup product of $\chi$ and $\pi$, and $\chi(v_0,\dots,v_2)\in2\Z_4$.

The position (3,3) corresponds to the term $A^3B_2$ where $A$ and $B_2$ are explained in \Sec{sec:F}.
So only the $A^3B_2$ vanishes in $\H^5(\B\Z_2\ltimes\B^2\Z_4,\Z_2)$, hence in $\Omega_5^{\tO}(\B\Z_2\ltimes\B^2\Z_4)$.

\onecolumngrid
\pagebreak
\newpage
%\newpage

\bibliography{Yang-Mills.bib,Yang-Mills-JW.bib}

\end{document}